\documentclass[preprint,12pt]{article}
\usepackage{graphicx,cite}
\usepackage{multicol,multirow}
\usepackage{amsmath,amssymb,amsthm}
\usepackage{graphicx}
\usepackage{ifpdf}
\usepackage{color}
\allowdisplaybreaks
\DeclareGraphicsExtensions{.pdf,.png,.jpg,.jpeg,.mps,.eps}

\usepackage[colorlinks=false,linktocpage=false]{hyperref}
\def\pagenumberfont#1{\def\pn@font{#1}}      \def\pn@font{}

\catcode`\@=11
\def\doublearrow{\@ifnextchar [{\@doublearrow }{\@doublearrow[0]}}
\def\@doublearrow[#1]{\mathrel{\,\lower0.15ex
		\hbox{\let\@linefnt\eightln\unitlength0.6ex\begin{picture}(4,3)
				\ifcase#1\put(4,0.8){\vector(-1,0){4}}\put(0,2){\vector(1,0){4}}
				\or      \put(0,0.8){\vector(1,0){4}}\put(0,2){\vector(1,0){4}}
				\or      \put(4,0.4){\vector(-2,1){4}}\put(0,0.4){\vector(2,1){4}}
				\or      \put(0,2.4){\vector(2,-1){4}}\put(0,0.4){\vector(2,1){4}}
				\fi
		\end{picture}}\,}}
\catcode`\@=12

\font\eightln=line10 at8pt \catcode`\@=11
\def\triplearrow{\@ifnextchar [{\@triplearrow }{\@triplearrow[0]}}
\def\@triplearrow[#1]{\mathrel{\,\lower0.15ex
		\hbox{\let\@linefnt\eightln\unitlength0.6ex\begin{picture}(4,3)
				\ifcase#1\put(0,0.3){\vector(1,0){4}}\put(0,1.5){\vector(1,0){4}}
				\put(0,2.7){\vector(1,0){4}}
				\or\put(0,0){\vector(2,1){4}}\put(0,1){\vector(2,1){4}}
				\put(0,3){\vector(4,-3){4}}
				\or\put(0,3){\vector(2,-1){4}}\put(0,2){\vector(2,-1){4}}
				\put(0,0){\vector(4,3){4}}
				\or\put(0,0){\vector(4,3){4}}\put(0,1.5){\vector(1,0){4.5}}
				\put(0,3){\vector(4,-3){4}}
				\or\put(0,0.3){\vector(1,0){4}}\put(0,1){\vector(2,1){4}}
				\put(0,3){\vector(2,-1){4}}
				\or\put(0,0){\vector(2,1){4}}\put(0,2){\vector(2,-1){4}}
				\put(0,2.7){\vector(1,0){4}}
				\or\put(4,0.3){\vector(-1,0){4}}\put(4,1.5){\vector(-1,0){4}}
				\put(0,2.7){\vector(1,0){4}}
				\or\put(4,2){\vector(-2,-1){4}}\put(4,3){\vector(-2,-1){4}}
				\put(0,3){\vector(4,-3){4}}
				\or\put(4,0){\vector(-1,0){4}}\put(4,3){\vector(-2,-1){4}}
				\put(0,3){\vector(2,-1){4}}
				\or\put(4,0.3){\vector(-1,0){4}}\put(0,1.5){\vector(1,0){4}}
				\put(0,2.7){\vector(1,0){4}}
				\or\put(0,0.3){\vector(1,0){4}}\put(4,1.5){\vector(-1,0){4}}
				\put(0,2.7){\vector(1,0){4}}\fi
		\end{picture}}\,}}
\catcode`\@=12

\font\eightln=line10 at8pt
\catcode`\@=11
\def\tripline{\@ifnextchar [{\@tripline }{\@tripline[0]}}
\def\@tripline[#1]{\mathrel{\,\lower0.15ex
		\hbox{\let\@linefnt\eightln\unitlength0.6ex\begin{picture}(4,3)
				\ifcase#1\put(0,0.3){\line(1,0){4}}\put(0,1.5){\line(1,0){4}}
				\put(0,2.7){\line(1,0){4}}
				\or\put(0,0){\line(2,1){4}}\put(0,1){\line(2,1){4}}
				\put(0,3){\line(4,-3){4}}
				\or\put(0,3){\line(2,-1){4}}\put(0,2){\line(2,-1){4}}
				\put(0,0){\line(4,3){4}}
				\or\put(0,0){\line(4,3){4}}\put(0,1.5){\line(1,0){4.5}}
				\put(0,3){\line(4,-3){4}}
				\or\put(0,0.3){\line(1,0){4}}\put(0,1){\line(2,1){4}}
				\put(0,3){\line(2,-1){4}}
				\or\put(0,0){\line(2,1){4}}\put(0,2){\line(2,-1){4}}
				\put(0,2.7){\line(1,0){4}}
				\fi
		\end{picture}}\,}}
\catcode`\@=12

\newcommand{\asr}{\doublearrow}
\newcommand{\psr}{\doublearrow[1]}


\makeatletter
\newcommand{\contraction}[5][1ex]{%
	\mathchoice
	{\contraction@\displaystyle{#2}{#3}{#4}{#5}{#1}}%
	{\contraction@\textstyle{#2}{#3}{#4}{#5}{#1}}%
	{\contraction@\scriptstyle{#2}{#3}{#4}{#5}{#1}}%
	{\contraction@\scriptscriptstyle{#2}{#3}{#4}{#5}{#1}}}%
\newcommand{\contraction@}[6]{%
	\setbox0=\hbox{$#1#2$}%
	\setbox2=\hbox{$#1#3$}%
	\setbox4=\hbox{$#1#4$}%
	\setbox6=\hbox{$#1#5$}%
	\dimen0=\wd2%
	\advance\dimen0 by \wd6%
	\divide\dimen0 by 2%
	\advance\dimen0 by \wd4%
	\vbox{%
		\hbox to 0pt{%
			\kern \wd0%
			\kern 0.5\wd2%
			\contraction@@{\dimen0}{#6}%
			\hss}%
		\vskip 0.2ex%
		\vskip\ht2}}

\newcommand{\contraction@@}[3][0.06em]{%
	\hbox{%
		\vrule width #1 height 0pt depth #3%
		\vrule width #2 height 0pt depth #1%
		\vrule width #1 height 0pt depth #3%
		\relax}}

\newlength{\myheight}
\newlength{\mydepth}

\makeatletter
\newcommand{\lcontraction}[5][1ex]{%
	\mathchoice
	{\lcontraction@\displaystyle{#2}{#3}{#4}{#5}{#1}}%
	{\lcontraction@\textstyle{#2}{#3}{#4}{#5}{#1}}%
	{\lcontraction@\scriptstyle{#2}{#3}{#4}{#5}{#1}}%
	{\lcontraction@\scriptscriptstyle{#2}{#3}{#4}{#5}{#1}}}%
\newcommand{\lcontraction@}[6]{%
	\setbox0=\hbox{$#1#2$}%
	\setbox2=\hbox{$#1#3$}%
	\setbox4=\hbox{$#1#4$}%
	\setbox6=\hbox{$#1#5$}%
	\dimen0=\wd2%
	\advance\dimen0 by \wd6%
	\divide\dimen0 by 2%
	\advance\dimen0 by \wd4%
	\setlength{\myheight}{-#6}
	\addtolength{\myheight}{-0.2ex}
	\addtolength{\myheight}{0.06em}
	\setlength{\mydepth}{#6}
	\addtolength{\mydepth}{0.2ex}
	\vbox{%
		\vskip 0.2ex%
		\vskip\ht2
		\hbox to 0pt{%
			\kern \wd0%
			\kern 0.5\wd2%
			\lcontraction@@{\dimen0}{\myheight}{\mydepth}%
			\hss
		}%
}}

\newcommand{\lcontraction@@}[4][0.06em]{%
	\hbox{%
		\vrule width #1 height -0.2ex depth #4%
		\vrule width #2 height #3 depth #4%
		\vrule width #1 height -0.2ex depth #4%
		\relax}}
\makeatother


%

\newtheorem{remark}{Remark}[section]
\newtheorem{theorem}{Theorem}[section]
\newtheorem{corollary}{Corollary}[section]
\newtheorem{lemma}{Lemma}[section]                                                                                                              

\newtheorem{definition}{Definition}[section]

\numberwithin{equation}{section}


\def\R{\mathbb R}
\def\C{\mathbb C}

\def\Z{\mathbb Z}
\def\G{\mathsf{G}}
\def\H{\mathbb H}

\def\X{\mathcal X}

\def\P{{\mathcal P}}

\newcommand{\eps}{\varepsilon}
\def\tr{{\rm tr}}
\newcommand{\PSL}{{\rm PSL}}
\newcommand{\SL}{{\rm SL}}

\def\T{{\mathcal T}}

\allowdisplaybreaks

\begin{document}

\title{Spectral form factor in the Hadamard-Gutzwiller model: orbit pairs contributing in the third order}
\author{{\sc Huynh M. Hien} \\[1ex] 
	Department of Mathematics and Statistics,\\
 Quy Nhon University, \\ 	170 An Duong Vuong,
 Quy Nhon, Vietnam\\
	e-mail: huynhminhhien@qnu.edu.vn
	\date{} 
}
\maketitle

\begin{abstract} In this paper we consider orbit pairs contributing in the third order of the spectral form factor in the Hadamard-Gutzwiller model.  We prove that periodic orbits including two 2-encounters in certain structures have partner orbits. The action differences are estimated at $\ln(1+u_1s_1)(1+u_2s_2)$ with
	explicit error bounds, where $(u_1,s_1)$ and $(u_2,s_2)$ are the coordinates of the piercing points. A new symbolic dynamics for  orbit pairs via conjugacy classes is also provided. 
	
	\medskip
	\noindent {\bf Keywords. } Hadamard-Gutzwiller model, Spectral form factor, Third order, Orbit pair, 2-encounter.
\end{abstract}

\section{Introduction}

	In quantum chaos, there is considerable interest in understanding statistics associated to periodic orbits
	since these are related to eigenvalue statistics through trace formulae. Special attention has been given to the {\em spectral form factor}, which is expressed by a double sum over periodic orbits
	\begin{equation}\label{formfactor}
		K(\tau)=\Big\langle \frac{1}{T_H}\sum_{\gamma,\gamma'}A_\gamma 
		A_{\gamma'}^* e^{\frac{i}{\hbar}(S_\gamma-S_{\gamma'})}
		\delta\Big(\tau T_H-\frac{T_\gamma+T_{\gamma'}}{2}\Big) \Big\rangle,
	\end{equation}
	where $\langle\cdot\rangle$ abbreviates the  average over the energy and
	over a small time window, $T_H$ denotes the Heisenberg time and $A_\gamma$, $S_\gamma$, and $T_\gamma$ 
	are the amplitude, the action, and the period of the orbit $\gamma$, respectively.

	The diagonal approximation  $\gamma=\gamma'$ to (\ref{formfactor}) studied by Hannay/Ozorio de Almeida \cite{hoda} 
	and Berry \cite{berry} in the 1980's contributes to the first order term $2\tau$; see also \cite{KeatRob}. The efforts of researchers have been to understand higher order effects. 
	To the next orders, as $\hbar\to 0$, the main term  
	from (\ref{formfactor}) arises owing to those orbit pairs $\gamma\neq\gamma'$ 
	for which the action difference $S_\gamma-S_{\gamma'}$ is `small'. 	In 2001, an influential heuristic work of Sieber and Richter \cite{SieberRichter}
	who predicted that a given periodic orbit with a small-angle self-crossing in configuration space 
	will admit a partner orbit with almost the same action.	
	 The original orbit and its partner are then called  a Sieber-Richter pair. In phase space, a Sieber-Richter pair contains a region where two stretches of each  orbit are almost mutually time-reversed
	and one addresses this region as a {\em $2$-encounter} or, more strictly, a {\em $2$-antiparallel encounter};
	the `2' stands for two orbit stretches which are close in configuration space, 
	and `antiparallel' means that the two stretches have opposite directions. It was shown in \cite{SieberRichter} that 
	Sieber-Richter pairs contribute to the spectral form factor \eqref{formfactor} the second order term
	$-2\tau^2$, and it turned out that the result agreed 
	with what is obtained using random matrix theory \cite{efetov}, for certain symmetry classes. 
	The work by Sieber and Richter has led to the important and difficult problem of understanding this phenomenon is more detail and more rigorously in particular classes of systems. Until 2012, Gutkin and Osipov \cite{GO} analysed Sieber-Richter pairs for the Baker map, which admits very transparent symbolic dynamics, in a combinatorial way. 
	
	Most contribution in this subject matter is M\"uller et al. In a series of works \cite{HMBH,mueller2009,mueller2004,mueller2005}, the authors provided an expansion 
	to all orders in $\tau$
	\[ K(\tau)=2\tau-\tau\,\ln(1+2\tau)=2\tau-2\tau^2+2\tau^3+\ldots \] 
	for the symmetry class relevant for time-reversal invariant systems, by including the higher-order encounters also; see also \cite{haake,muellerthesis}.  It was shown in \cite{HMBH} that there are five families of pairs of orbits responsible for the third order $\tau^3$, namely three families of
	orbit pairs differing in two 2-encounters and two families of orbit pairs differing in one single 3-encounter. 
 Periodic orbits with encounters have partners obtained by reconnections stretches inside encounter area owing to the hyperbolicity. However, the existence of partner orbits and estimates of the action differences are still missing.

	To establish a more detailed mathematical understanding, it is necessary to consider the classical side and try to prove the existence of partner orbits and derive good estimates for the action differences 
	of the orbit pairs. For $2$-antiparallel encounters this was done in \cite{HK,Huynh17}, where the authors considered 
	the geodesic flow on compact factors of the hyperbolic plane; in this case the action of a periodic orbit is half of its length/period. It was shown in \cite{HK} that
	a $T$-periodic orbit of the geodesic flow crossing itself in 
	configuration space at a time $T_1$ has a unique partner orbit that remains 
	$9|\sin(\phi/2)|$-close to the original one and
	the action difference between them is approximately equal
	$\ln(1-(1+e^{-T_1})(1+e^{-(T-T_1)})\sin^2(\phi/2)))$ with the error bound
	$12\sin^2(\phi/2)e^{-T}$, where $\phi$ is the crossing angle,
	and  this proved the accuracy of Sieber/Richter's prediction in \cite{SieberRichter} mentioned above.
	For higher-order encounters, Huynh \cite{Huynh16} shows that  there exist $(L-1)!-1$ partner orbits for a given periodic orbit with an $L$-parallel encounter such that any two piercing points are not too close
	and provided estimates for the action differences.

	In the present paper we continue considering the geodesic flow on compact factor of the hyperbolic plane,
	which is a compact Riemann surface of constant curvature of genus at least two.
	In the physics community this system is often called the Hadamard-Gutzwiller model, 
	and it has frequently been studied \cite{braun2002,HMBH,Sieber1}; 
	further related work includes \cite{haake,mueller2005,Turek05}. We prove
	the existence of the partner orbit which differs in both encounters for a given periodic orbit including two 2-encounters with piercing points having  coordinates $(u_1,s_1), (u_1,u_2)$ in certain distributions. The action differences of orbit pairs of all cases are estimated at $\ln(1+u_1s_1)(1+u_2s_2)$ with  explicit error bounds. This paper also provides a new symbolic dynamics for orbit pairs via conjugacy classes.

The paper is organized as follows. In Section 2 we recall background and materials, including Poincar\'e sections, the Anosov and closing lemmas, conjugacy classes and rigorous definitions of encounters, partners in the Hadamard-Gutwiller model. Section 3 considers periodic orbits with one single 2-antiparallel encounter. In the last section we consider periodic orbits with two 2-antiparallel encounters serial, with two 2-parallel encounter intertwined,
and with one 2-parallel encounter and one 2-antiparallel encounter intertwined. In each case, we prove the existence of partner orbits, estimate the action differences as well as provide symbolic dynamics for orbit pairs.

%

\medskip

\noindent 

\setcounter{equation}{0}
\section{The Hadamard-Gutzwiller model}
The Hadamard-Gutzwiller model is the geodesic flow on compact Riemann surfaces of constant negative curvature. 
It is well-known that any compact orientable surface 
with constant negative curvature is isometric to a factor $\Gamma\backslash \H^2$, 
where $\H^2=\{z=x+iy\in \C:\, y>0\}$ is the hyperbolic plane endowed  
with the hyperbolic metric $ds^2=\frac{dx^2+dy^2}{y^2}$ and
$\Gamma $ is a discrete subgroup of the projective Lie group $\PSL(2,\R)=\SL(2,\R)/\{\pm E_2\}$. The hyperbolic plane has constant Gaussian curvature $-1$. The group $\PSL(2,\R)$  acts transitively on $\H^2$ by 
M\"obius transformations
$z\mapsto \frac{az+b}{cz+d}$.
If the action has no fixed points, then the factor $\Gamma\backslash\H^2$  has a Riemann surface structure.
Such a surface is a closed Riemann surface of genus at least $2$ 
and has the hyperbolic plane $\H^2$ as the universal covering; so the natural projection $\pi_\Gamma: \H^2\rightarrow \Gamma\backslash\H ^2, \pi_\Gamma(z)=\Gamma z,\ z\in\H^2$ becomes a local isometry. This implies that $\Gamma\backslash\H^2$ also has constant curvature $-1$. 
The geodesic flow $(\varphi_t^\X)_{t\in \R}$ on the unit tangent bundle $\X=T^1(\Gamma\backslash\H^2)$
goes along the unit speed geodesics on $\Gamma\backslash\H^2$. 
On the other hand, the unit tangent bundle $T^1(\Gamma\backslash\H ^2)$
is isometric to the quotient space 
$\Gamma\backslash \PSL(2,\R)=\{\Gamma g,g\in\PSL(2,\R)\}$, 
which is the system of right co-sets of $\Gamma$ in $\PSL(2,\R)$, by an isometry
$\Xi$.
Then  the geodesic flow $(\varphi_t^\X)_{t\in\R}$ can be equivalently expressed as the natural 
``quotient flow'' $\varphi_t(\Gamma g)=\Gamma g a_t$ 
on $X=\Gamma\backslash\PSL(2,\R)$  associated to the flow $\varphi^\G_t(g)=g a_t$ on $\G:=\PSL(2,\R)$
by the conjugate relation 
\[\varphi_t^\X=\Xi^{-1}\circ\varphi_t\circ\Xi\quad \mbox{for all}\quad t\in\R.\]
Here $a_t\in\PSL(2,\R)$ denotes the equivalence class obtained from the matrix $A_t=\scriptsize\Big(\begin{array}{cc}
	e^{t/2} & 0\\ 0 & e^{-t/2}
\end{array}\Big)\in\SL(2,\R)$. 

\smallskip 
There are some more advantages to work on $X=\Gamma\backslash\PSL(2,\R)$
rather than on $\X=T^1(\Gamma\backslash\H^2)$. One can calculate explicitly the stable and unstable manifolds 
at a point $x=\Gamma g\in X$ to be
\[W^s_X(x)=\{\Gamma gb_s,s\in\R\}
\quad \mbox{and}\quad W^u_X(x)=\{\Gamma gc_u, u\in\R\},\]
where  $b_s=\{\pm B_s\},c_u=\{\pm C_u\}\in\PSL(2,\R)$ denote
the equivalence classes obtained from 
$B_s=\scriptsize\Big(\begin{array}{cc}1 &s\\ 0&1 \end{array} \Big), \ C_u=\scriptsize\Big(\begin{array}{cc}
	1&0\\ u&1
\end{array}\Big)\in\SL(2,\R)$. If the space $X$ is compact, then the flow $(\varphi_t)_{t\in\R}$
is a hyperbolic flow.

There is a natural Riemannian metric on $\G=\PSL(2,\R)$
such that the induced metric function $d_\G$ is left-invariant under $\G$. 
We define a metric function $d_{X}$ on $X=\Gamma\backslash\PSL(2,\R)$ by 
\[ d_{X}(x_1, x_2)
=\inf_{\gamma_1, \gamma_2\in\Gamma} d_{\G}(\gamma_1 g_1, \gamma_2 g_2)
=\inf_{\gamma\in\Gamma} d_{\G}(g_1, \gamma g_2), \]   
where $x_1=\Gamma g_1 $, $x_2=\Gamma g_2$.

General references for this section are \cite{bedkeanser,einsward}, 
and these works may be consulted for the proofs to all results which are stated above

\subsection{Poincar\'e sections}\label{Poincsec} 
It is well-known that the Riemann surface $\Gamma\backslash\H^2$ is compact if and only if 
the quotient space  $X=\Gamma\backslash\PSL(2,\R)$ is compact. 

First we recall the definitions of Poincar\'e sections in  \cite{HK,Huynh16}.
\begin{definition}\label{Poindn1}\rm 
	Let $x\in X$ and $\eps>0$. The {\em Poincar\'e sections} of radius $\eps$ at $x$ are defined by
	\begin{equation*}
		\P_\eps(x)=\{\Gamma g c_ub_s : |u|<\eps, |s| <\eps\},
	\end{equation*}
and \[\P'_\eps(x)=\{\Gamma gb_sc_u:|s|<\eps,\ |u|<\eps\},\]
	where $g\in \G$ is such that $x=\Gamma g$. 
	
\end{definition}
	If $z=\Gamma gc_ub_s\in \P_\eps(x)$ (resp. $z=\Gamma gb_sc_u\in \P'_\eps(x)$), we write
$z=(u,s)_x$ (resp. $z=(s,u)'_x$).
Note that the couple $(u,s)$ are not unique. As we will see below, if $X$ is compact and $\eps$ is sufficiently small, then the uniqueness of couple $(u,s)$ is obtained. 
\begin{lemma}\label{sigma_0}
	If the space $X=\Gamma\backslash\PSL(2,\R)$ is compact, then there exists $\sigma_0>0$ such that
	\begin{equation}\label{sigma0}d_\G(\gamma g, g)>\sigma_0\quad\mbox{for all}\quad \gamma\in\Gamma\setminus\{e\}. 
	\end{equation}
\end{lemma}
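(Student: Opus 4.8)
The bound is to be read uniformly in $g$: I claim there is a single $\sigma_0>0$, independent of $g\in\G$, with $d_\G(\gamma g,g)>\sigma_0$ for every $\gamma\in\Gamma\setminus\{e\}$. This is the assertion that the ``systole'' of the compact quotient is bounded away from zero, and it is precisely here that compactness will be indispensable. The plan is to study the \emph{displacement function}
\[ \rho(g)=\inf_{\gamma\in\Gamma\setminus\{e\}} d_\G(\gamma g,g), \qquad g\in\G, \]
and to prove $\inf_{g\in\G}\rho(g)>0$; then any $\sigma_0$ strictly below this infimum works. First I would check that $\rho$ is $\Gamma$-invariant: for $\gamma_0\in\Gamma$, left-invariance of $d_\G$ gives $d_\G(\gamma\gamma_0 g,\gamma_0 g)=d_\G(\gamma_0^{-1}\gamma\gamma_0 g,g)$, and as $\gamma$ runs over $\Gamma\setminus\{e\}$ so does $\gamma_0^{-1}\gamma\gamma_0$; hence $\rho(\gamma_0 g)=\rho(g)$, so $\rho$ descends to a function $\bar\rho$ on $X=\Gamma\backslash\G$.

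Next I would establish continuity. Using left-invariance, $d_\G(\gamma g,\gamma g')=d_\G(g,g')$, so the triangle inequality yields
\[ |d_\G(\gamma g,g)-d_\G(\gamma g',g')|\le d_\G(\gamma g,\gamma g')+d_\G(g',g)=2\,d_\G(g,g') \]
for every fixed $\gamma$. Thus each function $g\mapsto d_\G(\gamma g,g)$ is $2$-Lipschitz with the \emph{same} constant, and an infimum of uniformly $2$-Lipschitz nonnegative functions is again $2$-Lipschitz; therefore $\rho$, and with it $\bar\rho$, is continuous. Since $X$ is compact, $\bar\rho$ attains its minimum at some $x_0=\Gamma g_0$.

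It then remains to show this minimum is strictly positive, i.e. that $\rho(g_0)>0$ for one (hence any) representative. This is the pointwise statement and follows from discreteness: because $\Gamma$ is discrete in $\G$, it acts properly discontinuously on $\G$ by left translations, so the orbit $\Gamma g_0$ is a closed discrete subset of $\G$; and since left translation by $\gamma\ne e$ has no fixed points, $g_0$ is isolated in its orbit, whence $\rho(g_0)=\inf_{\gamma\ne e} d_\G(\gamma g_0,g_0)>0$. Setting $\sigma_0=\tfrac12\min_X\bar\rho>0$ then gives $d_\G(\gamma g,g)\ge\rho(g)=\bar\rho(\Gamma g)\ge 2\sigma_0>\sigma_0$ for all $g\in\G$ and all $\gamma\ne e$, as required.

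The main obstacle is exactly the passage from pointwise positivity (automatic from discreteness at a single point) to a \emph{uniform} lower bound. Discreteness alone does not prevent $\rho(g)\to 0$ as $g$ varies — this failure is genuine in the noncompact finite-volume setting, where arbitrarily short geodesics appear deep in the cusps — so the argument truly needs compactness, entering through the attainment of the minimum above. Equivalently, one may argue by contradiction: assuming $\gamma_n\in\Gamma\setminus\{e\}$ and $g_n$ with $d_\G(\gamma_n g_n,g_n)\to 0$, one projects to $X$, extracts a convergent subsequence $\Gamma g_n\to\Gamma g$, replaces the $g_n$ by suitable representatives converging to $g$, and uses left-invariance to deduce $\delta_n g\to g$ for some $\delta_n\in\Gamma\setminus\{e\}$, contradicting discreteness of the orbit $\Gamma g$. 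In that formulation I expect the only delicate point to be justifying the choice of representatives, which rests on the fact that $d_X$ is the quotient metric and $\pi_\Gamma$ a local isometry.
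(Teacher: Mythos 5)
Your proof is correct, and it is worth noting that the paper itself supplies no argument for this lemma at all --- it only points to an analogous statement for $\Gamma\backslash\H^2$ in Ratcliffe's book --- so your write-up actually fills a gap rather than duplicating one. Your route (minimize the displacement function $\rho(g)=\inf_{\gamma\neq e}d_\G(\gamma g,g)$ over the compact quotient, after checking $\Gamma$-invariance and uniform $2$-Lipschitz continuity, then get pointwise positivity from discreteness and freeness) is the standard proof of the cited result, transplanted from the isometric action on $\H^2$ to left translations on $\PSL(2,\R)$, and every step checks out: the conjugation trick for invariance, the Lipschitz bound via left-invariance of $d_\G$, and the eventual-constancy of convergent sequences in a closed discrete orbit are all sound. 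Two small refinements: you do not need proper discontinuity for the pointwise step, since right translation $R_{g_0}:h\mapsto hg_0$ is a homeomorphism of $\G$ carrying the closed discrete subgroup $\Gamma$ onto $\Gamma g_0$, which is therefore closed and discrete, and freeness is automatic because group translations have no fixed points; also, strictly speaking one should observe that compactness of $X$ forces $\Gamma\neq\{e\}$ (so the infimum defining $\rho$ is over a nonempty set and finite), the case $\Gamma=\{e\}$ being vacuous anyway. Your closing remark correctly identifies where compactness is essential --- in the noncompact finite-volume case $\rho$ genuinely degenerates in the cusps --- which is exactly why the lemma is stated with the compactness hypothesis, and your sketch of the sequential-compactness variant, including the care needed in choosing representatives via the quotient metric $d_X(x_1,x_2)=\inf_{\gamma\in\Gamma}d_\G(g_1,\gamma g_2)$, is also viable.
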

See \cite[Lemma 1, p. 237]{ratcliff} for a similar result on $\Gamma\backslash\H^2$.

	\begin{lemma}[\cite{Huynh16}] \label{sigma0u}If the space $X=\Gamma\backslash\PSL(2,\R)$ is compact
		and $\eps\in(0,\frac{\sigma_0}4)$, then
		for each $z\in \P_\eps(x)$ 
		there exist a unique couple $(u_z,s_z)\in (0,\eps)^2$ such that
		$z=\Gamma g c_{u_z} b_{s_z}$, where $g\in\PSL(2,\R)$ is such that $x=\Gamma g$, and we call $(u_z,s_z)$ the coordinates of $z$.  		
	\end{lemma}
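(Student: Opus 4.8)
The existence of such a couple is immediate from the very definition of $\P_\eps(x)$ in Definition \ref{Poindn1}: any $z\in\P_\eps(x)$ is by construction of the form $\Gamma g c_u b_s$ with $|u|,|s|<\eps$. The whole content of the lemma is therefore the \emph{uniqueness} of the pair, and the natural route is to exploit the lower bound $\sigma_0$ from Lemma \ref{sigma_0} together with the left-invariance of the metric $d_\G$.

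The plan is to argue by contradiction at the level of the group. Suppose $z$ admits two such representations, i.e.\ $\Gamma g c_{u_1} b_{s_1}=\Gamma g c_{u_2} b_{s_2}$ with $|u_i|,|s_i|<\eps$. By the definition of the coset space there is a $\gamma\in\Gamma$ with $g c_{u_1} b_{s_1}=\gamma g c_{u_2} b_{s_2}$, whence
\[ \gamma g = g\,c_{u_1} b_{s_1} b_{s_2}^{-1} c_{u_2}^{-1}=g\,c_{u_1} b_{s_1-s_2} c_{-u_2}, \]
using that $s\mapsto b_s$ and $u\mapsto c_u$ are one-parameter subgroups. The goal is to force $\gamma=e$.

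To that end I would estimate $d_\G(\gamma g,g)$ from above. By left-invariance one has $d_\G(\gamma g,g)=d_\G(c_{u_1} b_{s_1-s_2} c_{-u_2},e)$, and peeling off one factor at a time (again using left-invariance to reduce each successive difference to a distance from the identity) together with the triangle inequality gives
\[ d_\G(c_{u_1} b_{s_1-s_2} c_{-u_2},e)\le d_\G(c_{u_1},e)+d_\G(b_{s_1-s_2},e)+d_\G(c_{u_2},e). \]
Since the curves $t\mapsto c_t$ and $t\mapsto b_t$ are unit-speed, each term is bounded by the absolute value of its parameter, so the right-hand side is at most $|u_1|+|s_1-s_2|+|u_2|<\eps+2\eps+\eps=4\eps<\sigma_0$. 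This contradicts \eqref{sigma0} unless $\gamma=e$. Once $\gamma=e$, the equality reduces to $c_{u_1} b_{s_1}=c_{u_2} b_{s_2}$ in $\PSL(2,\R)$, i.e.\ $c_{u_1-u_2}=b_{s_2-s_1}$; comparing matrix entries (equivalently, noting that the lower- and upper-triangular one-parameter subgroups meet only at the identity) forces $u_1=u_2$ and $s_1=s_2$, which is the claimed uniqueness.

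I expect the only delicate point to be the metric estimate in the third paragraph: one must know that, with the chosen left-invariant Riemannian metric, the generating curves $c_u$ and $b_s$ are parametrized by arc length, so that $d_\G(c_u,e)\le|u|$ and $d_\G(b_s,e)\le|s|$ hold with constant exactly $1$. This is precisely what makes the threshold $\eps<\sigma_0/4$ the correct one, and it is the step I would want to pin down carefully (or cite from the background references) before the contradiction with Lemma \ref{sigma_0} can be invoked.
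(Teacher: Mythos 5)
Your proof is correct and takes essentially the approach of the source the paper cites for this lemma (\cite{Huynh16}): extract $\gamma\in\Gamma$ from the two coset representations, use left-invariance and the triangle inequality to bound $d_\G(\gamma g,g)\le |u_1|+|s_1-s_2|+|u_2|<4\eps<\sigma_0$, invoke Lemma \ref{sigma_0} to force $\gamma=e$, and then read off $u_1=u_2$, $s_1=s_2$ from $c_{u_1-u_2}=b_{s_2-s_1}$, since the unipotent upper- and lower-triangular one-parameter subgroups intersect only in the identity even in $\PSL(2,\R)$. The single point you flagged---that the chosen left-invariant metric satisfies $d_\G(c_u,e)\le|u|$ and $d_\G(b_s,e)\le|s|$ with constant exactly $1$---is precisely what is established for this metric in the background references \cite{HK,Huynh16}, so your argument has no gap.
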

%
%
%

\subsection{Conjugacy classes}
Let $\Gamma$ be a discrete subgroup of $\PSL(2,\R)$.

\begin{definition}\label{concla-def} \rm 
	(a) An element $\gamma\in\Gamma$ is called {\em primitive} if $\gamma=\zeta^m$ 
	for some $\zeta\in\Gamma$ implies that $m=1$ or $m=-1$. 
	
	(b) The {\em  conjugacy class} of $\gamma\in\Gamma$ is defined by
	\[ {\{\gamma\}}_\Gamma=\{\sigma\gamma\sigma^{-1}: \sigma\in\Gamma\}. \] 
	The collection of all conjugacy classes of primitive elements in $\Gamma\setminus\{e\}$  
	are denoted by ${\mathfrak C}_\Gamma$; here $e=[E_2]$ denotes the unity of $\PSL(2,\R)$.
\end{definition} 

	For $g=[G]\in\PSL(2,\R), G=\big({\scriptsize\begin{array}{cc}a&b\\c&d\end{array}} \big)\in\SL(2,\R)$,
the trace of $g$ is defined by $\tr(g)=|a+d|.$
If the action of $\Gamma$ on $\H^2$ is free and the factor $\Gamma\backslash\H^2$ is compact
then all elements $g\in\Gamma\setminus\{e\}$ are hyperbolic \cite[Theorem 6.6.6]{ratcliff}, i.e. $\tr(g)>2$.

Denote by ${\mathcal PO}_X$ the set of all periodic orbits of the flow $(\varphi_t)_{t\in\R}$. We define a mapping 
\begin{equation}\label{varsig} 
	\varsigma: {{\mathcal PO}}_X\to\mathfrak{C}_\Gamma
\end{equation} 
as follows. Take a periodic orbit $c$ of the flow, 
any point $x$ on $c$, and let $T>0$ be the prime period for $x$. 
Then $\varphi_T(x)=x$, and the definition of the flow 
implies that there are $g\in \PSL(2,\R)$ and $\gamma\in\Gamma$ such that $x=\Pi_\Gamma(g)$ 
and $\gamma=g a_T g^{-1}$, due to $\Gamma g a_T=xa_T=\varphi_T(x)=x=\Gamma g$; 
note that $\gamma\neq e$, since otherwise $a_T=e$ so that $T=0$. 
Then put $\varsigma(c)={\{\gamma\}}_\Gamma$. 
\begin{lemma}\label{cjo} Suppose that all elements in $\Gamma\backslash \{e\}$ are hyperbolic. 
	Then the mapping $\varsigma$ defined by \eqref{varsig} is a bijection between the periodic orbits ${\mathcal PO}_X$ of the flow $(\varphi_t)_{t\in\R}$ and the collection of all conjugacy classes of primitive elements ${\mathfrak{C}_\Gamma}$ in $\Gamma\setminus \{e\}$.
\end{lemma}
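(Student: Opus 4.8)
The plan is to check that $\varsigma$ is well defined and lands in $\mathfrak{C}_\Gamma$, and then to prove injectivity and surjectivity separately. Two elementary facts about $\PSL(2,\R)$ will do all the real work: first, $t\mapsto a_t$ is a one-parameter subgroup and $\tr(ga_tg^{-1})=\tr(a_t)=2\cosh(t/2)$ for every $g$, which is strictly increasing in $t>0$; second, since $a_T$ ($T\neq 0$) has distinct real eigenvalues, its centralizer in $\PSL(2,\R)$ is exactly the diagonal subgroup $\{a_t:t\in\R\}$. For well-definedness I would check the two choices in the construction. If $x=\Gamma g=\Gamma g'$, then $g'=\delta g$ for some $\delta\in\Gamma$ and $g'a_T(g')^{-1}=\delta(ga_Tg^{-1})\delta^{-1}$ is $\Gamma$-conjugate to $\gamma$, so the class $\{\gamma\}_\Gamma$ is unchanged. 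If instead I move the base point along $c$, replacing $x$ by $y=\varphi_{t_0}(x)=\Gamma(ga_{t_0})$, the prime period is still $T$ and, because the $a_t$ commute, $(ga_{t_0})a_T(ga_{t_0})^{-1}=ga_Tg^{-1}=\gamma$ exactly. Hence $\varsigma(c)$ depends only on $c$.

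Next I would show $\gamma$ is primitive, so that $\varsigma(c)\in\mathfrak{C}_\Gamma$. Suppose $\gamma=\zeta^m$ with $\zeta\in\Gamma$ and $|m|\geq 2$; after possibly replacing $\zeta$ by $\zeta^{-1}$ I may take $m\geq 2$ (and $m\neq 0$ since $\gamma\neq e$). Then $\zeta$ commutes with $\gamma=ga_Tg^{-1}$, so $g^{-1}\zeta g$ commutes with $a_T$, whence $g^{-1}\zeta g=a_s$ for some $s\in\R$, i.e.\ $\zeta=ga_sg^{-1}$. Comparing $\zeta^m=ga_{ms}g^{-1}$ with $\gamma=ga_Tg^{-1}$ gives $ms=T$, so $0<s=T/m<T$. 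But then $\varphi_s(x)=\Gamma ga_s=\Gamma(ga_sg^{-1})g=\Gamma\zeta g=\Gamma g=x$ because $\zeta\in\Gamma$, contradicting that $T$ is the prime period. Therefore $m=\pm 1$ and $\gamma$ is primitive.

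For injectivity, suppose $\varsigma(c_1)=\varsigma(c_2)$ with data $x_i=\Gamma g_i$, prime period $T_i$, and $\gamma_i=g_ia_{T_i}g_i^{-1}$. Equality of the classes forces $\tr(\gamma_1)=\tr(\gamma_2)$, and since $\tr(\gamma_i)=2\cosh(T_i/2)$ is strictly increasing in $T_i>0$ we get $T_1=T_2=:T$. Writing $\gamma_2=\sigma\gamma_1\sigma^{-1}$ with $\sigma\in\Gamma$, the element $h:=g_2^{-1}\sigma g_1$ satisfies $ha_Th^{-1}=a_T$, so $h$ centralizes $a_T$ and thus $h=a_{t_0}$ for some $t_0\in\R$. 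Then $g_1=\sigma^{-1}g_2a_{t_0}$ and $x_1=\Gamma g_1=\Gamma g_2a_{t_0}=\varphi_{t_0}(x_2)$, so $x_1\in c_2$ and $c_1=c_2$. For surjectivity, take a primitive class $\{\gamma\}_\Gamma\in\mathfrak{C}_\Gamma$; by hypothesis $\gamma$ is hyperbolic, hence conjugate in $\PSL(2,\R)$ to $a_T$ for a unique $T>0$ with $2\cosh(T/2)=\tr(\gamma)$. Fixing $g$ with $\gamma=ga_Tg^{-1}$ and setting $x=\Gamma g$, we get $\varphi_T(x)=\Gamma ga_T=\Gamma\gamma g=\Gamma g=x$, so $x$ is periodic; let $c$ be its orbit with prime period $T_0$. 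The stabilizer $\{t:\varphi_t(x)=x\}$ is the closed subgroup $T_0\Z$, so $T=nT_0$ with $n\geq 1$, and with $\gamma_0:=ga_{T_0}g^{-1}\in\Gamma$ one has $\gamma=ga_Tg^{-1}=\gamma_0^{\,n}$. Primitivity of $\gamma$ forces $n=1$, so $T_0=T$ and $\varsigma(c)=\{\gamma_0\}_\Gamma=\{\gamma\}_\Gamma$.

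I expect the main obstacle to be the injectivity step, specifically identifying the centralizer of the hyperbolic element $a_T$ in $\PSL(2,\R)$ and reading off $h=a_{t_0}$; the rest is bookkeeping with the group law and the prime-period structure. Accordingly, I would either cite the centralizer description or reduce it to the one-line linear-algebra observation that a matrix commuting with $\mathrm{diag}(e^{T/2},e^{-T/2})$, $T\neq 0$, must be diagonal, which in $\PSL(2,\R)$ is precisely $\{a_t:t\in\R\}$.
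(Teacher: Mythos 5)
Your proof is correct, and it is worth noting that the paper itself states Lemma \ref{cjo} without any proof (the section only points to general references such as Bedford--Keane--Series and Einsiedler--Ward), so there is no in-paper argument to compare against; your write-up supplies exactly the standard argument that such references contain. All four components are sound: well-definedness under change of representative and of base point (using that the $a_t$ commute), primitivity of $\gamma$ via the centralizer of $a_T$, injectivity via the conjugation-invariant trace $\tr(g a_t g^{-1})=2\cosh(t/2)$ plus the same centralizer fact, and surjectivity via hyperbolicity (conjugacy of $\gamma$ to $a_T$, which is where the hypothesis on $\Gamma$ is genuinely used) together with the closed-subgroup structure of the stabilizer. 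Two small points deserve one line each if you write this out in full. First, your centralizer claim in $\PSL(2,\R)$ hides a sign: $h a_T h^{-1}=a_T$ in $\PSL(2,\R)$ lifts to $HA_TH^{-1}=\pm A_T$ in $\SL(2,\R)$, and the case $-A_T$ must be excluded, e.g.\ because conjugation preserves the trace $2\cosh(T/2)>0$ while $\tr(-A_T)<0$; only then does the distinct-eigenvalue argument force $H$ diagonal, hence $h=a_{t_0}$. Second, in the surjectivity step the stabilizer $\{t:\varphi_t(x)=x\}=\{t: g a_t g^{-1}\in\Gamma\}$ is closed because $\Gamma$ is discrete (hence closed) and $t\mapsto ga_tg^{-1}$ is continuous, and it cannot be all of $\R$ since $ga_tg^{-1}\to e$ as $t\to 0$ would violate discreteness of $\Gamma$; this is what rules out the degenerate closed subgroups and yields $T_0\Z$ with $T_0>0$. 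With these two observations made explicit, your argument is complete and self-contained, and it usefully fills a gap the paper leaves to the literature.
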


\subsection{Anosov closing lemma, connecting lemma}

The next two results are illustrated in Figure \ref{Anosovconnect}\,(a). For proofs, see \cite{HK,Huynh16}.
\begin{lemma}[{Anosov closing lemma I}]\label{anosov1}
	Suppose that $\eps\in (0, \frac{1}{4})$, 
	$x\in X$, $T\ge 1$, and $\varphi_T(x)\in {\mathcal P}_\eps(x)$. If $\varphi_T(x)={(u, s)}_x\in\P_\eps(x)$, 
	in the notation from Definition \ref{Poindn1}, then there are $x'=(\sigma,\eta)_x\in {\mathcal P}_{2\eps}(x)$ and $T'\in\R$ so that 
	\begin{equation*}\label{x'}\varphi_{T'}(x')=x'\quad \mbox{and}\quad d_X(\varphi_t(x),\varphi_t(x'))<2|u|+|\eta|< 4\eps\quad \mbox{for all}\quad t\in[0,T].
	\end{equation*}
	Furthermore,
	\begin{equation*}\label{T-T'1}
		\Big|\frac{T'-T}2-\ln(1+us)\Big| <5|us|e^{-T}
	\end{equation*}
	and
	\begin{eqnarray*}
		|\sigma|<2|u|e^{-T},\quad |\eta|<\frac{3|s|}2.
	\end{eqnarray*}
\end{lemma}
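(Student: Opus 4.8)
The plan is to reduce the closing-lemma statement to an explicit computation in $\SL(2,\R)$, exploiting the concrete matrix form of the flow and the horocycle coordinates. First I would pick a representative $g\in\PSL(2,\R)$ with $x=\Gamma g$. The hypothesis $\varphi_T(x)=(u,s)_x\in\P_\eps(x)$ means, by Definition \ref{Poindn1}, that $\Gamma ga_T=\Gamma g c_u b_s$, so there is a unique $\gamma\in\Gamma$ with
\[
ga_T=\gamma\, g c_u b_s,\qquad\text{equivalently}\qquad \gamma = g\,a_T\,b_s^{-1}c_u^{-1}\,g^{-1}.
\]
This $\gamma$ is the group element whose axis I want to use to manufacture the periodic point $x'$. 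The key idea is that a periodic orbit corresponds, via Lemma \ref{cjo}, to a hyperbolic conjugacy class, so producing $x'$ and $T'$ amounts to diagonalizing $\gamma$ (or rather the matrix $M:=A_T B_s^{-1} C_u^{-1}\in\SL(2,\R)$ built from the explicit $2\times 2$ blocks) and reading off its translation length and the conjugating element that moves $g$ onto the axis.

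The main computational step is to analyze $M = A_T B_s^{-1} C_u^{-1}$. Writing this product out with $A_T=\mathrm{diag}(e^{T/2},e^{-T/2})$, $B_s^{-1}=\bigl(\begin{smallmatrix}1&-s\\0&1\end{smallmatrix}\bigr)$, and $C_u^{-1}=\bigl(\begin{smallmatrix}1&0\\-u&1\end{smallmatrix}\bigr)$ gives an explicit matrix whose trace is $|\tr M| = |e^{T/2}(1+us) + e^{-T/2}|$ up to the lower-order entries; since $\gamma$ is hyperbolic, its translation length $T'$ is defined through $2\cosh(T'/2)=\tr M$, i.e.
\[
e^{T'/2}+e^{-T'/2}=e^{T/2}(1+us)+e^{-T/2}.
\]
Solving this for $T'$ and expanding to isolate $\tfrac{T'-T}{2}-\ln(1+us)$ yields a remainder governed by the $e^{-T}$ term; the claimed bound $5|us|e^{-T}$ should fall out from a careful estimate of $\arccosh$ near its argument, using $T\ge 1$ to control the hyperbolic functions and $\eps<\tfrac14$ to keep $1+us$ bounded away from degeneracy. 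I expect this $\arccosh$-expansion to be the delicate part: one must track the sign and size of the correction uniformly, rather than just to leading order.

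Next I would construct $x'$ itself. Diagonalizing $M=P D P^{-1}$ with $D=\mathrm{diag}(e^{T'/2},e^{-T'/2})$, the conjugating element $P$ tells me how far $x=\Gamma g$ must be displaced along the stable and unstable directions to land on the axis of $\gamma$; this displacement is exactly the pair $(\sigma,\eta)$ in $x'=(\sigma,\eta)_x$. Concretely, $P$ differs from the identity by off-diagonal entries of orders $u e^{-T}$ and $s$ respectively, which is the source of the stated bounds $|\sigma|<2|u|e^{-T}$ and $|\eta|<\tfrac{3|s|}{2}$. To verify $\varphi_{T'}(x')=x'$ I would check directly that the corresponding group element $\gamma'=g'a_{T'}g'^{-1}$ (with $x'=\Gamma g'$, $g'=gc_\sigma b_\eta$) lies in $\Gamma$, which holds because $\gamma'$ is conjugate to $\gamma\in\Gamma$ by construction. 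Finally, for the orbit-closeness estimate $d_X(\varphi_t(x),\varphi_t(x'))<2|u|+|\eta|$, I would use the left-invariance of $d_\G$ together with the contraction/expansion of the $b_s$ and $c_u$ flows under $a_t$: along $[0,T]$ the unstable coordinate $\sigma$ is expanded but starts at size $u e^{-T}$, so it stays $O(|u|)$, while the stable coordinate $\eta$ is contracted, giving the sum $2|u|+|\eta|$.

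The hard part will be making the two small-parameter expansions simultaneously rigorous with the sharp constants $5$, $2$, and $\tfrac{3}{2}$: the translation-length estimate requires controlling higher-order terms in the $\arccosh$ expansion uniformly over $T\ge 1$ and $|u|,|s|<\eps$, and the coordinate bounds for $(\sigma,\eta)$ require tracking exactly how the eigenvector directions of $M$ deviate from the coordinate axes. Everything else is bookkeeping in $\SL(2,\R)$, but pinning down these constants without losing the claimed form is where the care lies; I would handle it by factoring $M$ as a product $a_{T'}$ times small stable/unstable shears and matching entries, so that the error terms appear explicitly rather than being absorbed into an unspecified $O(\cdot)$.
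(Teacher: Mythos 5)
Your proposal follows essentially the same route as the proof the paper relies on (\cite[Theorem 2.3]{HK}, whose structure is visible in Remark \ref{Alr}): there one likewise reduces to the exact identity $g c_\sigma b_\eta a_{T'}=\zeta g c_\sigma b_\eta$ with $\zeta=ga_Tb_{-s}c_{-u}g^{-1}$, i.e.\ to conjugating $A_TB_{-s}C_{-u}$ into $A_{T'}$ by the shear product $C_\sigma B_\eta$ and matching entries, which is precisely your diagonalization scheme, trace identity $e^{T'/2}+e^{-T'/2}=e^{T/2}(1+us)+e^{-T/2}$ included. The only point you assume rather than check---hyperbolicity of $\gamma$---follows at once from your own trace formula, since $e^{T/2}(1+us)+e^{-T/2}>2$ for $T\ge 1$ and $|us|<\eps^2<\tfrac1{16}$.
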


\begin{remark}\label{Alr}\rm 
	According to the proof of the Anosov closing lemma I in \cite[Theorem 2.3]{HK}, $x=\Gamma g, g\in\PSL(2,\R)$ and  $\zeta\in\Gamma$ is such that $ga_T=\zeta g c_ub_s$ then 
	$gc_{\sigma}b_\eta a_{T'}=\zeta g c_\sigma b_\eta$. This yields that the periodic orbit of $x'=\Gamma g c_\sigma b_\eta$ corresponds to the conjugacy class $\{\zeta\}_\Gamma$, provided that all elements in $\Gamma\backslash\{e\}$ are hyperbolic.
\end{remark}
Using the other version of Poincar\'e sections,  we have a respective  statement for the Anosov closing lemma which will be also useful afterwards.

\begin{lemma}[{Anosov closing lemma II}\empty]\label{anosov2}
	Suppose that $\eps\in\, (0, \frac{1}{4})$, 
	$x\in X$, $T\ge 1$, and $\varphi_T(x)\in \P'_\eps(x)$. If $\varphi_T(x)={(s, u)}'_x\in\P'_\eps(x)$, 
	in the notation from Definition \ref{Poindn1}, then there are $x'=(\eta,\sigma)'_x\in {\mathcal P}'_{2\eps}(x)$ and $T'\in\R$ so that 
	\begin{eqnarray*}\varphi_{T'}( x)= x\quad \mbox{and}\quad d_X(\varphi_t(x),\varphi_t(x'))\leq 2|u|+|\eta|<4\eps\quad \mbox{for all}\quad t\in[0,T].
	\end{eqnarray*}
	Furthermore,
	\begin{equation*}
		\Big|\frac{T'-T}2\Big| <4|us|e^{-T}
	\end{equation*}
	and
	\begin{equation*}
		|\sigma|<2|u| e^{-T},\quad |\eta|\leq \frac{3|s|}{2}.
	\end{equation*}
\end{lemma}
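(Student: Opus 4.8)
The plan is to mirror the proof of the Anosov closing lemma~I (Lemma~\ref{anosov1}) together with Remark~\ref{Alr}, reducing the assertion to an explicit conjugacy computation in $\PSL(2,\R)$. Write $x=\Gamma g$ with $g\in\PSL(2,\R)$. Since $\varphi_T(x)=\Gamma g a_T$ and $\varphi_T(x)=(s,u)'_x=\Gamma g b_sc_u$ lie in the same right coset, there is $\zeta\in\Gamma$ with $ga_T=\zeta g b_sc_u$, so that $\zeta=g\,a_Tc_{-u}b_{-s}\,g^{-1}$; note $\zeta\neq e$ and $\zeta$ is hyperbolic. I would look for the partner in the form $x'=(\eta,\sigma)'_x=\Gamma g b_\eta c_\sigma$ and require that the representative $g'=gb_\eta c_\sigma$ conjugate $\zeta$ to a diagonal element, that is,
\[ (g')^{-1}\zeta g'=a_{T'}. \]
If such $\eta,\sigma,T'$ exist, then $g'a_{T'}=\zeta g'$ gives $\Gamma g'a_{T'}=\Gamma\zeta g'=\Gamma g'$, i.e. $\varphi_{T'}(x')=x'$, and by the reasoning of Remark~\ref{Alr} the orbit of $x'$ represents the same conjugacy class $\{\zeta\}_\Gamma$.

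Substituting the expression for $\zeta$, the conjugacy condition becomes the matrix equation $c_{-\sigma}\,b_{-\eta}\,a_T\,c_{-u}\,b_{\eta-s}\,c_\sigma=a_{T'}$, or equivalently, writing $\tau:=e^{T/2}$ and $\tau':=e^{T'/2}$,
\[ B_{-\eta}A_TC_{-u}B_{\eta-s}C_\sigma=C_\sigma A_{T'}\qquad\text{in }\SL(2,\R)\ \text{up to sign}. \]
The next step is to solve this explicitly. Demanding that the $(1,2)$-entry of the left-hand side vanish yields a quadratic in $\eta$, namely $u\eta^2+(\tau^2-1-us)\eta-\tau^2 s=0$; the substitution $\eta=s+\delta$ turns it into $u\delta^2+(\tau^2-1+us)\delta-s=0$, whose root with $\delta=O(se^{-T})$ produces a well-defined bounded root $\eta=s+O(se^{-T})$. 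With this $\eta$, matching the $(1,1)$- and $(2,1)$-entries gives $\tau'=\tau+\eta u/\tau$ and $\sigma=-(u/\tau)/(\tau'-\tau'^{-1})$, the $(2,2)$-entry being automatically consistent by the determinant identity.

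From these formulas the estimates follow. In particular
\[ \frac{T'-T}{2}=\ln\frac{\tau'}{\tau}=\ln\bigl(1+\eta u\,e^{-T}\bigr), \]
and $\eta=s+O(se^{-T})$ with $T\ge1$ yields $|\eta|\le\tfrac32|s|$, the bound $\bigl|\tfrac{T'-T}{2}\bigr|<4|us|e^{-T}$, and $|\sigma|<2|u|e^{-T}$ once $\eps$ is small. I regard this group of estimates as the crux, and the decisive point is the contrast with Lemma~\ref{anosov1}: here the dilation $\tau'=e^{T'/2}$ is read off from the $(1,1)$-entry $\tau+\eta u/\tau$, so $\tau'/\tau=1+\eta u\,e^{-T}$ differs from $1$ only at order $e^{-T}$, and consequently the action difference carries no leading $\ln(1+us)$ term, unlike the unstable-then-stable ordering of Lemma~\ref{anosov1}.

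For the orbit-distance estimate I would write $\varphi_t(x')=\Gamma(ga_t)\,(a_{-t}b_\eta c_\sigma a_t)$ and use the conjugation identities $a_{-t}b_\eta a_t=b_{\eta e^{-t}}$ and $a_{-t}c_\sigma a_t=c_{\sigma e^{t}}$ to get $a_{-t}b_\eta c_\sigma a_t=b_{\eta e^{-t}}c_{\sigma e^{t}}$. By left-invariance of $d_\G$ and $d_X\le d_\G$,
\[ d_X(\varphi_t(x),\varphi_t(x'))\le d_\G\bigl(e,\,b_{\eta e^{-t}}c_{\sigma e^{t}}\bigr). \]
For $t\in[0,T]$ one has $|\eta e^{-t}|\le|\eta|$ and, since $|\sigma|<2|u|e^{-T}$, also $|\sigma e^{t}|\le|\sigma|e^{T}<2|u|$; the comparability of $d_\G$ with the coordinates near the identity (as in \cite{HK,Huynh16}) then gives $d_X(\varphi_t(x),\varphi_t(x'))\le 2|u|+|\eta|<4\eps$. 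The main obstacle throughout is the bookkeeping in the matrix equation—selecting the correct bounded root $\eta$ of the quadratic and verifying that the dilation is controlled by the $(1,1)$-entry—since this is exactly what distinguishes the present estimate from that of Lemma~\ref{anosov1}.
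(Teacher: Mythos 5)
Your overall route coincides with the paper's. The paper defers the proof of this lemma to \cite{HK,Huynh16}, and the scheme used there is exactly the one recorded in Remark \ref{Alr} for Lemma \ref{anosov1}: from $ga_T=\zeta g b_sc_u$ one seeks $g'=gb_\eta c_\sigma$ on the section with $g'a_{T'}=\zeta g'$, i.e. $(g')^{-1}\zeta g'=a_{T'}$, and the orbit of $x'=\Gamma g'$ then also represents $\{\zeta\}_\Gamma$. I checked your matrix bookkeeping and it is sound: annihilating the $(1,2)$-entry gives $u\eta^2+(e^T-1-us)\eta-e^Ts=0$, which under $\eta=s+\delta$ becomes $u\delta^2+(e^T-1+us)\delta-s=0$; the $(1,1)$-entry gives $e^{T'/2}=e^{T/2}+\eta u e^{-T/2}$, hence $\tfrac{T'-T}{2}=\ln(1+\eta u e^{-T})$, which is precisely the right structural explanation for why no $\ln(1+us)$ term appears here, in contrast to Lemma \ref{anosov1}; the formula $\sigma=-(u/\tau)/(\tau'-\tau'^{-1})$ and the automatic consistency of the $(2,2)$-entry via the determinant are correct, as is the distance estimate through $\varphi_t(x')=\Gamma(ga_t)b_{\eta e^{-t}}c_{\sigma e^{t}}$.

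The one genuine gap is at the step you yourself call the crux: ``$\eta=s+O(se^{-T})$ with $T\ge1$ yields $|\eta|\le\tfrac32|s|$'' is not a derivation, and your own exact root shows it cannot be one near the corner $T=1$. From $\delta=s/(e^T-1+us+u\delta)$ one only gets $|\delta|\le |s|/(e^T-1-|us|-|u\delta|)$, i.e. roughly $|\eta|\le |s|/(1-e^{-T})$ up to $O(\eps^2)$ corrections; at $T=1$ this is about $1.58\,|s|>\tfrac32|s|$, and the failure is already visible in the clean case $u=0$, where the closing equation forces $\sigma=0$ and $\eta=s/(1-e^{-T})$ exactly. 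Since the nearby periodic point, and hence its intersection with the section, is unique, your construction cannot produce a smaller $\eta$, so the claimed constant simply does not follow from $T\ge 1$ alone. To repair this you must either restrict to $e^{-T}\le\tfrac13$ (i.e. $T\ge\ln 3$) and control the $us$- and $u\delta$-corrections explicitly, or carry a slightly weaker explicit bound, e.g. $|\eta|\le\tfrac{13}{8}|s|$ (which does follow from $e^T-1-|us|-|u\delta|\ge 1.6$ for $T\ge1$, $\eps<\tfrac14$), through the remainder of the argument. Note that the substance of your proof survives this correction: with $|\eta|\le\tfrac{13}{8}|s|$ one still gets $\big|\tfrac{T'-T}{2}\big|\le 2|\eta u|e^{-T}<4|us|e^{-T}$ and $|\sigma|\le |u|e^{-T}/\big((1-e^{-T'})(1+O(\eps^2 e^{-T}))\big)<2|u|e^{-T}$, as well as the $2|u|+|\eta|<4\eps$ distance bound. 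But as written, the $\eta$-estimate---the one constant the lemma asserts verbatim---is unsupported, and your explicit computation in fact shows that establishing it requires a hypothesis slightly stronger than $T\ge 1$.
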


\begin{lemma}[{Connecting lemma}]\label{clm}
	Let $x_j\in X$ be $T_j$-periodic point of the flow $(\varphi_t)_{t\in\R}$ for $j=1,2$ and $T_1+T_2\geq 1$ and let $\eps\in(0,\frac14)$.	If $x_2=(u_1,s_1)_{x_1}\in \P_{\eps}(x_1)$, then there are
	$x\in X$ and $T>0$ such that $\varphi_T(x)=x$,
	\begin{eqnarray}\label{cl1} 
		d_X(\varphi_t(x),\varphi_t(x_1))& < &5\eps\quad \mbox{for all}\quad t\in[0,T_1],\\ \label{cl2} 
		d_X(\varphi_{t+T_1}(x),\varphi_t(x_2))&<&5\eps\quad \mbox{for all}\quad t\in[0,T_2],
	\end{eqnarray}
	and
	\begin{equation}
		\Big|\frac{T-(T_1+T_2)}{2}-\ln(1+us)\Big| <7|us|(e^{-T_1}+e^{-T_2}).
	\end{equation}
	Furthermore, if $x_i=\Gamma g_i$ for some $g_i\in\PSL(2,\R)$, then $x=\Gamma g_1 c_{u e^{-T_1}+\sigma} b_\eta$, where $\sigma,\eta\in\R$ satisfy
	\begin{equation}\label{cl3} 
		|\sigma|<2|u| e^{-T_1-T_2},\ \ |\eta|<\frac{3|s|}2
	\end{equation} 
and the orbit of $x$ corresponds to the conjugacy class $\{\gamma_1\gamma_2\}_\Gamma$, where $\gamma_1,\gamma_2\in\Gamma$ such that $g_1a_{T_1}=\gamma_1 g_1, g_2 a_{T_2}=\gamma_2g_2$,
provided that all elements in $\Gamma\backslash\{e\}$ are hyperbolic.
\end{lemma}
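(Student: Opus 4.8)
The plan is to reduce the statement to an explicit conjugation problem for a single $2\times2$ matrix, exactly in the spirit of the Anosov closing lemma (Lemma \ref{anosov1}) and Remark \ref{Alr}. First I would fix representatives: since $x_1=\Gamma g_1$ and $x_2=(u,s)_{x_1}$, I take $g_2=g_1c_ub_s$ as the representative of $x_2$, and let $\gamma_1,\gamma_2\in\Gamma$ be determined by $g_1a_{T_1}=\gamma_1g_1$ and $g_2a_{T_2}=\gamma_2g_2$. A direct substitution gives $\gamma_1\gamma_2=g_1Mg_1^{-1}$ with $M=a_{T_1}c_ub_sa_{T_2}b_s^{-1}c_u^{-1}$. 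The sought periodic point will be $x=\Gamma g_1w$ with $w=c_{\sigma'}b_\eta$, and the requirement that its orbit close up with period $T$ and correspond to $\{\gamma_1\gamma_2\}_\Gamma$ is, by the correspondence in Lemma \ref{cjo} together with Remark \ref{Alr}, precisely the equation $wa_Tw^{-1}=M$ (so the conjugacy-class claim is built into the construction). Thus everything reduces to diagonalizing the hyperbolic element $M$ by a conjugator of the special form $c_{\sigma'}b_\eta$.

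Second, I would compute $M$ explicitly in $\SL(2,\R)$ and read off its trace,
\[
\tfrac12\tr M=(1+us)\cosh\tfrac{T_1+T_2}2-us\cosh\tfrac{T_1-T_2}2,
\]
which equals $\cosh(T/2)$ since $M$ is conjugate to $a_T$. Writing $L=\ln(1+us)$ and comparing $\cosh(T/2)$ with $\cosh\big(\tfrac{T_1+T_2}2+L\big)$ through the mean value theorem — the intermediate derivative $\tfrac12\sinh(\cdot)$ being of size $e^{(T_1+T_2)/2}$ — yields the action estimate $\big|\tfrac{T-(T_1+T_2)}2-\ln(1+us)\big|<7|us|(e^{-T_1}+e^{-T_2})$, the residual being governed by the $us\cosh\tfrac{T_1-T_2}2$ term.

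Third, I would find the conjugator by eigenvector analysis. The expanding eigenvector $(1,\sigma')^\top$ of $M$ gives $\sigma'=M_{21}/(e^{T/2}-M_{22})$, and a short estimate shows $\sigma'=ue^{-T_1}+\sigma$ with $|\sigma|<2|u|e^{-T_1-T_2}$; matching the off-diagonal entry, which is unchanged under conjugation by $c_{\sigma'}$, gives $\eta=-M_{12}/(e^{T/2}-e^{-T/2})$ with $|\eta|<\tfrac32|s|$. This produces exactly the claimed form $x=\Gamma g_1c_{ue^{-T_1}+\sigma}b_\eta$.

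Finally, the shadowing bounds follow from left-invariance of $d_X$ together with the identities $a_t^{-1}c_xa_t=c_{xe^t}$ and $a_t^{-1}b_ya_t=b_{ye^{-t}}$. For $t\in[0,T_1]$ one compares $\varphi_t(x)=\Gamma g_1c_{\sigma'}b_\eta a_t$ with $\varphi_t(x_1)=\Gamma g_1a_t$ directly, the displacement being $c_{\sigma'e^t}b_{\eta e^{-t}}$ with $|\sigma'e^t|\le|u|$ and $|\eta e^{-t}|\le\tfrac32|s|$. At $t=T_1$ the key step is to absorb $\gamma_1$, rewriting $\varphi_{T_1}(x)=\Gamma g_1c_{\sigma'e^{T_1}}b_{\eta e^{-T_1}}$, so that the expanding coordinate $\sigma'e^{T_1}\approx u$ now matches the $u$-coordinate of $x_2$; then for $r\in[0,T_2]$ the displacement between $\varphi_{T_1+r}(x)$ and $\varphi_r(x_2)$ is $a_r^{-1}Wa_r$ with $W=(c_ub_s)^{-1}c_{\sigma'e^{T_1}}b_{\eta e^{-T_1}}$ a product of three small shears, which stays below $5\eps$ throughout. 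I expect the main obstacle to be this last part: keeping all the exponential factors under uniform control across the junction at $t=T_1$ and extracting the sharp constants $5\eps$, $7|us|(e^{-T_1}+e^{-T_2})$, $2|u|e^{-T_1-T_2}$, $\tfrac32|s|$ simultaneously, rather than the matrix algebra, which is routine once $M$ is in hand.
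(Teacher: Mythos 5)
Your proposal is correct in substance, but it takes a genuinely different route from the paper, whose proof of this lemma contains almost no analysis: the existence of the periodic point, the shadowing bounds and the estimates \eqref{cl3} are simply cited from \cite[Theorem 2.6]{Huynh16}, and the only thing proved on the spot is the conjugacy-class assertion, via a chain of identities in $\PSL(2,\R)$ (using $b_sa_t=a_tb_{se^{-t}}$, $c_ua_t=a_tc_{ue^t}$) showing that the closing element $\gamma$ of the cited construction equals $\gamma_2\gamma_1$, whence the class is $\{\gamma_2\gamma_1\}_\Gamma=\{\gamma_1\gamma_2\}_\Gamma$. You instead package everything into the single conjugation problem $wa_Tw^{-1}=M$ with $M=a_{T_1}c_ub_sa_{T_2}b_{-s}c_{-u}$ and $g_1Mg_1^{-1}=\gamma_1\gamma_2$ built in by construction; your trace identity $\tfrac12\tr M=(1+us)\cosh\tfrac{T_1+T_2}{2}-us\cosh\tfrac{T_1-T_2}{2}$ is correct, and your two-condition determination of $w=c_{\sigma'}b_\eta$ does suffice, since once $\sigma'$ is the expanding slope the matrix $c_{-\sigma'}Mc_{\sigma'}$ is upper triangular with diagonal $(e^{T/2},e^{-T/2})$, so matching the $(1,2)$ entry closes the full matrix equation. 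What your route buys is that the conjugacy-class claim becomes tautological rather than a separate computation, and the action estimate falls out of one closed-form trace formula; what it costs is that you must supply what the citation hides: (i) hyperbolicity of $M$, which is routine since $\tfrac12\tr M\geq(1-|us|)\cosh\tfrac{T_1+T_2}{2}+|us|\cosh\tfrac{T_1-T_2}{2}>1$ for $|us|<\eps^2<\tfrac1{16}$ and $T_1+T_2\geq 1$; (ii) a small bootstrap in the mean-value-theorem step to place the intermediate point near $\tfrac{T_1+T_2}{2}$ before bounding $1/\sinh$; and (iii) the constant-chasing you flag yourself — note in particular that your claim $|\sigma'e^t|\leq|u|$ on $[0,T_1]$ is slightly off (the correct bound is $|\sigma'|e^{T_1}\leq|u|+2|u|e^{-T_2}$), though harmless within the $5\eps$ margin. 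One point worth making explicit in your write-up: the class $\{\gamma_1\gamma_2\}_\Gamma$ depends on the choice of representatives $g_1,g_2$, so the normalization $g_2=g_1c_ub_s$ that you fix at the outset (and that the paper also fixes in its proof) is not optional but part of the statement's meaning.
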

See Figure \ref{Anosovconnect}\,(b) for an illustration. 
\begin{proof} For the existence of a periodic point $x$ and \eqref{cl1}-\eqref{cl3}, see
	\cite[Theorem 2.6]{Huynh16}. For the last assertion, we can choose $g_1,g_2\in \PSL(2,\R)$ such that 
	$x=\Gamma g_1, y=\Gamma g_2$ and
	$g_2=g_1c_ub_s$. 
	According to the proof of the previous lemma  in \cite[Theorem 2.6]{Huynh16}, if $\gamma_1,\gamma_2\in\Gamma$ such that
	$g_1 a_{T_1}=\gamma_1 g_1, g_2 a_{T_2}=\gamma_2 g_2$ then 
	$g_1 c_{ue^{-T_1}+\sigma}b_\eta a_T=\gamma g_1 c_{ue^{-T_1}+\sigma}b_\eta$ with
	\begin{eqnarray*}
		\gamma&=& g_2 b_{-s} a_{-T_1}a_{T_1+T_2}b_{-s(1-e^{-T_2})}c_{-u(1-e^{-T_1})}a_{T_1}c_{-u}g_1^{-1}
		\\
		&=& g_2a_{T_2} b_{-se^{-T_2}}b_{-s(1-e^{-T_2})}c_{-u(1-e^{-T_1})}c_{-ue^{-T_1}}a_{T_1}g_1^{-1}\\
		&=& g_2 a_{T_2}b_{-s}c_{-u}g_1^{-1}\gamma_1\\
		&=&\gamma_2 g_2 b_{-s}c_{-u}g_1^{-1}\gamma_1
		\\
		&=&\gamma_2\gamma_1.
	\end{eqnarray*}
This implies that the orbit of $x$ corresponds to the conjugacy class 
\[\{\gamma_2\gamma_1\}_\Gamma=\{ \gamma_1\gamma_2\}_\Gamma.\]
\end{proof}

\begin{figure}[ht]
	\begin{center}
		\begin{minipage}{1\linewidth}
			\centering
			\includegraphics[angle=0,width=0.99\linewidth]{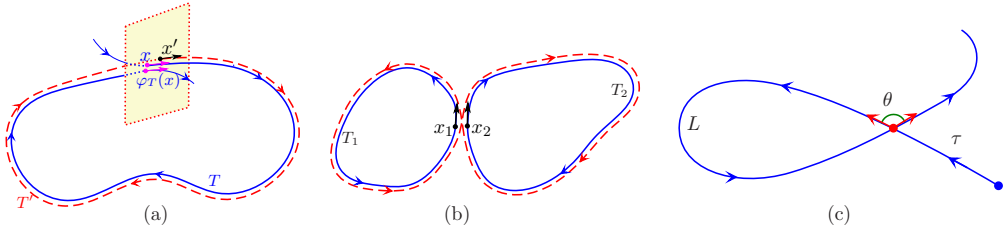}
		\end{minipage}
	\end{center}
	\caption{(a) Anosov closing lemma\ \ \ \ (b) Connecting lemma\ \ \ \ (c) Orbit with a self-crossing}\label{Anosovconnect}
\end{figure} 
\subsection{Self-crossings}
Recall that ${\mathcal X}=T^1(\Gamma\backslash\H^2)$ denotes the unit tangent bundle of the factor $\Gamma\backslash\H^2$; see Figure 
\ref{Anosovconnect}\,(c) for an illustration for the next result.
\begin{lemma}[Self-crossings,\cite{HK}]\label{selfcthm} 
	Suppose that all elements of $\Gamma\setminus\{e\}$ are hyperbolic 
	and let  $\tau\in\R,\, L>0,\,\theta\in\,(0,\pi)$, and ${\mathtt x}\in \X$ be given. 
	The orbit of ${\mathtt x}$ under the geodesic flow 
	${(\varphi^\X_t)}_{t\in\R}$ crosses itself in configuration space at the 
	time $\tau$, at the angle $\theta$, and creates a loop of length $L$ if and only if
	\begin{equation}\label{sc1} \mbox{either}\quad \Gamma ga_{\tau+ L}=\Gamma g a_\tau d_\theta \quad\mbox{or}\quad
		\Gamma g_{\tau+L}=\Gamma g a_{\tau} d_{-\theta} 
	\end{equation}
	holds for any $g\in\PSL(2,\R), \Gamma g=\Xi({\mathtt x})$.
	Furthermore, 
	\begin{equation}\label{sc2}
		e^{-L}<\cos^2\Big(\frac{\theta}{2}\Big). 
	\end{equation}
\end{lemma}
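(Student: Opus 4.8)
The plan is to reduce the geometric statement to an algebraic identity in $\PSL(2,\R)$ and then extract the inequality from hyperbolicity. First I would recall the identification $\Xi$ between $\X=T^1(\Gamma\backslash\H^2)$ and $X=\Gamma\backslash\PSL(2,\R)$: a coset $\Gamma h$ represents the unit tangent vector obtained by transporting a fixed reference vector at $i\in\H^2$ by $h$, based at the image of $h\cdot i$ in $\Gamma\backslash\H^2$. Under $\Xi$ the geodesic flow is right translation by $a_t$, while right translation by a rotation $d_\phi$, with $\SL$-representative $D_\phi=\left(\begin{smallmatrix}\cos(\phi/2)&\sin(\phi/2)\\-\sin(\phi/2)&\cos(\phi/2)\end{smallmatrix}\right)$ fixing $i$, keeps the base point fixed and rotates the tangent direction by exactly the angle $\phi$ (the half-angle reflecting the double cover $\SL\to\PSL$). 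Consequently two cosets $\Gamma h_1,\Gamma h_2$ share the same base point in $\Gamma\backslash\H^2$ iff $h_2=\gamma h_1 d_\phi$ for some $\gamma\in\Gamma$ and rotation $d_\phi$, in which case their tangent directions differ by the angle $\phi$.

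For the equivalence I would write $\Gamma g=\Xi(\mathtt x)$ and note that the orbit returns to the same configuration-space point at times $\tau$ and $\tau+L$ exactly when the base points of $\Gamma g a_\tau$ and $\Gamma g a_{\tau+L}$ agree. By the preceding paragraph this holds iff $g a_{\tau+L}=\gamma\,g a_\tau\,d_\phi$ for some $\gamma\in\Gamma$ and rotation $d_\phi$, i.e. $\Gamma g a_{\tau+L}=\Gamma g a_\tau d_\phi$, and the crossing angle is then $|\phi|$. A genuine transversal crossing forces $\phi\neq 0$ (equal directions give only the return of the orbit, not a crossing) and $\phi\neq\pi$ (opposite directions place both stretches on a single geodesic line, an overlap), so $\phi=\pm\theta$ with $\theta\in(0,\pi)$, the two signs recording the two orientations. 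This is precisely \eqref{sc1}, and the argument reverses: from \eqref{sc1} one recovers $\gamma\in\Gamma$ with $g a_{\tau+L}=\gamma g a_\tau d_{\pm\theta}$, hence coinciding base points and tangent directions differing by $\theta\in(0,\pi)$, i.e. a crossing at time $\tau$, angle $\theta$, enclosing a loop of length $(\tau+L)-\tau=L$.

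For the inequality \eqref{sc2} I would rearrange \eqref{sc1} to $\gamma=g\,(a_{\tau+L}d_{\pm\theta}a_{-\tau})\,g^{-1}$, so $\gamma$ is conjugate to $M:=a_{\tau+L}d_{\pm\theta}a_{-\tau}$. Since $L>0$ one has $\gamma\neq e$ (otherwise $d_{\pm\theta}=a_{-L}$, impossible for a rotation equal to a nontrivial diagonal element), so by hypothesis $\gamma$ is hyperbolic and $\tr(\gamma)>2$. Multiplying the $\SL$-representatives gives $\tr(M)=2\cos(\theta/2)\cosh(L/2)$, whence $\cos(\theta/2)\cosh(L/2)>1$. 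Writing $x=e^{L/2}>1$ and $c=\cos(\theta/2)\in(0,1)$, this reads $c(x+x^{-1})>2$, i.e. $cx^2-2x+c>0$, whose roots $X_\pm=(1\pm\sqrt{1-c^2})/c$ satisfy $X_-X_+=1$ and $X_+>1>X_-$. As $x>1>X_-$, the inequality forces $x>X_+>1/c$, so $e^{L/2}>1/c$ and therefore $e^{-L}<c^2=\cos^2(\theta/2)$.

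The hard part is the first, geometric step: fixing the half-angle rotation convention and the sign/orientation bookkeeping so that the fiber rotation angle $\phi$ truly equals the crossing angle, and verifying that the restriction $\theta\in(0,\pi)$ is exactly what separates a transversal self-crossing from the two degenerate situations (mere orbit return and geodesic overlap). Once the reformulation \eqref{sc1} is secured, the trace computation of $M$ and the quadratic estimate yielding \eqref{sc2} are routine.
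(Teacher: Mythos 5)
Your proposal is correct and follows essentially the same route as the proof the paper defers to in \cite{HK}: identify base-point coincidence with the cosets differing by a rotation $d_{\pm\theta}$ in the stabilizer of $i$, so that $\gamma$ is conjugate to $a_{\tau+L}d_{\mp\theta}a_{-\tau}$ with trace $2\cos(\theta/2)\cosh(L/2)$, and then use hyperbolicity ($\tr(\gamma)>2$) to extract $e^{-L}<\cos^2(\theta/2)$. The only remark worth making is that your quadratic-roots step can be shortened to $e^{L/2}=\cosh(L/2)+\sinh(L/2)>\cosh(L/2)>1/\cos(\theta/2)$, which gives \eqref{sc2} immediately.
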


\subsection{Encounters and partner orbits}
\begin{definition}[Time reversal]\label{tr}\rm
	The {\em time reversal map} $\T:\X \rightarrow \X$ is defined by 
	\[\T(p,\xi)=(-p,\xi)\quad \mbox{for}\quad (p,\xi)\in \X.\]
	The respective time reversal map on $X=\Gamma\backslash\PSL(2,\R)$ is determined by 
	\begin{equation*}\label{revedn}
		\T(x)=\Gamma gd_\pi \quad \mbox{for}\quad  x=\Gamma g\in X,
	\end{equation*}
	where $d_\pi\in\PSL(2,\R)$ is the equivalence class of the matrix 
	$D_\pi=\scriptsize\Big(\begin{array}{cc} 0&1\\-1&0\end{array}\Big)\in{\rm SL(2,\R)}$.
\end{definition} 
Using Lemma \ref{adpi} below, we have
\begin{equation}\label{tr2}
	\varphi_t(\T(x))=\T(\varphi_{-t}(x))
	\quad \mbox{for}\quad x\in X\quad\mbox{and}\quad t\in\R.
\end{equation}

Next, we recall the notions of orbit pairs and partner orbits. Roughly speaking, two periodic orbits are called an orbit pair if they are close enough to each other in configuration space, not for the whole time, since otherwise they would be identical, but they decompose to the same number of parts and any part of one orbit is close to some part of the other. The following is a rigorous definition of  orbit pairs, which is recalled from \cite{Huynh16}. 
\begin{definition}[Orbit pair/Partner orbit]\label{partnerdf1}\rm 
	Let $\eps>0$ be given. 
	Two
	given $T$-periodic orbit $c$ and $T'$-periodic orbit $c'$ of the flow
	$(\varphi_t)_{t\in\R}$ are called an {\em $\eps$-orbit pair}
	if 
	there are $L\geq 2, L\in \Z$ and two decompositions  of $[0,T]$ and $[0,T']:$ $0=t_0<\cdots<t_L=T$
	and $0=t_0'<\cdots<t_L'=T'$, 
	and a permutation $\sigma:\{0,1,\dots,L-1\}\rightarrow 
	\{0,1,\dots, L-1\}$ such that 
	for each $j\in\{0,\dots,L-1\}$, 
	either
	\begin{equation*}
		d_X(\varphi_{t+t_j}(x), \varphi_{t+t_{\sigma(j)}'}(x'))<\eps
		\quad \mbox{for all}\quad t\in [0,t_{j+1}-t_j]
	\end{equation*}
	or
	\begin{equation*}
		d_X\Big(\varphi_{t+t_j}(x), \varphi_{t-t_{\sigma(j)+1}'}(\T(x'))\big)<\eps
		\quad \mbox{for all}\quad t\in [0,t_{j+1}-t_j]
	\end{equation*}
	holds for some $x\in c$ and $x'\in c'$.
	Then $c'$ is called an {\em $\eps$-partner orbit} of $c$ and vice versa. 
\end{definition}

\begin{definition}[Encounter]\rm Let $\eps>0$ and $L\in \Z, L\geq 2$ be given.
	We say that a periodic orbit $c$ of the flow $(\varphi_t)_{t\in\R}$ has an {\em$(L,\eps)$-encounter} if
	there are $x_1,\dots, x_L\in c$ such that for each $j\in\{2,\dots,L\}$,
	\[\mbox{either}\quad x_j\in \P_\eps(x_1)\quad \mbox{or}\quad \T(x_j)\in \P_\eps(x_1).\]
	The point $x_2,\dots,x_L$ are called {\em piercing points}. 
	If either $x_j\in\P_\eps(x_1)$ holds for all $2=1,\dots,L$ or $\T(x_j)\in\P_\eps(x)$ holds for all $j=2,\dots,L$  then
	the encounter is called {\em parallel encounter}; otherwise it is called {\em antiparallel encounter}.
\end{definition}

\subsection{Auxiliary results}

The next result is a decomposition of $\PSL(2,\R)$.
\begin{lemma}[\cite{HK}]\label{decompo}
	Let $g=[G]\in {\rm PSL}(2, \R)$ for $G=\scriptsize \Big(\begin{array}{cc} a & b \\ c & d\end{array}\Big)
	\in {\rm SL}(2, \R)$. 
	\begin{itemize}
		\item[(a)] If $a\neq 0$, then $g=c_u b_s a_t$ for 
		\begin{equation*}
			t=2\ln |a|,\quad s=ab,\quad u=\frac{c}{a}. 
		\end{equation*} 
		\item[(b)] If $d\neq 0$, then $g=b_s c_u a_t$ for 
		\[ t=-2\ln |d|,\quad s=\frac{b}{d},\quad u=cd. \] 
	\end{itemize} 
\end{lemma}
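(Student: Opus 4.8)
The plan is to verify both decompositions by a single direct matrix computation, carried out with representatives in $\SL(2,\R)$ and then pushed to the quotient $\PSL(2,\R)=\SL(2,\R)/\{\pm E_2\}$ only at the last moment. Since $c_u,b_s,a_t$ are by definition the classes of the matrices $C_u,B_s,A_t$, it suffices to exhibit parameters $t,s,u$ for which the relevant triple product equals $G$ or $-G$.

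For part (a) I would first multiply out $C_uB_sA_t$ in closed form; right-multiplication by $A_t=\mathrm{diag}(e^{t/2},e^{-t/2})$ simply rescales the two columns of $C_uB_s$, giving the four entries $e^{t/2}$, $se^{-t/2}$, $ue^{t/2}$, $(us+1)e^{-t/2}$ in the $(1,1),(1,2),(2,1),(2,2)$ positions. Inserting $t=2\ln|a|$, so that $e^{t/2}=|a|$ and $e^{-t/2}=|a|^{-1}$, together with $s=ab$ and $u=c/a$, turns these into $|a|$, $ab/|a|$, $c|a|/a$, $(bc+1)/|a|$. The only step that is not pure substitution is the $(2,2)$ entry, where I would use $\det G=ad-bc=1$ in the form $bc+1=ad$ to rewrite it as $ad/|a|$. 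When $a>0$ the four entries are exactly those of $G$, and when $a<0$ they are those of $-G$; either way $[C_uB_sA_t]=[G]=g$, as claimed.

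Part (b) is the mirror image, with $b_s$ and $c_u$ interchanged. Here I would compute $B_sC_uA_t$, whose entries are $(1+su)e^{t/2}$, $se^{-t/2}$, $ue^{t/2}$, $e^{-t/2}$; now it is the $(2,2)$ entry that normalizes cleanly, so I choose $t=-2\ln|d|$ (hence $e^{-t/2}=|d|$ and $e^{t/2}=|d|^{-1}$) together with $s=b/d$ and $u=cd$. The off-diagonal entries then reduce to $b$ and $c$ up to the sign of $d$, while the $(1,1)$ entry $(1+su)e^{t/2}$ collapses, once more via $1+bc=ad$, to $a$ up to that same sign. Thus the product is $G$ when $d>0$ and $-G$ when $d<0$, giving $[B_sC_uA_t]=g$.

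I do not expect any genuine obstacle: the lemma is a verification, and its entire content is the closed-form product together with the single use of the determinant relation that fixes the remaining diagonal entry. The one point to keep straight is the sign bookkeeping inherent in $\PSL(2,\R)$, and this is handled automatically by writing the prescribed exponents with $\ln|a|$ and $\ln|d|$ rather than $\ln a$ and $\ln d$: the discrepancy between the matrix product and $G$ is then a single global factor $\pm1$ governed by the sign of $a$ (respectively $d$), which is precisely the ambiguity annihilated in the passage to the quotient by $\{\pm E_2\}$.
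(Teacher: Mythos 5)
Your proof is correct, and it is essentially the same argument as the one the paper relies on: the lemma is quoted from \cite{HK}, where it is likewise verified by computing the products $C_uB_sA_t$ and $B_sC_uA_t$ explicitly, substituting the stated parameters, and using $ad-bc=1$ to fix the remaining diagonal entry. Your handling of the sign via $\ln|a|$ (resp.\ $\ln|d|$), with the residual factor $\pm 1$ absorbed in the quotient by $\{\pm E_2\}$, is exactly the right bookkeeping.
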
  

\begin{lemma}\label{adpi}  The following relations hold for $t\in\R$: 
	\begin{equation}\label{dpi}a_t d_{\pi}=d_{\pi} a_{-t},\quad 
		b_t d_{\pi}=d_{\pi} c_{-t},\quad 
		c_t d_{\pi}=d_{\pi} b_{-t}.
	\end{equation} 
\end{lemma}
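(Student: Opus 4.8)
The plan is to verify each of the three identities directly at the level of the representing matrices in $\SL(2,\R)$ and then push the equalities down to the quotient $\PSL(2,\R)=\SL(2,\R)/\{\pm E_2\}$. Since $a_t,b_t,c_t,d_\pi$ are by definition the equivalence classes of the explicit matrices $A_t,B_t,C_t,D_\pi$, any equality $M_1=M_2$ between products of these matrices in $\SL(2,\R)$ immediately yields $[M_1]=[M_2]$ in $\PSL(2,\R)$. In fact the representatives will turn out to coincide on the nose, so the sign ambiguity inherent in passing to $\PSL(2,\R)$ never has to be invoked; equality of equivalence classes is a consequence of, but strictly weaker than, the matrix equalities I will establish.

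Concretely, I would carry out three independent $2\times 2$ multiplications. First, compute $A_tD_\pi$ and $D_\pi A_{-t}$ and observe that both equal $\left(\begin{smallmatrix} 0 & e^{t/2}\\ -e^{-t/2} & 0\end{smallmatrix}\right)$, which gives $a_td_\pi=d_\pi a_{-t}$. Next, compute $B_tD_\pi$ and $D_\pi C_{-t}$ and observe that both equal $\left(\begin{smallmatrix} -t & 1\\ -1 & 0\end{smallmatrix}\right)$, which gives $b_td_\pi=d_\pi c_{-t}$. Finally, compute $C_tD_\pi$ and $D_\pi B_{-t}$ and observe that both equal $\left(\begin{smallmatrix} 0 & 1\\ -1 & t\end{smallmatrix}\right)$, which gives $c_td_\pi=d_\pi b_{-t}$. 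Each line reduces to a single matrix product, and the three computations are entirely decoupled from one another.

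There is no substantive obstacle here: the lemma is a purely computational identity among the generators $a_t$, $b_t$, $c_t$ of the Iwasawa-type decomposition and the rotation-by-$\pi$ element $d_\pi$, and the whole proof amounts to three elementary matrix multiplications. The only point demanding a moment of care is the bookkeeping of signs together with the $\pm E_2$ ambiguity in the definition of $\PSL(2,\R)$; but since the representatives agree exactly rather than merely up to sign, even this subtlety evaporates, and the conjugation relations \eqref{dpi} follow at once. (One remark on motivation, rather than on the proof itself: these are precisely the relations needed to make the time-reversal formula \eqref{tr2} work, since $\T(\Gamma g)=\Gamma g d_\pi$ and $\varphi_t(\Gamma g)=\Gamma g a_t$, so $a_td_\pi=d_\pi a_{-t}$ is exactly the statement that flowing for time $t$ after time reversal equals time-reversing after flowing for time $-t$.)
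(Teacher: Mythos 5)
Your proof is correct and follows essentially the same route as the paper: verify the identities by explicit $2\times 2$ matrix multiplication in $\SL(2,\R)$ and project to $\PSL(2,\R)$, with all three products (including $A_tD_\pi=D_\pi A_{-t}=\left(\begin{smallmatrix} 0 & e^{t/2}\\ -e^{-t/2} & 0\end{smallmatrix}\right)$) computed correctly. The only difference is cosmetic: the paper works out the first relation and declares the other two analogous, whereas you write out all three.
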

\begin{proof} In ${\rm SL}(2, \R)$ we calculate 
	\begin{eqnarray*} 
		A_t D_{\pi} & = & \Bigg(\begin{array}{cc} e^{t/2} & 0 \\
			0 & e^{-t/2}\end{array}\Bigg)\Bigg(\begin{array}{cc} 0 & 1 \\
			-1 & 0\end{array}\Bigg)=\Bigg(\begin{array}{cc} 0 & e^{t/2} \\
			-e^{-t/2} & 0\end{array}\Bigg)\\
		&=&\Bigg(\begin{array}{cc} 0 & 1 \\
			-1 & 0\end{array}\Bigg)\Bigg(\begin{array}{cc} e^{-t/2} & 0 \\
			0 & e^{t/2}\end{array}\Bigg)
		=  D_{\pi} A_{-t} 
	\end{eqnarray*} 
	which upon projection yields the first one. The argument is analogous for the others.
	
\end{proof}

Owing to the hyperbolicity, the flow $(\varphi_t)_{t\in\R}$ is expansive, i.e.,  two orbits cannot stay too close together without being identical; see \cite[Lemma 1.5]{bowen}. 
 For periodic orbits, we have the following property; see \cite[Theorem 3.14]{HK} for a proof.

\begin{lemma}\label{per-coinc} Let $X=\Gamma\backslash {\rm PSL}(2, \R)$ be compact. 
	Then there is $\eps_\ast>0$ with the following property. 
	If $\eps\in (0, \eps_\ast)$ and if $x_1, x_2\in X$ 
	are periodic points of ${(\varphi_t^X)}_{t\in\R}$ having the periods $T_1, T_2>0$ 
	such that $|T_1-T_2|\le\sqrt 2\eps$ and 
	\[ d_X(\varphi_t^X(x_1), \varphi_t^X(x_2))<\eps
	\quad\mbox{for all}\quad t\in [0, \min\{T_1, T_2\}], \]  
	then $T_1=T_2$ and the orbits of $x_1$ and $x_2$ under ${(\varphi_t^X)}_{t\in\R}$ are identical. 
\end{lemma}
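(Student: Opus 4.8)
The plan is to deduce the statement from the expansivity of the geodesic flow recorded just before it (from \cite[Lemma 1.5]{bowen}), after upgrading the hypothesis—closeness on the single interval $[0,\min\{T_1,T_2\}]$—to closeness along the whole real line with respect to a suitable time reparametrization. Since expansivity is an intrinsic statement on $X$, I would work throughout with $d_X$ and never pass to $\G$.

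First I would fix the constants. Let $T_{\min}>0$ be the length of the shortest periodic orbit; its positivity follows from Lemma \ref{sigma_0}, since a period $T_0$ of a point $\Gamma g$ gives $\gamma\in\Gamma\setminus\{e\}$ with $ga_{T_0}=\gamma g$, whence $d_\G(e,a_{T_0})=d_\G(\gamma g,g)>\sigma_0$, and as $h\mapsto d_\G(e,a_h)$ is continuous and vanishes at $0$ the period $T_0$ is bounded below by a positive constant. Let $\delta:=\tfrac12 T_{\min}$ and let $\eps_{\exp}>0$ be the expansivity constant associated with $\delta$: if $d_X(\varphi_t(y_1),\varphi_{\rho(t)}(y_2))\le\eps_{\exp}$ for all $t\in\R$ with $\rho\in C(\R,\R)$ increasing and $\rho(0)=0$, then $y_2=\varphi_{t_0}(y_1)$ for some $|t_0|\le\delta$. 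I would then set $\eps_\ast:=\min\{\eps_{\exp}/3,\ T_{\min}/\sqrt2\}$.

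The crux is the reparametrization. Assume without loss of generality $T_1\le T_2$, so the hypothesis reads $d_X(\varphi_t(x_1),\varphi_t(x_2))<\eps$ for $t\in[0,T_1]$. Define $\rho(t):=(T_2/T_1)\,t$; this is linear, increasing, and satisfies $\rho(0)=0$ and $\rho(t+T_1)=\rho(t)+T_2$. For $t\in[0,T_1]$ one has $|\rho(t)-t|=t(T_2-T_1)/T_1\le T_2-T_1\le\sqrt2\,\eps$; since the flow moves points at the speed induced by $a_t$, the bound $d_X(\varphi_t(x_2),\varphi_{\rho(t)}(x_2))\le|\rho(t)-t|/\sqrt2\le\eps$ holds (this is exactly why the hypothesis carries the factor $\sqrt2$), whence
\[ d_X(\varphi_t(x_1),\varphi_{\rho(t)}(x_2))\le d_X(\varphi_t(x_1),\varphi_t(x_2))+d_X(\varphi_t(x_2),\varphi_{\rho(t)}(x_2))<2\eps\quad (t\in[0,T_1]). \]
Now I would propagate this to all $t\in\R$ using the two periodicities: writing $t=t_0+nT_1$ with $t_0\in[0,T_1)$ and $n\in\Z$, the $T_1$-periodicity of $x_1$ gives $\varphi_t(x_1)=\varphi_{t_0}(x_1)$, while $\rho(t)=\rho(t_0)+nT_2$ together with the $T_2$-periodicity of $x_2$ gives $\varphi_{\rho(t)}(x_2)=\varphi_{\rho(t_0)}(x_2)$. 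Hence $d_X(\varphi_t(x_1),\varphi_{\rho(t)}(x_2))<2\eps\le\eps_{\exp}$ for every $t\in\R$.

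Finally I would invoke expansivity: it yields $x_2=\varphi_{t_0}(x_1)$ with $|t_0|\le\delta$, so the two orbits coincide. It then follows that $\varphi_{T_2}(x_1)=x_1$, so both $T_1$ and $T_2$ lie in $T_0\Z_{>0}$, where $T_0\ge T_{\min}$ is the prime period of the common orbit. Since $|T_1-T_2|\le\sqrt2\,\eps<\sqrt2\,\eps_\ast\le T_{\min}\le T_0$, the two multiples must agree, giving $T_1=T_2$. I expect the genuine obstacle to be the reparametrization step: one cannot simply use closeness at equal times $\varphi_t(x_1),\varphi_t(x_2)$ for all $t$, because the two periods differ and the comparison drifts out of phase; the linear reparametrization $\rho$ repairs the phase at the cost of the controlled error $|\rho(t)-t|\le\sqrt2\,\eps$, and verifying that the cited expansivity is of the reparametrized (Bowen--Walters) type, with the constants chosen above, is the delicate point. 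An alternative, more computational route—lifting to $\G$, writing $x_2=\Gamma g_1 c_ub_sa_r$ and analysing $a_{-t}(c_ub_sa_r)a_t=c_{ue^t}b_{se^{-t}}a_r$ to force $u=s=0$—is available, but propagating the $\G$-estimate past $t=T_1$ forces one to track the wandering element $\gamma\in\Gamma$, which is precisely what the intrinsic expansivity argument sidesteps.
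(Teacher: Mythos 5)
Your argument is correct, but it is not the paper's argument: the paper proves nothing here, deferring to \cite[Theorem 3.14]{HK}, and the proof there is essentially the ``more computational route'' you sketch in your closing sentence --- one lifts to $\PSL(2,\R)$, writes $x_2=\Gamma g_1c_ub_sa_r$ with $u,s,r$ small, tracks the element $\gamma\in\Gamma$ realizing the infimum in $d_X$ along $[0,\min\{T_1,T_2\}]$ (it is locally constant, hence constant, by the separation constant $\sigma_0$ of Lemma \ref{sigma_0}, which is exactly what tames the ``wandering'' $\gamma$ you were worried about), and uses $a_{-t}c_ub_sa_t=c_{ue^{t}}b_{se^{-t}}$ together with the two periodicity relations to force $u=s=0$ and then $r=T_2-T_1=0$. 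Your route instead upgrades the one-period closeness to closeness on all of $\R$ via the linear reparametrization $\rho(t)=(T_2/T_1)t$ and invokes reparametrized (Bowen--Walters) expansivity; this is sound: the propagation step is airtight because $\rho(t+T_1)=\rho(t)+T_2$, the citation \cite{bowen} that the paper itself makes just above the lemma is of exactly this reparametrized type and applies since the flow on compact $X$ is Anosov, and your endgame (both periods are positive multiples of the prime period, which is bounded below by $T_{\min}>0$, while they differ by less than $T_{\min}$) is clean. Two inputs must be quoted rather than assumed, though: the speed normalization $d_X(x,\varphi_s(x))\le |s|/\sqrt 2$, which is a property of the specific left-invariant metric constructed in \cite{HK} (the present paper only calls it ``natural'', so your explanation of the $\sqrt2$ in the hypothesis, while correct, rests on that source), and the existence of the uniform expansivity constant $\eps_{\exp}$, which makes your $\eps_\ast$ non-explicit --- whereas the group-theoretic proof produces an explicit $\eps_\ast$ in terms of $\sigma_0$, in keeping with a paper whose whole point is explicit constants and error bounds. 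In exchange, your proof is shorter, intrinsic to $(X,d_X)$, and generalizes verbatim to any expansive flow, so it is a genuinely different and legitimate derivation rather than a reproduction of the cited one.
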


\section{Periodic orbits with one single 2-antiparallel encounter}
Let us first recall from \cite{Huynh17} periodic orbits with one single 2-antiparallel encounter. It was shown that a given periodic orbit including  one single 2-antiparallel encounter has a partner orbit. The action difference between the orbit pair is estimated with an exponentially small error bound. Periodic orbits having small-angle self-crossing are special cases of this phenomenon.
The results in this section will be applied for the main results in Subsection \ref{4.1sub}.

\begin{theorem}\label{2anti} Suppose that $X=\Gamma\backslash\PSL(2,\R)$ is compact and let $\eps\in(0,\frac{\sigma_0}{8})$. 
	If a periodic orbit $c$ of the flow $(\varphi_t)_{t\in\R}$ on $X$
	with period $T>1$ has a $(2,\eps)$-antiparallel encounter, then it has a partner. Furthermore,
	let $x,y\in c$, $\T(y)=(u,s)_x\in\P_\eps(x)$ and $\varphi_{T_1}(x)=y, 0< T_1<T$. Then the partner is $\eps'$-partner with $\eps'=\eps+2(|u-se^{-T_1}|+|s-ue^{T_1-T}|)<8\eps$ and the action
	difference between the orbit pair satisfies
	
	\begin{equation}\label{TT'}
		\Big|\frac{T'-T}{2}-\ln(1+us)|<12\eps^2 (e^{-T_1}+ e^{T_1-T}),
	\end{equation}
	where $T'$ is the period of the partner. If $\eps \in (0,\frac{\eps_*}{18})$, then the partner orbit is unique.
\end{theorem}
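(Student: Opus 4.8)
The plan is to translate the hypotheses into $\PSL(2,\R)$ and then build the partner by reconnecting the two orbit stretches, solving a closing equation by hyperbolic contraction in the spirit of Lemmas \ref{anosov1}--\ref{clm}. Write $x=\Gamma g$, so that $y=\varphi_{T_1}(x)=\Gamma g a_{T_1}$, the periodicity reads $g a_T=\gamma g$ for some $\gamma\in\Gamma$, and the antiparallel encounter $\T(y)=(u,s)_x$ becomes
\[
g a_{T_1} d_\pi=\zeta g c_u b_s,\qquad \zeta\in\Gamma .
\]
Applying $\T$ to $\T(y)=\Gamma g c_u b_s$ gives $y=\Gamma g c_u b_s d_\pi$, while $\T(x)=\Gamma g d_\pi$; hence $y$ and $\T(x)$ are within $2\eps$, and likewise $\T(y)$ and $x$. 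By \eqref{tr2} the time reversal of the second leg $\{\varphi_t(y):t\in[0,T_2]\}$, with $T_2:=T-T_1$, is the genuine forward segment $\{\varphi_\tau(\T(x)):\tau\in[0,T_2]\}$, running from $\T(x)\approx y$ to $\T(y)\approx x$. I therefore take as the partner the orbit that follows the first leg from $x$ to $y$ and then this reversed second leg from $\T(x)$ to $\T(y)$; because the two encounter stretches are antiparallel, these segments join up, to within $2\eps$, into a closed pseudo-orbit of total time $\approx T$.

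First I would produce the exact periodic orbit shadowing this pseudo-orbit. The obstruction to simply flowing a point is that the $\eps$-sized mismatch $c_u b_s$ injected at the reconnection is amplified in the unstable direction along a forward leg; so instead I look for a genuine periodic point $w=\Gamma g d_\pi c_\sigma b_\eta$ (a perturbation of $\T(x)$ in the stable and unstable directions) together with $T'\approx T$ satisfying $w a_{T'}=\gamma' w$, where $\gamma'\in\Gamma$ is the element obtained by transporting $g d_\pi$ once around the reconnected loop (a fixed word in $\gamma$ and $\zeta$, whose conjugacy class is computed exactly as in Remark \ref{Alr} and the proof of Lemma \ref{clm}). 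Splitting the flow at the two legs of lengths $T_1$ and $T_2$ and using the commutation relations $a_t b_s=b_{se^{t}}a_t$, $a_t c_u=c_{ue^{-t}}a_t$ together with Lemma \ref{adpi}, the closing equation reduces to a low-dimensional system for $(\sigma,\eta,T')$ solved by the same contraction as in the Anosov closing lemma; the unstable coordinate comes out exponentially small, of order $\eps e^{-T}$. The period then satisfies $\tfrac{T'-T}{2}=\ln(1+us)+O\big(\eps^2(e^{-T_1}+e^{-T_2})\big)$, which is \eqref{TT'} since $e^{T_1-T}=e^{-T_2}$; the term $\ln(1+us)$ is exactly the factor picked up when $a_{T}$ is conjugated past $c_u b_s$, as in Lemmas \ref{anosov1} and \ref{clm}.

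Next I would verify that $c'$ is an $\eps'$-partner in the sense of Definition \ref{partnerdf1}. The decomposition of both periods into $L=2$ parts is the natural one given by the two legs, with the permutation fixing the first part and matching the second part through $\T$. Along the first leg the two orbits are $\eps$-close by construction; along the second leg I would propagate the reconnection mismatch through the flow, so that the deviation is governed by the damped quantities $|u-se^{-T_1}|$ and $|s-ue^{-T_2}|$, yielding $\eps'=\eps+2(|u-se^{-T_1}|+|s-ue^{T_1-T}|)$, and the crude bounds $|u|,|s|<\eps$ give $\eps'<8\eps$.

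Finally, for uniqueness I would argue that any two partners $c',c''$ produced this way both shadow the same pseudo-orbit, so they are mutually closer than $\eps_\ast$ for their entire (nearly equal) periods as soon as $\eps<\eps_\ast/18$, while $|T'-T''|$ is of order $\eps^2$; Lemma \ref{per-coinc} then forces $c'=c''$. The main obstacle throughout is the reconnection step: since the injected perturbation grows under the forward flow, the partner cannot be obtained by naive shadowing and one must solve the closing equation exactly, while simultaneously checking that the resulting conjugacy class $\{\gamma'\}_\Gamma$ differs from $\{\gamma\}_\Gamma$, so that $c'$ is a genuinely new orbit and not a reparametrization of $c$ or of $\T(c)$.
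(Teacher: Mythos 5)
Your proposal is correct and takes essentially the same route as the paper: the paper's own proof simply delegates existence to \cite[Theorem 9]{Huynh17}, whose construction is exactly your reconnection-via-time-reversal plus an explicit closing computation in $\PSL(2,\R)$ yielding the exact shift $\ln(1+\tilde u\tilde s)$ with $\tilde s=u-se^{-T_1}$, $\tilde u=s-ue^{T_1-T}$ (which is then compared with $\ln(1+us)$ to obtain \eqref{TT'}), and your uniqueness argument via Lemma \ref{per-coinc} is identical to the paper's. The only cosmetic difference is your choice of base point for the closing ansatz ($\T(x)$ rather than the point $y=\Gamma g a_{-(T-T_1)}$ used in Remark \ref{2antirm}), which merely shifts the stable/unstable coordinates by the encounter displacement, so your claimed size of the unstable correction should be read relative to that shifted ansatz.
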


\begin{proof} For the existence of a $T'$-periodic point $v$, whose orbit under the flow $(\varphi_t)_{t\in\R}$ is a $\eps'$-partner orbit, see \cite[Theorem 9]{Huynh17}. 
	It was shown that the action difference satisfies
	\begin{equation*}
		\Big|\frac{T'-T}{2}-\ln\big(1+(u-se^{-T_1})(s-ue^{T_1-T})\big)\Big|\leq |(u-se^{-T_1})(s-ue^{T_1-T})|e^{-T},
	\end{equation*}
which implies \eqref{TT'}. 
For the last assertion, it follows from \eqref{TT'} that $|T-T'|<30\eps^2$. 
Suppose that there is another partner orbit which has the same property, 
i.e., it is also $8\eps$-close to the original one and
its period called $T''$ satisfies $|T''-T|<30\eps^2$. 
Then these two partner orbits are $\eps_*$-close to each other for the whole time
and their periods satisfy
\[|T''-T'|\leq |T''-T|+|T'-T|\leq 60\eps^2 <\sqrt 2 \eps_*. \]
By Lemma \ref{per-coinc}, the partner orbits must coincide. 
\end{proof}

\begin{figure}[ht]
	\begin{center}
		\begin{minipage}{1\linewidth}
			\centering
			\includegraphics[angle=0,width=0.7\linewidth]{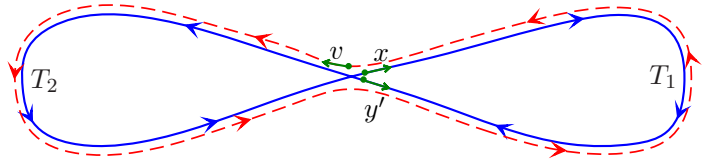}
		\end{minipage}
	\end{center}	\caption{Periodic orbit with a single 2-antiparallel encounter has a partner orbit.}\label{2ape}
\end{figure} 
\begin{remark}\label{2antirm}\rm According to the proofs of \cite[Theorem 9]{Huynh17} and the Anosov closing lemma I, we have
 $v=\Gamma ga_{-T_2}c_{ue^{-T_2}+\sigma}b_\eta$, where $T_2=T_1-T,\sigma,\eta\in\R$ satisfy
	\begin{equation}
		|\eta-\tilde s|<2\tilde s^2|\tilde u|+2|\tilde s|e^{-T}
\quad		\mbox{and}\quad |\sigma|<2|\tilde u| e^{-T}
	\end{equation}
with $\tilde s=u-se^{-T_1}$ and $\tilde u=s-ue^{-T_2}$.
{\hfill$\diamond$}\end{remark}
The next result is a new view of symbolic dynamics.
\begin{theorem}[Symbolic dynamics]\label{ccrm}
	In the setting of Theorem \ref{2anti}, let 	$g,h\in\PSL(2,\R)$ be such that $x=\Gamma g, y=\Gamma h$ and $hd_\pi=h'=gc_ub_s$.
	  If $\gamma_1,\gamma_2\in\Gamma$ such that
	$\gamma_1=ga_{T_1}h^{-1}$ and $\gamma_2=ha_{T_2}g^{-1}$,
	where $T_2=T-T_1$,
	then the original orbit corresponds to the conjugacy class $\{\gamma_1\gamma_2 \}_\Gamma$ and
	the partner orbit corresponds to the conjugacy class $\{\gamma_1^{-1}\gamma_2 \}_\Gamma$.
\end{theorem}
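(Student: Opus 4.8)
The plan is to handle the two orbits separately, using the bijection $\varsigma$ of Lemma \ref{cjo} to read each conjugacy class off from an explicit group representative of a point on the orbit. For the original orbit I would take the base point $x=\Gamma g$ of prime period $T=T_1+T_2$, so that by Lemma \ref{cjo} its class is $\{ga_Tg^{-1}\}_\Gamma$. The two defining relations $ga_{T_1}=\gamma_1h$ and $ha_{T_2}=\gamma_2g$ (which merely re-encode $\varphi_{T_1}(x)=y$ and $\varphi_{T_2}(y)=x$) then telescope:
\[
ga_Tg^{-1}=ga_{T_1}a_{T_2}g^{-1}=\gamma_1\,ha_{T_2}g^{-1}=\gamma_1\gamma_2 ,
\]
so the original orbit corresponds to $\{\gamma_1\gamma_2\}_\Gamma$. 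This part is routine.

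The substance is the partner. First I would record the algebraic effect of the antiparallel encounter: writing $k:=g^{-1}h$, the relation $hd_\pi=gc_ub_s$ together with $d_\pi^2=e$ gives $k=c_ub_sd_\pi$. Using this and the time-reversal commutation rules of Lemma \ref{adpi} ($a_td_\pi=d_\pi a_{-t}$, $b_td_\pi=d_\pi c_{-t}$, $c_td_\pi=d_\pi b_{-t}$), I would conjugate the candidate element back to the base point $g$:
\begin{align*}
g^{-1}(\gamma_1^{-1}\gamma_2)g
&=(g^{-1}h)\,a_{-T_1}\,(g^{-1}h)\,a_{T_2}=k\,a_{-T_1}\,k\,a_{T_2}\\
&=c_ub_s\,a_{T_1}\,b_{-u}c_{-s}\,a_{T_2},
\end{align*}
the last equality after moving the first $d_\pi$ past $a_{-T_1}$ and collapsing $d_\pi c_ub_sd_\pi=b_{-u}c_{-s}$. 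The point of this computation is that, since $|u|,|s|<\eps$, the right-hand side is $O(\eps)$-close to $a_{T_1}a_{T_2}=a_T$; thus $\gamma_1^{-1}\gamma_2$ is exactly the $\Gamma$-element whose normal form at $g$ shadows a closed trajectory of period $T'\approx T$ passing $\eps'$-near $x$.

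It then remains to match this element with the partner produced in Theorem \ref{2anti}. I would take the explicit representative of a point on the partner orbit supplied by Remark \ref{2antirm} and invoke the identification principle behind the Anosov closing lemma (Remark \ref{Alr}): the $\Gamma$-element attached to such a shadowing orbit is the \emph{same} $\zeta$ appearing in its approximate return relation $\tilde g\,a_{T'}=\zeta\,\tilde g\,c_{\sigma'}b_{\eta'}$, and this $\zeta$ is pinned down \emph{exactly} by the discreteness of $\Gamma$ (Lemma \ref{sigma_0}), so that the small corrections $c_{\sigma'},b_{\eta'}$ in the stable and unstable directions cannot alter it. Feeding the normal form displayed above into this relation identifies $\zeta$ with a $\Gamma$-conjugate of $c_ub_s a_{T_1}b_{-u}c_{-s}a_{T_2}$, hence with $\gamma_1^{-1}\gamma_2$, yielding the class $\{\gamma_1^{-1}\gamma_2\}_\Gamma$ for the partner.

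The main obstacle is precisely this last matching. One must verify that the time reversal inherent in the antiparallel reconnection inserts exactly one factor $d_\pi$ on the second stretch, so that the monodromy of the reconnected pseudo-orbit is $\gamma_1^{-1}\gamma_2$ and not the class $\{\gamma_1\gamma_2^{-1}\}_\Gamma$ of the time-reversed partner (these two are interchanged by $\gamma\mapsto\gamma^{-1}$ and cyclic conjugation). Tracking $d_\pi$ carefully through the products via Lemma \ref{adpi}, and confirming that the perturbations in the representative of Remark \ref{2antirm} leave the discrete $\Gamma$-element unchanged, is where the genuine work lies; the rest is the bookkeeping already carried out in the two displays above.
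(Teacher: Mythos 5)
Your proposal is correct and essentially reproduces the paper's proof: the telescoping argument giving $\{\gamma_1\gamma_2\}_\Gamma$ for the original orbit is identical, and your normal form $\gamma_1^{-1}\gamma_2=g\,c_ub_s\,a_{T_1}b_{-u}c_{-s}\,a_{T_2}\,g^{-1}=h'a_{T_1}b_{-u}c_{-s}a_{T_2}g^{-1}$ is exactly the element the paper reaches when it expands the closing-lemma monodromy $\zeta=\bar g\,a_T\,b_{-s'}c_{-u'}\,\bar g^{-1}$ at the representative $\bar g=h'b_{-s}a_{T_2}$ coming from Remark \ref{2antirm}, simplifies via Lemma \ref{adpi}, and identifies the partner's class through Remark \ref{Alr}. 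The only difference is the direction of the computation (you go from $\gamma_1^{-1}\gamma_2$ to the normal form and then match, the paper goes from $\zeta$ to $\gamma_1^{-1}\gamma_2$), which is not substantive; your heuristic aside that closeness of the normal form to $a_T$ already signals a shadowing closed orbit is dispensable, since the identification of $\zeta$ via the exact fixed-point relation in Remark \ref{Alr} is exact rather than approximate.
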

\begin{proof}Note that due to $\varphi_{T_1}(x)=y$ and $\varphi_{T_2}(y)=x$, there are $\gamma_1,\gamma_2\in\Gamma$ such that
	$\gamma_1=ga_{T_1}h^{-1}$ and $\gamma_2=ha_{T_2}g^{-1}$ as assumption. 
	Since $x=\Gamma g$ is a $T$-periodic orbit, we have 
	\[ \gamma =ga_Tg^{-1} \]
	for some $\gamma\in\Gamma$,
	and by Lemma \ref{cjo}, the orbit through $x$ (called  $c$) 
	corresponds to the conjugacy class ${\{\gamma\}}_\Gamma$:
	\[ \gamma_1\gamma_2=(ga_{T_1}h^{-1})(ha_{T_2}g^{-1})=ga_{T}g^{-1}=\gamma. \]
	This means that the orbit $c$ corresponds to the conjugacy class ${\{\gamma\}}_\Gamma
	={\{\gamma_1\gamma_2\}}_\Gamma$. Next, 	let  $\zeta\in\Gamma$ be such that 
	\[\zeta=\bar ga_T b_{-s'}c_{-u'} (\bar g)^{-1} \]
	for $\bar g=gc_ua_{-T_2}=h'b_{-s}a_{T_2}$. According to the proof of Theorem \ref{2anti} and Remark \ref{Alr},
	the partner orbit corresponds to the conjugacy class $\{\zeta\}_\Gamma$.  Now, 	
	\begin{eqnarray*}
		\zeta & = &
		h'b_{-s}a_{-T_2}a_T b_{-(u-s^{-T_1})}c_{-(s-ue^{-T_2})}a_{T_2}c_{-u}g^{-1}
		\\
		&=& h' a_{T_1} b_{-s e^{-T_1}}b_{-(u-s^{-T_1})}c_{-(s-ue^{-T_2})}
		c_{-ue^{-T_2}} a_{T_2}g^{-1}
		\\
		&=& h'a_{T_1}b_{-u}c_{-s}a_{T_2}g^{-1}
		\\
		&=&\gamma_1^{-1}\gamma_2,
	\end{eqnarray*}
	noting
	$h'a_{T_1}=\gamma_1^{-1}g'$ owing to $ga_{T_1}=\gamma_1 h$. Therefore,
	the partner orbit $c'$ corresponds to the conjugacy class $\{\zeta\}_\Gamma=\{\gamma_1^{-1}\gamma_2\}_\Gamma$, completing the proof.
\end{proof}

\section{Periodic orbits including 2 encounters responsible for the third order term}

Let us first review orbit pairs responsible for the cubic contribution to $K(\tau)$.
Note that a sufficiently long periodic orbit has a huge number of self-encounters which may involve arbitrarily many orbit stretches. Heusler, M\"uller et al. \cite{HMBH,mueller2005} show that only orbit pairs differing in two 2-encounters or in one single 3-encounter are responsible for the third order.  There are two ways of connections of orbit stretches forming two  2-encounters, namely serial and intertwined, whereas the two stretches of each encounter may be either close in phase space
(depicted by nearly parallel arrows $\psr$ or $\doublearrow[3]$), which is called {\em parallel encounter}, or
almost mutually time-reversed (like in $\asr$ or $\doublearrow[2]$), which is called {\em antiparallel encounter}.
In each way, therefore, there are three possibilities: both 2-encounters  are parallel-encounters
,
one 2-parallel encounter and one 2-antiparallel encounter, and
two 2-antiparallel encounters, i.e. there are totally six cases. 
Only three of them lead to (genuine) periodic orbits: two 2-antiparallel encounters serial, one parallel-encounter and one antiparallel encounters intertwined and two anti-parallel encounters intertwined,
which are responsible in the cubic order to the form factor.  The others form pseudo-periodic orbits and do not contribute to the spectral form factor.  

In this section we only consider periodic orbits including two 2-antiparallel encounters contributing to the third term of the spectral form factor. The case of one 3-parallel encounter is rigorously done in \cite[Section 3.3]{Huynh16}. Orbits with a single 3-antiparallel encounter can be done analogously. 

Throughout this section, we assume that the space $\Gamma\backslash\PSL(2,\R)$ is compact. 

\subsection{Periodic orbits including two 2-antiparallel encounters serial} \label{4.1sub}

In this subsection we consider periodic orbits having two 2-antiparallel encounters serial, which is so called {\em antiparallel-antiparallel serial} ({\em aas} for short) in \cite{mueller2005}. 
Periodic orbits with either two small-angle self-crossings in configuration space ($\doublearrow[2]$ and $\doublearrow[2]$) or
one small-angle self-crossing and one anti-parallel avoided self-crossing ($\doublearrow[2]$ and $\asr$), or
two anti-parallel avoided self-crossings ($\asr$ and  $\asr$) in configuration space are special cases of aas.

\begin{theorem}\label{aasthm} Let $\eps\in(0,\frac{\eps_0}{24})$ and let $c$ be  a $T$-periodic orbit of the flow $(\varphi_t)_{t\in\R}$ on $X$
	 having two $(2,\eps)$-antiparallel encounters serial. More precisely, 	let $x,y,z,w\in c$ and $T_1,T_2,T_3,T_4>0$ be such that $T_1+T_2+T_3+T_4=T$,  
	$\varphi_{T_1}(x)=y, \varphi_{T_2}(y)=z,
	\varphi_{T_3}(z)=w$ and $\varphi_{T_4}(w)=x$,
	and ${\mathcal T}(z)=(u_1,s_1)_y\in \P_\eps(y ), {\mathcal T}(w)=(u_2,s_2)_x\in \P_\eps(x)$ satisfying 
	\begin{eqnarray}\label{cdt0}
		|u_1|>6\eps e^{-T_2},\quad 
		|s_1|> 30\eps^3+13\eps e^{-T_1}+5\eps e^{-T_2}+3\eps e^{-T_3}.
	\end{eqnarray} 
	Then $c$ has a $20\eps$-partner orbit which differs in both encounters of the original orbit and  the action difference satisfies
\begin{equation}\label{T'-Tassthm}
	\Big| \frac{T'-T}{2}-\ln(1+u_1s_1)(1+u_2s_2)\Big|< \eps^2( 21e^{-T_1}+30 e^{-T_2}+12 e^{-T_3}+19 e^{-T_4}).
\end{equation} 
In addition, if $\eps\in (0,\frac{\eps_*}{40})$, then the partner is unique.
\end{theorem}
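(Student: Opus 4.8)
The plan is to construct the partner by reconnecting the orbit inside \emph{both} antiparallel encounters and then straightening the resulting almost-closed trajectory into a genuine periodic orbit with the Anosov closing lemma, using Theorem \ref{2anti} as the model for a single reconnection. First I would fix $x=\Gamma g$ and push the four flow relations into $\PSL(2,\R)$, choosing $\gamma_1,\dots,\gamma_4\in\Gamma$ and representatives $y=\Gamma h$, $z=\Gamma h_z$, $w=\Gamma h_w$ with $g a_{T_1}=\gamma_1 h$, $h a_{T_2}=\gamma_2 h_z$, and so on, so that by Lemma \ref{cjo} the original orbit carries the class $\{\gamma_1\gamma_2\gamma_3\gamma_4\}_\Gamma$. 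The antiparallel conditions $\T(z)=(u_1,s_1)_y$ and $\T(w)=(u_2,s_2)_x$ then read as matrix identities $h_z d_\pi = \gamma\, h\, c_{u_1} b_{s_1}$ and $h_w d_\pi = \gamma'\, g\, c_{u_2} b_{s_2}$; using $a_t d_\pi=d_\pi a_{-t}$, $b_t d_\pi = d_\pi c_{-t}$, $c_t d_\pi = d_\pi b_{-t}$ from Lemma \ref{adpi} I can eliminate the time-reversal factors and extract, near each encounter, the stable/unstable data that drives the reconnection.

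Second, since the configuration is serial the two short loops are the $T_2$- and $T_4$-stretches, and the partner is the orbit that traverses both of them with reversed orientation; at the group level this replaces $\gamma_1\gamma_2\gamma_3\gamma_4$ by $\gamma_1\gamma_2^{-1}\gamma_3\gamma_4^{-1}$ up to conjugacy and orientation, mirroring the single-encounter passage $\gamma_1\gamma_2\mapsto\gamma_1^{-1}\gamma_2$ of Theorem \ref{ccrm}. Concretely I would write the candidate as $\Gamma$ times a spliced product $g\, a_{\cdot} c_{\cdot} b_{\cdot}\, a_{\cdot} c_{\cdot} b_{\cdot}\, a_{\cdot}$ running through the two Poincar\'e sections, as in Remark \ref{2antirm}, check that it lands in $\P_{2\eps}$ of the relevant base point so that Lemma \ref{anosov1} (and Lemma \ref{anosov2} for the time-reversed stretch) applies, and verify segment by segment that it stays within $20\eps$ of the original, which exhibits the two orbits as a $20\eps$-orbit pair in the sense of Definition \ref{partnerdf1} with the short loops matched in the time-reversed clause.

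Third, for the action difference I expect the reconnected trajectory to close up only approximately, with the mismatch concentrated in the two encounters. Applying the closing lemma at each encounter yields effective coordinates $\tilde u_i,\tilde s_i$ of the form $\tilde s_i = u_i - s_i e^{-\cdot}$, $\tilde u_i = s_i - u_i e^{-\cdot}$ (as in Remark \ref{2antirm}), whose contributions to the half-period are additive: $\tfrac{T'-T}{2}\approx\ln(1+\tilde u_1\tilde s_1)+\ln(1+\tilde u_2\tilde s_2)=\ln[(1+\tilde u_1\tilde s_1)(1+\tilde u_2\tilde s_2)]$. Replacing each $\tilde u_i\tilde s_i$ by $u_i s_i$ and adding the per-encounter error bounds of Theorem \ref{2anti} across the four stretches produces \eqref{T'-Tassthm}; the product inside the logarithm is precisely the statement that the two logarithmic phases add.

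Finally, uniqueness follows the pattern of Theorem \ref{2anti}: two partners with the stated closeness have periods within $\sqrt2\,\eps_*$ and stay $\eps_*$-close for the whole time, so Lemma \ref{per-coinc} forces them to coincide once $\eps<\eps_*/40$. The hard part will be the coupling between the encounters. The lower bounds \eqref{cdt0} on $|u_1|$ and $|s_1|$ are what I would use to ensure that the $O(\eps e^{-T_i})$ and $O(\eps^3)$ perturbations incurred when splicing one encounter do not push the other encounter's piercing point out of its Poincar\'e section or change its coordinates beyond recognition; keeping every intermediate point inside radius $2\eps$ so that both closing lemmas remain valid, while simultaneously confining the cross term between the two $\ln(1+u_is_i)$ factors to the error budget of \eqref{T'-Tassthm}, is the main obstacle.
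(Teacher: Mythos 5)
Your overall route is essentially the paper's: apply Theorem \ref{2anti} to one encounter to obtain an intermediate genuine periodic orbit, verify that the other encounter survives with effective coordinates close to the swapped pair $(s_1,u_1)$, apply Theorem \ref{2anti} a second time, add the two logarithmic contributions to get the action difference, and deduce uniqueness from Lemma \ref{per-coinc}. Up to the bookkeeping of constants, those parts of your sketch would go through as in the paper (and your conjugacy-class guess $\{\gamma_1\gamma_2^{-1}\gamma_3\gamma_4^{-1}\}_\Gamma$ is, up to cyclic conjugation, the inverse of the paper's $\{\gamma_1^{-1}\gamma_4\gamma_3^{-1}\gamma_2\}_\Gamma$, so it is consistent ``up to orientation'' as you say).

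The genuine gap is the theorem's assertion that the partner \emph{differs in both encounters of the original orbit}: you never prove that the constructed orbit is distinct from the original, or from the intermediate Sieber--Richter partner that differs in only one encounter, and you misassign the role of the hypotheses \eqref{cdt0}. Keeping the second encounter's piercing point inside its Poincar\'e section after the first splice requires no lower bound at all: the paper computes the new coordinates explicitly in \eqref{tildeu1}--\eqref{tildes1} and finds $|\tilde u_1|,|\tilde s_1|<2\eps$ unconditionally, so $\T(\tilde y)\in\P_{2\eps}(\tilde z)$ automatically. Where \eqref{cdt0} actually enters is the distinctness argument: one writes the final periodic point as $v=\Gamma h\,c_{u_v}b_{s_v}a_{\tau_v}$, thereby exhibiting a point $\tilde v=(u_v,s_v)_y\in\P_{6\eps}(y)$ on the candidate partner, and then uses $|u_v|<6\eps e^{-T_2}<|u_1|$ together with the uniqueness of coordinates (Lemma \ref{sigma0u}) to conclude $\tilde v\neq\T(z)$, while the chain of estimates culminating in \eqref{eu1}, combined with the lower bound on $|s_1|$ and $|s_v-\eta_1|<6\eps e^{-T_1}$ from \eqref{uvsv}, gives $|s_v|\geq|s_1|-|s_v-\eta_1|-|\eta_1-s_1|>0$, hence $\tilde v\neq y$. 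Without this step the main claim is unproved: for all your argument shows, the ``reconnected'' orbit could coincide with the original or with a one-encounter partner, in which case it would contribute at second rather than third order; note also that a group-level word such as $\gamma_1^{-1}\gamma_4\gamma_3^{-1}\gamma_2$ does not by itself certify that the new conjugacy class differs from $\{\gamma_1\gamma_2\gamma_3\gamma_4\}_\Gamma$ --- that is precisely what the coordinate computation supplies.
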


\begin{proof} 
	The sketch of the proof is as follows; see Figure \ref{aasf} for an illustration. First, we apply Theorem \ref{2anti} for the left encounter to have one partner orbit, which is depicted by the dashed line in Figure \ref{aasf}\,(b). Next, we show that the new orbit admits one 2-antiparallel encounter; see Figure \ref{aasf}\,(c). Finally, we apply Theorem  \ref{2anti} again to get a partner orbit depicted by the dashed line in Figure \ref{aasf}\,(d).	
	
	\begin{figure}[ht]
		\begin{center}
			\begin{minipage}{1\linewidth}
				\centering
				\includegraphics[angle=0,width=1\linewidth]{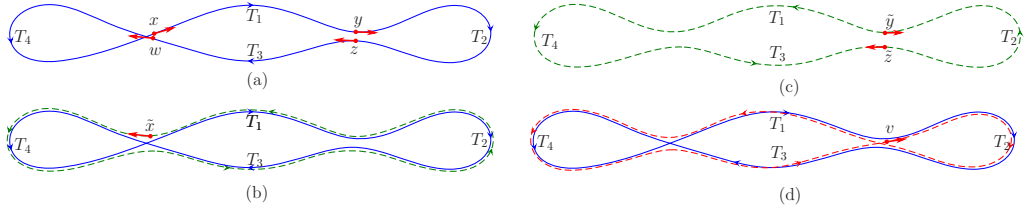}
			\end{minipage}
		\end{center}	\caption{Construction of partner orbit for a given periodic orbit with two 2-antiparallel encounters serial.}\label{aasf}
	\end{figure} 
	
Let $x=\Gamma g, y=\Gamma h, z=\Gamma k, w=\Gamma l$ for some
	$g,h,k,l\in\PSL(2,\R)$ and set $g'=gd_\pi, h'=hd_\pi, k'=kd_\pi, l'=ld_\pi$. 
	By hypothesis, $\varphi_{T_{123}}(x)=w$ and ${\mathcal T}(w)=(s_2,u_2)_{x}\in\P_\eps(x)$, where $T_{123}=T_1+T_2+T_3$. 
	Due to Theorem \ref{2anti}, 
	there are $\tilde x=\Gamma l' b_{-s_2}a_{-T_4}c_{\sigma_2}b_{\eta_2}=\Gamma g c_{u_2}a_{-T_4} c_{\sigma_2}b_{\eta_2}\in X$ and $\widetilde T>0$ such that
	\begin{equation}\label{TT'1}
		\Big|\frac{\widetilde  T-T}{2}-\ln(1+u_2s_2)\Big|\leq 12\eps^2 (e^{-T_4}+e^{-T_{123}}).
	\end{equation}
	Furthermore, 
	\begin{eqnarray}\label{9eps}
		d_X(\varphi_t(\tilde x), \varphi_t(w))< 7\eps\quad \mbox{for}\quad t\in [0, T_{4}]
	\end{eqnarray}
	and
	\begin{equation}\label{9eps1}
		d_X(\varphi_t(\tilde x), \varphi_t(x'))< 7\eps\quad \mbox{for}\quad t\in [T_{4},T]. 
	\end{equation}
	
	Next we show that the orbit of $\tilde x$ possesses one 2-antiparallel encounter.  
	We write
	\begin{eqnarray*}
		\T(\tilde z):=\varphi_{T_4+T_3}(\tilde x)&=&\Gamma l' b_{-s_2}a_{-T_4}c_{\sigma_2}b_{\eta_2} a_{T_4+T_3}\\
			&=&\Gamma l'b_{-s_2}a_{T_3} c_{\sigma_2 e^{T_3+T_4}} b_{\eta_2 e^{-T_3-T_4}}\\
		&=&\Gamma l'a_{T_3} b_{-s_2e^{-T_3}}c_{\sigma_2 e^{T_3+T_4}} b_{\eta_2 e^{-T_3-T_4}}\\
		&=&\Gamma k'b_{-s_2e^{-T_3}}c_{\sigma_2 e^{T_3+T_4}} b_{\eta_2 e^{-T_3-T_4}},
	\end{eqnarray*}
	using $\varphi_{T_3}(\T(w))=\T(z)$ and $b_{s}a_t=a_tb_{se^{-t}}$, $c_u a_t=a_t c_{ue^t}$ for all $u,s,t\in\R$. 
	Then 
	\begin{equation}\label{tildez}\tilde z=\Gamma k c_{s_2e^{-T_3}}b_{-\sigma_2 e^{T_3+T_4}} c_{-\eta_2 e^{-T_3-T_4}}; 
	\end{equation} recall \eqref{tr2}. 
	Now,
	\begin{eqnarray*}
		\T(\hat y)&:=&\varphi_{T_4+T_3+T_2}(\tilde x)
		= \Gamma h'b_{-s_2e^{-T_2-T_3}}c_{\sigma_2 e^{T_2+T_3+T_4}} b_{\eta_2 e^{-T_2-T_3-T_4}}\\
		&=& (\Gamma k c_{s_1}b_{u_1})b_{-s_2e^{-T_2-T_3}}c_{\sigma_2 e^{T_2+T_3+T_4}} b_{\eta_2 e^{-T_2-T_3-T_4}}\\
		&=& (\Gamma k c_{s_2e^{-T_3}}b_{-\sigma_2 e^{T_3+T_4}} c_{-\eta_2 e^{-T_3-T_4}})\\
		&&
		c_{\eta_2 e^{-T_3-T_4}}
		b_{\sigma_2 e^{T_3+T_4}}c_{-s_2e^{-T_3}+s_1}b_{u_1-s_2e^{-T_2-T_3}}c_{\sigma_2 e^{T_2+T_3+T_4}} b_{\eta_2 e^{-T_2-T_3-T_4}}\\
		&=&	(\Gamma k c_{s_2e^{-T_3}}b_{-\sigma_2 e^{T_3+T_4}} c_{-\eta_2 e^{-T_3-T_4}})
		c_{\tilde u_1}b_{\tilde s_1}a_{\tilde \tau_1},
	\end{eqnarray*}
	where	
	\begin{eqnarray}\notag 
		\tilde u_1 &=& s_1+\eta_2 e^{-T_3-T_4}  -s_2e^{-T_3}+\sigma_2 e^{T_2+T_3+T_4}+\frac{1}{1+\tilde\rho_1}\times  
		\\ \notag 
		&&\quad\  \times\,\big((s_1-s_2e^{-T_3})\sigma_2 e^{T_2+T_3+T_4}(u_1-s_2e^{-T_2-T_3})\\
		&&\quad \ -(s_1-s_2e^{-T_3}+\sigma_2 e^{T_2+T_3+T_4})\tilde\rho_1\big), \label{tildeu1}
		\\ \notag 
		\tilde	s_1 &=& u_1+\sigma_2 e^{T_3+T_4}-s_2e^{-T_2-T_3}+\eta_2 e^{-T_2-T_3-T_4}
		\\ \notag 
		&&\quad\,+\,\tilde\rho_1 \big((2+\tilde\rho_1)\eta_2 e^{-T_2-T_3-T_4}+\sigma_2 e^{T_3+T_4}+u_1-s_2e^{-T_2-T_3}\big)\\ \label{tildes1}
		&&\quad\,+\,
		(s_1-s_2e^{-T_3})\sigma_2 e^{T_3+T_4}(u_1-s_2e^{-T_2-T_3})(1+\tilde\rho_1),
		\\
		\tilde	\tau_1 &=& 2\ln(1+\tilde\rho_1),\label{tildetau}
	\end{eqnarray}
	with  \[\tilde\rho_1=\sigma_2 e^{T_2+T_3+T_4}(\sigma_2 e^{T_3+T_4}+u_1-s_2e^{-T_2-T_3})
	+(s_1-s_2e^{-T_3})\sigma_2 e^{T_3+T_4}\big(1+\sigma_2 e^{T_2+T_3+T_4}(u_1-s_2e^{-T_2-T_3})\big),\]
	owing to Lemma \ref{decompo}\,(a).
	A short calculation shows that  $|\tilde u_1|<2\eps$ and
	$|\tilde s_1|<2\eps$. 
	This means that $$\T(\tilde y)=\varphi_{-\tilde\tau_1}(\T(\hat y))=(\tilde u_1,\tilde s_1)_{\tilde z}\in \P_{2\eps}(\tilde z),$$ where $\tilde y:=\varphi_{\tilde \tau_1} (\hat y)$. 
	Apply Theorem \ref{2anti} to obtain $v=\tilde z a_{-\widetilde T_2}c_{\tilde u_1e^{-\widetilde T_2}+\sigma_1} b_{\eta_1}\in X$ and $T'\in \R$ such that 
	\begin{equation}\label{TT'2}
		\Big|\frac{T'-\widetilde T}{2}-\ln(1+ \tilde u_1\tilde s_1)\Big|\leq 28\eps^2(e^{-T_1-T_3-T_4+\tilde\tau_1}+e^{T_1+T_3+T_4-\tilde\tau_1-\widetilde T})
		<7\eps^2e^{-T_4}+30\eps^2 e^{-T_2}
	\end{equation}
	and \begin{eqnarray}\label{eta1hats1}
		|\eta_1-\hat s_1|&<&2\hat s_1^2|\hat u_1|+2|\hat s_1| e^{-\widetilde T}<30\eps^3+10\eps e^{-\widetilde T},\\
		|\sigma_1|&<& 6\eps e^{-\widetilde T}, 
	\end{eqnarray}
where 	$\hat u_1= \tilde s_1-\tilde u_1 e^{\widetilde T_2-\widetilde T}, \hat s_1=\tilde u_1-\tilde s_1 e^{-\widetilde T_2}$.
Furthermore, 
	\begin{eqnarray*}
		d_X(\varphi_t(v), \varphi_t(\tilde y))< 13\eps\quad \mbox{for}\quad t\in [0, T_{2}]
	\end{eqnarray*}
	and 
	\begin{equation*}
		d_X(\varphi_t(v), \varphi_t(\T(\tilde z)))< 13\eps\quad \mbox{for}\quad  t\in [T_{2},\widetilde T].
	\end{equation*}
	This means that the orbit of $v$ is $13\eps$-close to the orbit of $\tilde x$. Recalling from \eqref{9eps} and \eqref{9eps1} that the orbit of $\tilde x$ is $7\eps$-close to the orbit of $x$, we deduce that the orbit of $v$ is $20\eps$-close to the original one.

	Next, in order to establish an estimate for the action difference, observe that by \eqref{tildeu1}, \eqref{tildes1}, 	
	\begin{equation}\label{tildeus-}
			|\tilde u_1-s_1|<7\eps e^{-T_1}+2 \eps e^{-T_3},\quad
			|\tilde s_1-u_1|< \eps e^{-T_3}
	\end{equation}
	after a short calculation. This yields
	\[|\ln(1+\tilde u_1\tilde s_1)-\ln(1+u_1s_1)|\leq 21\eps^2e^{-T_1}+11\eps^2 e^{-T_3}   \] and hence 
	\begin{eqnarray}\label{T'tildeT2} 
		\Big| \frac{T'-\widetilde T}{2}-\ln(1+u_1s_1)\Big| \leq 21\eps^2 e^{-T_1}+30\eps^2 e^{-T_2}+11\eps^2 e^{-T_3}+7\eps^2 e^{-T_4},
	\end{eqnarray} 
	using \eqref{TT'2}. The estimate \eqref{T'-Tassthm} follows from \eqref{TT'1} and \eqref{T'tildeT2}.
	
	Next we are going to show that the partner orbit is different from the original one. For, we find a point in the partner orbit which lies in the Poincar\'e section of $y$ and is different from $y$ and ${\cal T}(z)$. Letting $\widetilde T_2=T_2-\tilde \tau_1$, we have 
	$\varphi_{\widetilde T-\widetilde T_2}(\tilde z)=\tilde y$ and $\varphi_{\widetilde T_2}(\tilde y)=\tilde z$. 
Recall $v=\tilde z a_{-\widetilde T_2}c_{\tilde u_1e^{-\widetilde T_2}+\sigma_1} b_{\eta_1}$. Using
	\eqref{tildez} and Lemma \ref{decompo}\,(a), we write
	\begin{eqnarray*}
		v&=&\tilde z a_{-\widetilde T_2}c_{\tilde u_1e^{-\widetilde T_2}+\sigma_1} b_{\eta_1}
		\\
		&=&\Gamma k c_{s_2e^{-T_3}}b_{-\sigma_2 e^{T_3+T_4}} c_{-\eta_2 e^{-T_3-T_4}}a_{-T_2+\tilde\tau_1}c_{\tilde u_1e^{-\widetilde T_2}+\sigma_1} b_{\eta_1}\\
		&=&(\Gamma ka_{-T_2})c_{s_2e^{-T_2-T_3}}
		b_{-\sigma_2 e^{T_2+T_3+T_4}} c_{-\eta_2 e^{-T_2-T_3-T_4}+\tilde u_1e^{-T_2}+\sigma_1 e^{-\tilde\tau_1}} b_{\eta_1 e^{\tilde\tau_1}} a_{\tilde\tau_1}
		\\
		&=&\Gamma h c_{u_v}b_{s_v}a_{\tau_v},
	\end{eqnarray*} 
	where
	\begin{eqnarray*}
		u_v&=&s_2e^{-T_2-T_3}+\frac{-\eta_2 e^{-T_2-T_3-T_4}+\tilde u_1e^{-T_2}+\sigma_1 e^{-\tilde\tau_1}}{1+\rho_v},\\
		s_v&=&(-\sigma_2 e^{T_2+T_3+T_4}+\eta_1 e^{\tilde\tau_1}+\eta_1 e^{\tilde\tau_1}\rho_v)(1+\rho_v),\\
		\tau_v&=&2\ln(1+\rho_v),
	\end{eqnarray*}
here
	\[\rho_v=(-\sigma_2 e^{T_2+T_3+T_4})(-\eta_2 e^{-T_2-T_3-T_4}+\tilde u_2e^{-T_2}+\sigma_1 e^{-\tilde\tau_1});\]
	recall $\tilde\tau_1$ from \eqref{tildetau}.
	A short calculation shows that 
	\begin{equation}\label{uvsv}
		|u_v|< 6\eps e^{-T_2},
		|s_v-\eta_1|<6\eps e^{-T_1},
	\end{equation}
which imply that $\tilde v=\varphi_{\tau_v}(z)=(u_v,s_v)_y\in \P_{6\eps}(y)$. By assumption, $\eps<\frac{\sigma_0}{24}$ and $\T(z)=(u_1,s_1)_y$
with $$|u_1|\geq 6\eps e^{-T_2}>|u_v|.$$
As a consequence of Lemma \ref{sigma0u}, we get $\tilde v\ne \T(z)$. It remains to check that $\tilde v\ne y$. 
	Recalling that $\hat s_1=\tilde u_1-\tilde s_1 e^{-\widetilde T_2}$, 
		it follows from \eqref{eta1hats1} and \eqref{tildeus-}  that
	\begin{eqnarray}\notag 
		|\eta_1-s_1|
		&\leq& |\eta_1-\hat s_1|+|\hat s_1-\tilde u_1|+|\tilde u_1-s_1| \\ \label{eu1}
		&<& 30\eps^3+7\eps e^{-T_1}+5\eps e^{-T_2}+3\eps e^{-T_3}.
	\end{eqnarray}	
Using  \eqref{uvsv}, \eqref{eu1}, together with the assumption \eqref{cdt0} we deduce
	\[|s_v|\geq |s_1|- |s_v-\eta_1|-|\eta_1-s_1|>0,\]
	which shows $\tilde v\ne y$. Consequently, the orbit of $v$ is different from the obit of $x$ and its time reverse. 
	
For the last assertion,  suppose that there is another $20\eps$-partner orbit which differs in both encounters of the original orbit. Then the two partner orbits are $40\eps$-close to each other for the whole time and the period difference $|T''-T'|<50\eps^2$, where $T''$ is the period of the new partner. Due to $\eps<\frac{\eps_*}{40}$, the two partners are 
$\eps_*$-close to each other for the whole time and 
$|T''-T'|<\eps_*$; so  they must be identical by Lemma \ref{per-coinc}. The proof is complete.
\end{proof}

\begin{remark} \rm (a)
	The orbit of $\tilde x$ is also a partner orbit of the original one, which is depicted by the dotted  line in Figure \ref{aasf}\,(b). However, this partner orbit differs only in one encounter and this orbit pair only contribute to the second order term of the spectral form factor as a Sieber-Richter pair.
	
	(b) If we first apply Theorem \ref{2anti} for the other encounter, then we  have another partner orbit for the original one. This orbit pair also contribute to the second order term of
	$K(\tau)$. Then, using the same argument of the previous proof, we get the same partner orbit.
	 
\end{remark}
The next result provides a new view of symbolic dynamics of orbit pair in the preceding theorem. 
\begin{theorem}\label{sblthm}In the setting of Theorem \ref{aasthm}, let $\gamma_1,\gamma_2,\gamma_3,\gamma_4\in\Gamma$ be such that $ga_{T_1}=\gamma_1 h, ha_{T_2}=\gamma_2 k, ka_{T_3}=\gamma_3 l, la_{T_4}=\gamma_4g$. 
	Then the original orbit and the partner orbit correspond to the conjugacy classes $\{\gamma_1\gamma_2\gamma_3\gamma_4\}_\Gamma$ and	$\{\gamma_1^{-1}\gamma_4\gamma_3^{-1}\gamma_2 \}_\Gamma$, respectively.
\end{theorem}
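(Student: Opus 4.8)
The plan is to track how each orbit stretch corresponds to a group element and then assemble the conjugacy class from the primitive generators $\gamma_1,\gamma_2,\gamma_3,\gamma_4$, exactly as was done in Theorem \ref{ccrm} for the single-encounter case. First I would verify the claim for the original orbit: since $ga_{T_1}=\gamma_1 h$, $ha_{T_2}=\gamma_2 k$, $ka_{T_3}=\gamma_3 l$, and $la_{T_4}=\gamma_4 g$, multiplying these relations telescopes to
\[
ga_{T}=ga_{T_1}a_{T_2}a_{T_3}a_{T_4}=\gamma_1\gamma_2\gamma_3\gamma_4\, g,
\]
so by Lemma \ref{cjo} the orbit $c$ through $x=\Gamma g$ corresponds to the conjugacy class $\{\gamma_1\gamma_2\gamma_3\gamma_4\}_\Gamma$.

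For the partner orbit, I would follow the two-step construction from the proof of Theorem \ref{aasthm} and apply the symbolic bookkeeping of Theorem \ref{ccrm} and Remark \ref{Alr} at each stage. The first application of Theorem \ref{2anti} reconnects the right (outer) encounter, which involves the stretches with labels $T_4$ and $T_{123}=T_1+T_2+T_3$; by the single-encounter symbolic dynamics this replaces the factor pairing $(\gamma_4)(\gamma_1\gamma_2\gamma_3)$ by $(\gamma_4)(\gamma_1\gamma_2\gamma_3)$ reconnected so that the intermediate orbit $\tilde x$ corresponds to $\{\gamma_1\gamma_2\gamma_3\gamma_4^{-1}\}_\Gamma$ after appropriate relabelling — more precisely, I would identify which two generators play the roles of $\gamma_1^{-1}$ and $\gamma_2$ in the notation of Theorem \ref{ccrm}. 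The second application reconnects the left (inner) encounter of $\tilde x$, built on the stretches of the reconnected orbit; applying Theorem \ref{ccrm} again inverts the corresponding generator. The cleanest route is to compute $\zeta$ directly as in the proof of Theorem \ref{ccrm}: starting from $\bar g=\tilde z a_{-\widetilde T_2}$ and the explicit expression $v=\tilde z a_{-\widetilde T_2}c_{\tilde u_1 e^{-\widetilde T_2}+\sigma_1}b_{\eta_1}$, I would track $v a_{T'}=\zeta v$ and simplify using $\varphi_{T_3}(\T(w))=\T(z)$, the decomposition Lemma \ref{decompo}\,(a), and the commutation relations $b_s a_t=a_t b_{se^{-t}}$, $c_u a_t=a_t c_{ue^t}$, together with the defining relations $ga_{T_1}=\gamma_1 h$ etc.

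The main obstacle will be the middle step: writing $\zeta$ as an explicit word in $g,h,k,l,a_{T_i},b_\bullet,c_\bullet$ and then collapsing the long string of shear factors $b_\bullet,c_\bullet$ so that all of them cancel, leaving only a product of the $\gamma_i$ and their inverses. This cancellation is precisely the mechanism seen in the chains of equalities in the proof of Lemma \ref{clm} and Theorem \ref{ccrm}, where the hyperbolic $a_{T_i}$ factors conjugate the shears into mutually cancelling pairs; the difficulty here is that there are two nested reconnections, so I must carry the residual shear parameters $\sigma_i,\eta_i$ from the first reconnection through the second and confirm they drop out. Once the word reduces, matching it against $\gamma_1=ga_{T_1}h^{-1}$, $\gamma_2=ha_{T_2}k^{-1}$, $\gamma_3=ka_{T_3}l^{-1}$, $\gamma_4=la_{T_4}g^{-1}$ should give $\zeta=\gamma_1^{-1}\gamma_4\gamma_3^{-1}\gamma_2$ up to cyclic conjugation, whence by Lemma \ref{cjo} the partner corresponds to $\{\gamma_1^{-1}\gamma_4\gamma_3^{-1}\gamma_2\}_\Gamma$, completing the proof.
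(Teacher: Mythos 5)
Your first step is correct and coincides with the paper's: telescoping the four relations gives $ga_T=ga_{T_1}a_{T_2}a_{T_3}a_{T_4}=\gamma_1\gamma_2\gamma_3\gamma_4\,g$, and Lemma \ref{cjo} then identifies the class of the original orbit.

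For the partner orbit, however, there is a genuine gap, and it sits exactly at the one point where you commit to a concrete claim. You assert that the intermediate orbit $\tilde x$ corresponds to $\{\gamma_1\gamma_2\gamma_3\gamma_4^{-1}\}_\Gamma$; this inverts the wrong generator. In Theorem \ref{ccrm} the factor that gets inverted is the deck transformation of the loop running from the \emph{base point of the Poincar\'e section} to the piercing point, i.e.\ $\gamma_1=ga_{T_1}h^{-1}$ when $\T(y)=(u,s)_x\in\P_\eps(x)$. In the first step of Theorem \ref{aasthm} the section is based at $x=\Gamma g$ and the piercing point is $\T(w)$ with $\varphi_{T_{123}}(x)=w$, so the element playing the role of $\gamma_1$ is $ga_{T_{123}}l^{-1}=\gamma_1\gamma_2\gamma_3$ and the element playing the role of $\gamma_2$ is $la_{T_4}g^{-1}=\gamma_4$; hence the orbit of $\tilde x$ corresponds to $\{(\gamma_1\gamma_2\gamma_3)^{-1}\gamma_4\}_\Gamma=\{\gamma_3^{-1}\gamma_2^{-1}\gamma_1^{-1}\gamma_4\}_\Gamma$. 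Your class is the class of the \emph{inverse} word, i.e.\ of the time-reversed intermediate orbit; since a hyperbolic element of a cocompact Fuchsian group is in general not conjugate in $\Gamma$ to its inverse, these are different conjugacy classes, and the error does not wash out: propagating your word consistently through the second reconnection lands on $\{\gamma_2^{-1}\gamma_3\gamma_4^{-1}\gamma_1\}_\Gamma=\{(\gamma_1^{-1}\gamma_4\gamma_3^{-1}\gamma_2)^{-1}\}_\Gamma$, the class of the time reversal of the constructed partner, not the class asserted in the theorem. The hedge ``after appropriate relabelling'' is precisely the issue that has to be settled, and settling it is the content of the proof.

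Your fallback --- computing $\zeta$ directly from $va_{T'}=\zeta v$ --- could be made to work, but it is left entirely unexecuted (``should give''), and it is also not what the paper does: no new word computation is needed. After the (correctly identified) first step, the paper merely rotates the word cyclically so that the generator attached to the loop of the second encounter stands in front, $\{\gamma_3^{-1}\gamma_2^{-1}\gamma_1^{-1}\gamma_4\}_\Gamma=\{\gamma_2^{-1}\,\gamma_1^{-1}\gamma_4\gamma_3^{-1}\}_\Gamma$, and applies Theorem \ref{ccrm} a second time to $\tilde x$, using the encounter $\T(\tilde y)=(\tilde u_1,\tilde s_1)_{\tilde z}\in\P_{2\eps}(\tilde z)$ established in the proof of Theorem \ref{aasthm}, which inverts the leading factor: $\{\gamma_2\gamma_1^{-1}\gamma_4\gamma_3^{-1}\}_\Gamma=\{\gamma_1^{-1}\gamma_4\gamma_3^{-1}\gamma_2\}_\Gamma$. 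If you do insist on the direct route, Remark \ref{Alr} is what guarantees that the residual shear decorations $\sigma_i,\eta_i$ carried by $\tilde z$ and $v$ do not affect the resulting conjugacy class; your plan correctly flags this cancellation as the main obstacle but does not overcome it. In summary: the telescoping half stands, but the partner half needs both the correct role identification in Theorem \ref{ccrm} and an actual execution of the second step.
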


\begin{proof} First observe that $a_{T_1}=g^{-1}\gamma_1 h,
	a_{T_2}=h^{-1}\gamma_2 k, a_{T_3}=k^{-1}\gamma_3l,a_{T_4}=l^{-1}\gamma_4 g$ leads to
	\[ga_T=g a_{T_1}a_{T_2} a_{T_3}a_{T_4}
	=g(g^{-1}\gamma_1 h)(h^{-1}\gamma_2 k)(k^{-1}\gamma_3l)(l^{-1}\gamma_4 g)
	=\gamma_1\gamma_2\gamma_3\gamma_4 g.\]
	This yields the orbit of $x=\Gamma g$ under the flow $(\varphi_t)_{t\in\R}$, which is the original orbit, corresponds to the conjugacy class $\{\gamma_1\gamma_2\gamma_3\gamma_4\}_\Gamma$ by the definition of the mapping $\varsigma$ in \eqref{varsig}. 
	
	Next,  due to $ga_{T_1+T_2+T_3}=\gamma_1\gamma_2\gamma_3l$
	and $la_{T_4} =\gamma_4g$,  the orbit of $\hat x$ corresponds to the 
	conjugacy class \begin{eqnarray*} 
		\{\zeta\}_\Gamma=\{(\gamma_1\gamma_2\gamma_3)^{-1}\gamma_4 \}_\Gamma=\{\gamma_3^{-1}\gamma_2^{-1}\gamma_1^{-1}\gamma_4\}_\Gamma
		=\{\gamma_2^{-1}\gamma_1^{-1}\gamma_4\gamma_3^{-1}\}_\Gamma.
	\end{eqnarray*} 
by Theorem \ref{ccrm}. Similarly, the orbit of $v$ corresponds to the conjugacy class
	\begin{eqnarray*}
		\{\xi\}_\Gamma =\{({\gamma_2^{-1}})^{-1} \gamma_1^{-1}\gamma_4\gamma_3^{-1}\}_\Gamma
		=\{\gamma_2\gamma_1^{-1}\gamma_4\gamma_3^{-1} \}_\Gamma=\{\gamma_1^{-1}\gamma_4\gamma_3^{-1}\gamma_2\}_\Gamma,
	\end{eqnarray*} 
as was to be shown.
\end{proof} 
The following corollary considers periodic orbits of the geodesic flow having two small-angle self-crossings. 
\begin{corollary}
Suppose that all elements in $\Gamma\setminus\{e\}$ are hyperbolic. Suppose that $T$-periodic orbit of the geodesic flow ${(\varphi^\X_t)}_{t\in\R}$ on $\X=T^1(\Gamma\backslash\H^2)$
crosses itself in configuration space at a time $T_1$, at an  angle $\theta_1$, creates a loop of length $T_2$ and then crosses itself again after a time $T_3$ at an angle $\theta_2$,
and creates another loop of length $T_4$ with $T_1+T_2+T_3+T_4=T$.
If  $0<\phi<\frac13$
for $\phi=\max\{\pi-\theta_1,\pi-\theta_2\}$, 
then it has a $36|\sin(\phi/2)|$-partner orbit; the period of the partner orbit denoted by $T'$ satisfies 
\begin{align}\notag
	\Big|\frac{T'-T}{2}-&\ln\left(1+\sin^2(\phi_1/2)\right)\left(1+\sin^2(\phi_2/2)\right)\Big|\\
	&<\sin^2(\phi/2)(21 e^{-T_1}+31 e^{-T_2}+13 e^{-T_3}+19 e^{-T_4}).\label{T'-Tcr}
\end{align}
Furthermore, if $\Gamma\backslash\H^2$ is compact and $\phi<\frac{\eps_*}{20}$ then the partner is unique. 
\end{corollary}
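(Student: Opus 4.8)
The plan is to obtain the corollary as the configuration-space specialization of Theorem \ref{aasthm}: after transporting the orbit to $X=\Gamma\backslash\PSL(2,\R)$ through the isometry $\Xi$, the two self-crossings play the role of the two antiparallel encounters, so the only genuine task is to manufacture the data $\T(z)=(u_1,s_1)_y$ and $\T(w)=(u_2,s_2)_x$ required by that theorem and to read off the coordinates in terms of $\phi_1=\pi-\theta_1$ and $\phi_2=\pi-\theta_2$. I would place the points as in Theorem \ref{aasthm}, namely $y=\varphi_{T_1}(x)$, $z=\varphi_{T_1+T_2}(x)$, $w=\varphi_{T_1+T_2+T_3}(x)$ and $x=\varphi_T(x)$, so that the first crossing is the coincidence of the base points of $y$ and $z$, and the second that of $w$ and $x$.

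The heart of the argument is the reduction of each crossing to an antiparallel encounter. Feeding the first crossing into Lemma \ref{selfcthm} gives $z=\Gamma h\,d_{\theta_1}$ (up to the branch $\pm\theta_1$) for $y=\Gamma h$; applying $\T(\cdot)=\Gamma(\cdot)d_\pi$ and $d_\alpha d_\beta=d_{\alpha+\beta}$ then yields $\T(z)=\Gamma h\,d_{\theta_1+\pi}$. Decomposing the rotation by Lemma \ref{decompo}(a) as $d_{\theta_1+\pi}=c_{u_1}b_{s_1}a_{t_1}$ reads off the piercing coordinates; with $\theta_1=\pi-\phi_1$ a short computation gives $|u_1|,|s_1|\le\tan(\phi_1/2)$ and the product $u_1s_1=\sin^2(\phi_1/2)$, while the residual factor $a_{t_1}$ with $t_1=2\ln\cos(\phi_1/2)=O(\sin^2(\phi/2))$ is cleared by replacing $z$ with $\varphi_{t_1}(z)$, so that $\T(\varphi_{t_1}(z))=(u_1,s_1)_y$ exactly. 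The identical reduction at the second crossing produces $\T(w)=(u_2,s_2)_x$ with $|u_2|,|s_2|\le\tan(\phi_2/2)$ and $u_2s_2=\sin^2(\phi_2/2)$. Choosing $\eps=\tan(\phi/2)$ then bounds all four coordinates by $\eps$, and $\phi<1/3$ keeps $\eps$ inside the smallness range demanded in Theorem \ref{aasthm}.

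With the encounters in hand I would check the hypotheses \eqref{cdt0}: the supplementary estimate $e^{-T_2}<\cos^2(\theta_1/2)=\sin^2(\phi_1/2)$ from Lemma \ref{selfcthm} forces $|u_1|=\tan(\phi_1/2)>6\eps\,e^{-T_2}$, and in the relevant regime of long connecting stretches it likewise secures the lower bound on $|s_1|$. Theorem \ref{aasthm} then delivers a $20\eps$-partner; since $20\tan(\phi/2)<36\sin(\phi/2)$ for $\phi<1/3$, this is in particular a $36|\sin(\phi/2)|$-partner. Substituting $u_is_i=\sin^2(\phi_i/2)$ turns the leading term of \eqref{T'-Tassthm} into $\ln(1+\sin^2(\phi_1/2))(1+\sin^2(\phi_2/2))$, while $\eps^2=\tan^2(\phi/2)=\sin^2(\phi/2)+O(\sin^4(\phi/2))$ together with the $O(\sin^2(\phi/2))$ drift of $T_1,\dots,T_4$ caused by the shifts $t_1,t_2$ converts the coefficients $(21,30,12,19)$ of \eqref{T'-Tassthm} into the slightly larger $(21,31,13,19)$ of \eqref{T'-Tcr}. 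Finally $\phi<\eps_*/20$ forces $\eps=\tan(\phi/2)$ below the threshold $\eps_*/40$ of the uniqueness clause of Theorem \ref{aasthm}, so that clause (and with it Lemma \ref{per-coinc}) applies without change.

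The step I expect to be most delicate is precisely this reduction: one must fix the correct branch $\pm\theta_i$ in Lemma \ref{selfcthm} and follow the orientation faithfully through the time reversal and the decomposition Lemma \ref{decompo}, so that the signs of $u_i$ and $s_i$ — and hence the sign inside $\ln(1+\sin^2(\phi_i/2))$ — are pinned down unambiguously; and one must verify that the auxiliary time shifts $t_1,t_2$ perturb the four stretch lengths only at order $\sin^2(\phi/2)$, so that their effect is swallowed by the error term \eqref{T'-Tcr}. Everything past the reduction is a direct quotation of Theorem \ref{aasthm}.
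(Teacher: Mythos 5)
Your route is in substance the paper's own: transfer the orbit to $X=\Gamma\backslash\PSL(2,\R)$ via $\Xi$, convert each self-crossing into a $2$-antiparallel encounter by combining Lemma \ref{selfcthm} with the time reversal and the decomposition of the resulting rotation via Lemma \ref{decompo}\,(a), absorb the residual factor $a_{\tau_i}$ by sliding $z$ and $w$ along the flow and redefining the stretch lengths (this is exactly the paper's $\widetilde T_2=T_2+\tau_1$, $\widetilde T_3=T_3-\tau_1+\tau_2$, $\widetilde T_4=T_4-\tau_2$, which also accounts for the coefficient drift $(21,30,12,19)\to(21,31,13,19)$), and then quote Theorem \ref{aasthm} with $\eps$ of size $\sin(\phi/2)$ (the paper takes $\eps=\frac65|\sin(\phi/2)|$ where you take $\tan(\phi/2)$; the bookkeeping $20\eps<36|\sin(\phi/2)|$ is equivalent). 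However, your central computation is wrong: you claim $u_is_i=\sin^2(\phi_i/2)$. Whatever branch $d_{\pm\theta_i}$ is fixed, writing a rotation as $c_ub_sa_t$ forces $us=\frac{c}{a}\cdot ab=bc=ad-1$, and for $d_{\pm\phi_i}$ one has $ad=\cos^2(\phi_i/2)$, hence $u_is_i=-\sin^2(\phi_i/2)$ on \emph{either} branch --- exactly the paper's explicit coordinates $u_i=\tan(\phi_i/2)$, $s_i=-\sin(\phi_i/2)\cos(\phi_i/2)$. So the step you yourself flagged as delicate (pinning the sign through the branch choice) cannot succeed: the main term produced by Theorem \ref{aasthm} is $\ln\big(1-\sin^2(\phi_1/2)\big)\big(1-\sin^2(\phi_2/2)\big)=\ln\cos^2(\phi_1/2)+\ln\cos^2(\phi_2/2)$, consistent with the Sieber--Richter formula $\ln\big(1-(1+e^{-T_1})(1+e^{-(T-T_1)})\sin^2(\phi/2)\big)$ quoted in the introduction. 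The ``$+$'' inside \eqref{T'-Tcr} disagrees with the paper's own computed $u_i,s_i$; your proposal does not expose this tension but instead ratifies it via an incorrect determinant count, which is a genuine gap.

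The second soft spot is the verification of hypothesis \eqref{cdt0} of Theorem \ref{aasthm}. Your bound $|u_1|>6\eps e^{-T_2}$ via the crossing estimate \eqref{sc2} (i.e.\ $e^{-T_2}<\cos^2(\theta_1/2)=\sin^2(\phi_1/2)$) is correct, and is in fact more careful than the paper's proof, which never checks \eqref{cdt0} at all before invoking the theorem. But the companion condition $|s_1|>30\eps^3+13\eps e^{-T_1}+5\eps e^{-T_2}+3\eps e^{-T_3}$ is not obtainable from the stated hypotheses: Lemma \ref{selfcthm} controls only the loop lengths $T_2$ and $T_4$, and says nothing about the connecting stretches $T_1,T_3$; moreover, if $\phi_1\ll\phi$ then $|s_1|=\sin(\phi_1/2)\cos(\phi_1/2)$ may already fall below the $30\eps^3$ term with $\eps\sim\tan(\phi/2)$. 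Your phrase ``in the relevant regime of long connecting stretches'' is an added hypothesis, not a deduction; to make the reduction honest one must either import \eqref{cdt0}-type assumptions into the corollary or restrict to comparable angles and stretches bounded below. A last, minor point: with your choice $\eps=\tan(\phi/2)>\phi/2$, the condition $\phi<\eps_*/20$ does not strictly give $\eps<\eps_*/40$, so the uniqueness clause needs a marginally smaller angle threshold (the paper's choice $\eps=\frac65|\sin(\phi/2)|\le\frac{3\phi}{5}$ has the same borderline flavour).
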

\begin{proof}
	
	Let  ${\mathtt x}\in {\mathcal X}=T^1(\Gamma\backslash\H^2)$ 
	be a $T$-periodic point of the flow $(\varphi_t^\X)_{t\in\R}$ and let
	\begin{equation}\label{seya}
		\varphi_{T_1}^{{\mathcal X}}({\mathtt x})={\mathtt y},
		\quad\varphi_{T_2}^{{\mathcal X}}({\mathtt y})={\mathtt z},\quad \varphi_{T_3}^{{\mathcal X}}({\mathtt z})={\mathtt w},
		\quad\varphi_{T_4}^{{\mathcal X}}({\mathtt w})={\mathtt x}.
	\end{equation} 	
Let $\phi_1=\pi-\theta_1, \phi_2=\pi-\theta_2$, $\phi=\max\{\phi_1,\phi_2\}$ and  assume that $|\phi|<\frac13$. Then in particular 
	\begin{equation}\label{sinphi/2} 
		\Big|\sin\Big(\frac{\phi_i}{2}\Big)\Big|\le\frac{|\phi_i|}{2}
		<\frac16
	\end{equation} 
	holds for $i=1,2$. 
	Set $x=\Xi({\mathtt x}), y=\Xi ({\mathtt y}), z=\Xi({\mathtt z})$, and $w=\Xi({\mathtt w})$; recall the isometry $\Xi: T^1(\Gamma\backslash\H^2)\to \Gamma\backslash\PSL(2,\R)$. It follows from Theorem  \ref{selfcthm} that
	\begin{equation*}
		\mbox{either}\quad \Gamma h=\Gamma k d_{\theta_1} \quad\mbox{or}\quad \Gamma h=\Gamma kd_{-\theta_1}
	\end{equation*}
and 
\begin{equation*}
	\mbox{either}\quad \Gamma l=\Gamma g d_{\theta_2}\quad \mbox{or}\quad\Gamma l=\Gamma g d_{-\theta_2}
\end{equation*}
	with some $g,h,k,l\in\PSL(2,\R)$ such that $\Gamma g=x,\Gamma h=y, \Gamma k=z$ and $\Gamma l=w$.  We only consider 
	$\Gamma h=\Gamma k d_{\theta_1}$ and $\Gamma l=\Gamma g d_{\theta_2}$, the other cases are similar. Define 
	$z'={\mathcal T}(z)$ and $w'={\mathcal T} (w)$; recall the notation ${\mathcal T}$ from Definition \ref{tr}. Then
	\[z'=\Gamma h d_{\pi-\theta_1}=\Gamma hd_{\phi_1} \quad \mbox{and}\quad w'=\Gamma g d_{\pi-\theta_2}=\Gamma g d_{\phi_2}.\]
	For $i=1,2$, apply Lemma \ref{decompo}\,(a) to write 
	\begin{equation}\label{dexpress} 
		d_{\phi_i}=c_{u_i}b_{s_i} a_{\tau_i},
	\end{equation} 
	where 
	\begin{equation}\label{uisi} \tau_i=2\ln(\cos(\phi_i/2)),\quad u_i=\tan(\phi/2),\quad s_i=-\sin(\phi_i/2)\cos(\phi_i/2).
	\end{equation} 
	Owing to (\ref{sinphi/2}), we have 
	\begin{equation}\label{cosphi/2}
		\cos\Big(\frac{\phi_i}{2}\Big)>\frac{5}{6}.
	\end{equation} 
Define $\eps=\frac{6}{5}|\sin(\phi/2)$. Observe that 
	\begin{equation}\label{tausu-bd}  
		|u_i|=|\tan(\phi_i/2)|\le \frac65|\sin(\phi_i/2)|\leq \eps,
		\ |s_i|=|\sin(\phi_i/2)\cos(\phi_i/2)|\le |\sin(\phi_i/2)|<\eps,
	\end{equation} 
	and 
	\[ |\tau_i|=|\ln(1-\sin^2(\phi_i/2))|\le 2\sin^2(\phi_i/2)
	\le \frac12\, \eps^2, \]
	due to $|\ln(1+z)|\le 2|z|$ for $|z|\le 1/2$. Denote 
	$\tilde{z}=\varphi_{\tau_1}(z)$ and 
	$\tilde{w}=\varphi_{\tau_2}(w)$. This leads to
	\begin{align*}
		{\mathcal T}(\tilde z)&=\varphi_{-\tau_1}(z')=\Gamma k' a_{-\tau_1}=\Gamma h c_{u_1}b_{s_1}=(u_1,s_1)_y\in \P_\eps(y),
		\\
		{\mathcal T}(\tilde w)&=\varphi_{-\tau_2}(w')=\Gamma l' a_{-\tau_2}=\Gamma g  c_{u_2}b_{s_2}=(u_2,s_2)_x\in \P_\eps(x),
\end{align*}
	using \eqref{tr2}. Define $\widetilde T_2=T_2+\tau_1, \tilde T_3=T_3-\tau_1+\tau_2,
	\widetilde T_4=T_4-\tau_2,\widetilde T_1=T_1$. Then
	$T_1+\widetilde T_2+\widetilde T_3+\widetilde T_4=T$, 
	$\varphi_{T_1}(x)=y,
	\varphi_{\widetilde T_2}(y)=\tilde z,
	\varphi_{\widetilde T_3}(z)=\tilde w$, and 
	$\varphi_{\widetilde T_4}(\tilde w)=x$.  
	  We now apply Theorem \ref{2anti} with $\eps=\frac{6}{5}|\sin(\phi/2)|$  to have a partner orbit, which is  $33|\sin(\phi/2)|$-close to the original one and has period $T'$ satisfying 	  
	  \begin{align*}
	  	\Big| \frac{T'-T}{2}-\ln(1+\sin^2(\phi_1/2))(1+\sin^2(\phi_2/2))\Big|&< \eps^2(21 e^{-T_1}+30 e^{-\widetilde T_2}+12 e^{-\widetilde T_3}+19 e^{-\widetilde T_4})\\
	  	&<\eps^2(21 e^{-T_1}+31 e^{-T_2}+13e^{-T_3}+19 e^{-T_4}),
	  \end{align*} 
  which is
	  \eqref{T'-Tcr}.
	
	For the last assertion, if $\phi<\frac{\eps_*}{12}$,
	 then $\eps=\frac{6}{5}|\sin(\phi/2)|<\frac{\eps_*}{20}$ and whence the partner orbit is unique according to Theorem \ref{aasthm}; see Figure \ref{2sacfg} for an illustration.
\end{proof}

	\begin{figure}[ht]
	\begin{center}
		\begin{minipage}{1\linewidth}
			\centering
			\includegraphics[angle=0,width=0.8\linewidth]{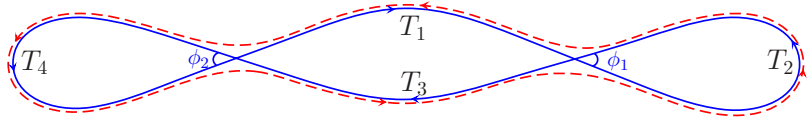}
		\end{minipage}
	\end{center}	\caption{A periodic orbit with 2 small-angle-self-crossings has a partner orbit which has 2 avoided crossings. }\label{2sacfg}
\end{figure}

\begin{remark}\rm
	The existence of the partner orbit in the preceding theorem  does not need the condition that the space is compact.  In fact, according \cite[Theorem 3.11]{HK} and the proof of Theorem \ref{aasthm}, the partner orbit has a smaller period $T'<T$.
\end{remark}

\subsection{Periodic orbits including two 2-parallel encounters intertwined}
In this subsection we consider periodic orbits with two 2-parallel encounters intertwined, which is so called {\em parallel-parallel intertwined} ({\em ppi} for short) in \cite{mueller2005}.

\begin{theorem}\label{ppithm}Let $\eps\in(0,\frac{\sigma_0}{20})$. 
		Suppose that a $T$-periodic orbit $c$ of the flow $(\varphi_t)_{t\in\R}$ on $X$
		with period $T$ has two $(2,\eps)$-parallel encounters intertwined. For instance, suppose that there are  $x,y,z,w\in c$ and $T_1,T_2,T_3,T_4>0$  such that $T_1+T_2+T_3+T_4=T$,
			$\varphi_{T_1}(x)=y, \varphi_{T_2}(y)=z,
		\varphi_{T_3}(z)=w$ and $\varphi_{T_4}(w)=x$
		with $z=(u_1,s_1)_x\in \P_\eps(x ),  w=(u_2,s_2)_y\in \P_\eps(y)$ satisfying
		\begin{eqnarray}\label{cdt}
			|u_2|>9\eps e^{-T_4},\
		|s_2|>72\eps^3+5\eps e^{-T_1}+2\eps e^{-T_3}.
		\end{eqnarray}
		Then $c$ has a $19\eps$-partner orbit of period $T'$ which differs in both encounters 	and the action difference satisfies
		\begin{equation}\label{T'-Tppithm}
			\Big|\frac{T'-T}{2}-\ln(1+s_1u_1)(1+s_2u_2)|\leq  54\eps^4+25\eps^2(e^{-T_1}+e^{-T_2}+ e^{-T_3}+e^{-T_4}) . 
		\end{equation} 
		If  $\eps\in (0,\frac{\eps_*}{38}) $, then the partner orbit is unique. 
	\end{theorem}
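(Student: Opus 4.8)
The plan is to realize the partner by reconnecting at \emph{both} encounters, using only the ``parallel'' tools, namely the Anosov closing lemma (Lemma \ref{anosov1}) and the Connecting lemma (Lemma \ref{clm}); since the encounters here are parallel, no time reversal enters and Theorem \ref{2anti} (which is antiparallel) is not directly applicable. Writing $x=\Gamma g$, $y=\Gamma h$, $z=\Gamma k$, $w=\Gamma l$, I would first record the two parallel relations $\Gamma k=\Gamma g c_{u_1}b_{s_1}$ and $\Gamma l=\Gamma h c_{u_2}b_{s_2}$ coming from $z=(u_1,s_1)_x$ and $w=(u_2,s_2)_y$. The original orbit runs $x\to y\to z\to w\to x$ along arcs I call $A,B,C,D$, and the combinatorics of a parallel (direction-preserving) reconnection at both encounters show that the partner should traverse the same four arcs in the order $A,D,C,B$. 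Tracking the deck transformations through the universal cover exactly as in Theorem \ref{sblthm} then predicts the partner to correspond to the conjugacy class $\{\gamma_1\gamma_4\gamma_3\gamma_2\}_\Gamma$, against $\{\gamma_1\gamma_2\gamma_3\gamma_4\}_\Gamma$ for the original; this target guides the construction and provides an independent check.

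Concretely, I would build the partner in two stages. Since $\varphi_{T_1+T_2}(x)=z\in\P_\eps(x)$, Lemma \ref{anosov1} produces a genuine periodic orbit $P_1$ shadowing $x\to y\to z$, with period $\widehat T_1$ obeying $|\tfrac{\widehat T_1-(T_1+T_2)}{2}-\ln(1+u_1s_1)|<5|u_1s_1|e^{-(T_1+T_2)}$. Next, $\varphi_{T_3+T_4}(z)=x$, and because $\Gamma g=\Gamma k b_{-s_1}c_{-u_1}$, Lemma \ref{decompo}(a) writes $b_{-s_1}c_{-u_1}=c_{u'}b_{s'}a_{t}$ with $u's'=u_1s_1$ and $t=2\ln(1+s_1u_1)$, so that $\varphi_{T_3+T_4-t}(z)\in\P_\eps(z)$; a second application of Lemma \ref{anosov1} gives a periodic orbit $P_2$ shadowing $z\to w\to x$. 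The time shift $t$ exactly cancels the logarithmic correction of this second closing, so that $\widehat T_2\approx T_3+T_4$, and flagging this cancellation is essential, since otherwise the encounter-$1$ contribution would be double counted. Finally, $P_1$ passes near $y$ and $P_2$ near $w$ with $w=(u_2,s_2)_y$, so after the small Anosov displacements a point of $P_2$ lies in a Poincar\'e section of a point of $P_1$ with coordinates close to $(u_2,s_2)$; Lemma \ref{clm} then merges $P_1$ and $P_2$ into a single periodic orbit $c'$ of period $T'$. Adding the three period estimates, the surviving encounter-$1$ term $\ln(1+u_1s_1)$ and the encounter-$2$ term $\ln(1+u_2s_2)$ from the connection combine to $\ln(1+u_1s_1)(1+u_2s_2)$, and collecting the error terms of Lemmas \ref{anosov1} and \ref{clm} against the various $e^{-T_i}$ yields \eqref{T'-Tppithm}.

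To see that $c'$ genuinely differs in both encounters, I would, as in the proof of Theorem \ref{aasthm}, exhibit a point of $c'$ in a Poincar\'e section of $y$ with coordinates $(u_v,s_v)$ close to $(u_2,s_2)$, and show that $(u_v,s_v)$ differs both from the coordinates of $w=(u_2,s_2)_y$ and from those of $y=(0,0)_y$. The lower bounds \eqref{cdt} on $|u_2|$ and $|s_2|$ are precisely what is needed to keep $(u_v,s_v)$ away from these degenerate values after the perturbations of sizes $\eps^3$ and $\eps e^{-T_i}$ incurred in the construction, so that Lemma \ref{sigma0u} separates the orbits. Uniqueness then follows the template of Theorems \ref{2anti} and \ref{aasthm}: a second $19\eps$-partner differing in both encounters would be $38\eps$-close to $c'$ for all time with period difference of order $\eps^2<\sqrt2\,\eps_*$, so Lemma \ref{per-coinc} forces coincidence once $\eps<\eps_*/38$.

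I expect the main obstacle to be the error bookkeeping across the three lemma applications — in particular, controlling how the Anosov displacements $(\sigma_1,\eta_1)$ of $P_1$ and $(\sigma_2,\eta_2)$ of $P_2$ perturb the connection coordinates away from $(u_2,s_2)$, and managing the cancellation of the spurious $\ln(1+u_1s_1)$ in the second closing without losing the genuine main term. This is the \emph{ppi} analogue of the delicate evaluation of $\tilde u_1,\tilde s_1$ in the proof of Theorem \ref{aasthm}, and I would organize it through explicit products of the $c_\ast,b_\ast,a_\ast$ factors combined with Lemma \ref{decompo}, exactly as done there.
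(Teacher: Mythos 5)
Your proposal is correct and follows essentially the same route as the paper: close the $x\to y\to z$ loop with the Anosov closing lemma, close the $z\to w\to x$ loop so that no logarithmic term appears (the paper invokes Anosov closing lemma II on $x=(-s_1,-u_1)'_z\in\P'_\eps(z)$, whereas you re-derive the same estimate from Lemma \ref{anosov1} via the decomposition $b_{-s_1}c_{-u_1}=c_{u'}b_{s'}a_t$ with $u's'=u_1s_1$ and $t=2\ln(1+u_1s_1)$ --- an equivalent cancellation), then merge the two orbits with the connecting lemma, verify distinctness through a point $(u_v,s_v)_y$ kept away from $y$ and $w$ by \eqref{cdt}, and get uniqueness from Lemma \ref{per-coinc}. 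Your predicted conjugacy class $\{\gamma_1\gamma_4\gamma_3\gamma_2\}_\Gamma$ agrees with the paper's $\{\gamma_2\gamma_1\gamma_4\gamma_3\}_\Gamma$ (Theorem \ref{2picj}) up to cyclic conjugation, and the remaining gap is only the explicit constant bookkeeping, which you correctly identify and locate.
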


\begin{proof}
	The construction of a partner orbit is summarized as follows. We first apply the Anosov closing lemmas for the first encounter to obtain two shorter periodic orbits, which are expressed by dashed line and dotted line in Figure \ref{ppif}\,(b).
	We next show that the obtained orbits create a pseudo-orbit (see Figure \ref{ppif}\,(c)). Finally we use the connecting lemma to get a new periodic orbit, which is illustrated by the dashed line in Figure \ref{ppif}\,(d).
	
	\begin{figure}[ht]
		\begin{center}
			\begin{minipage}{1\linewidth}
				\centering
				\includegraphics[angle=0,width=0.95\linewidth]{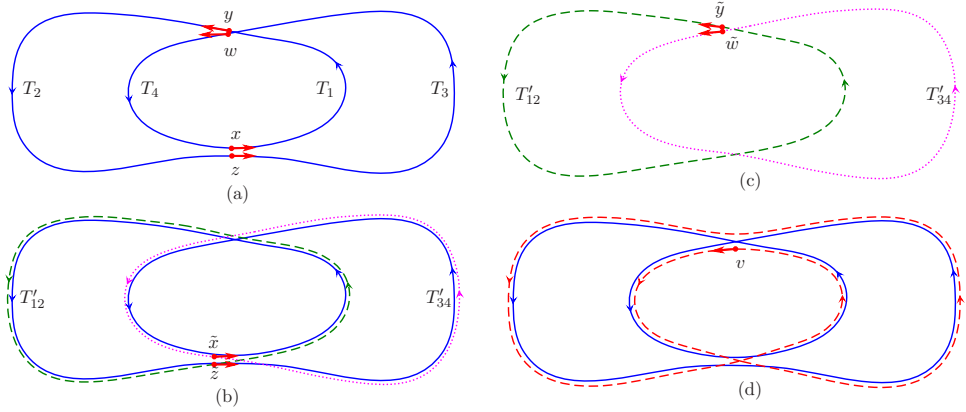}
			\end{minipage}
		\end{center}
		\caption{Reconnection to give the partner orbit for orbit with two parallel encounters intertwined.}\label{ppif}
	\end{figure} 

Denote $T_{12}=T_1+T_2$ and $T_{34}=T_3+T_4$.	By assumption, $\varphi_{T_{12}}(x)=z=(u_1,s_1)_x\in \P_\eps(x)$. According to the Anosov closing lemma I, there exist $\tilde x=\Gamma hb_{\eta_x}c_{\sigma_x}\in X$ and $T_{12}'\in \R$ 
	such that $\varphi_{T'_{12}}(\tilde x)=\tilde x$,
	\begin{equation}\label{T12pc} 
		\Big|\frac{T_{12}'-T_{12}}{2}-\ln(1+u_1s_1) \Big| \leq 5\eps^2 e^{-T_{1}-T_2} 
	\end{equation}
	and 
	\begin{equation}\label{tildexx}
		d_X(\varphi_t(\tilde x),\varphi_t(x))<4\eps,\quad\mbox{for all}\quad t\in [0,T_1+T_2].
	\end{equation}
The orbit of $\tilde x$ is depicted by the dashed line in Figure \ref{ppif}\,(b). Observe that
$\varphi_{T_{34}}(z)=x=(-s_1,-u_1)_{z}'\in \P'_\eps(z)$.
Apply the Anosov closing lemma II, there exist $\tilde z=\Gamma l b_{\eta_z}c_{\sigma_z}\in X$ and
$T_{34}'\in\R$ such that $\varphi_{T_{34}'}(\tilde z)=\tilde z$,
\begin{equation}\label{T34pc}
	\Big|\frac{T_{34}'-T_{34}}{2}\Big|\leq 4\eps^2 e^{-T_{3}-T_4}
\end{equation}
and
	\begin{equation}\label{tildezz}
	d_X(\varphi_t(\tilde z),\varphi_t(z))<4\eps,\quad\mbox{for all}\quad t\in [0,T_3+T_4].
\end{equation}
The orbit of $\tilde z$ is depicted by the dotted line in Figure \ref{ppif}\,(b). 

Next we show that two orbits of $\tilde x$ and $\tilde z$ form a pseudo-orbit, and hence  it is possible to connect them to obtain a longer periodic orbit. Using Lemma \ref{decompo}\,(b), we write
\begin{eqnarray*}
	\varphi_{T_3}(\tilde z)=\Gamma lb_{\eta_z}c_{\sigma_z}a_{T_3}
	&=&\Gamma la_{T_3}b_{\eta_z e^{-T_3}}c_{\sigma_ze^{T_3}}  
	=\Gamma k b_{\eta_z e^{-T_3}}c_{\sigma_ze^{T_3}}\\
	&=&\Gamma h c_{u_2}b_{s_2}b_{\eta_z e^{-T_3}}c_{\sigma_ze^{T_3}}\\
	&=&(\Gamma hc_{\sigma_x e^{T_1}}b_{\eta_xe^{-T_1}})b_{-\eta_xe^{-T_1}}
	c_{u_2-\sigma_x e^{T_1}}b_{s_2+\eta_z e^{-T_3}}c_{\sigma_ze^{T_3}}\\
	&=&(\Gamma hc_{\sigma_x e^{T_1}}b_{\eta_xe^{-T_1}})b_{s}c_{u}a_{\tau},
\end{eqnarray*}
where
\begin{eqnarray}\notag 
	s&=&\eta_ze^{-T_1}+\frac{s_2+\eta_ze^{-T_3}}{ 1+ (u_2-\sigma_x e^{T_1})(s_2+\eta_ze^{-T_3})},\\ \notag 
	u&=& \left(u_2-\sigma_x e^{T_1}+\sigma_ze^{T_3}+(u_2-\sigma_x e^{T_1})(s_2+\eta_ze^{-T_3})\sigma_ze^{T_3}\right)\left(1+(u_2-\sigma_xe^{T_1})(s_2+\eta_ze^{-T_3})\right),\\ \label{tau}
 \tau&=&-2\ln\left(1+(u_2-\sigma_xe^{T_1})(s_2+\eta_ze^{-T_3})\right).
\end{eqnarray}
A short calculation shows that $|u|<3\eps, |s|<3\eps$. If we set  
\begin{eqnarray*}
	\tilde y:= \varphi_{T_1}(\tilde x)=\Gamma g c_{\sigma_x}b_{\eta_x}a_{T_1}
	=\Gamma g a_{T_1} c_{\sigma_x e^{T_1}}b_{\eta_xe^{-T_1}}
	=\Gamma hc_{\sigma_x e^{T_1}}b_{\eta_xe^{-T_1}}
\end{eqnarray*} 
and $\tilde w:=\varphi_{T_3-\tau}(\tilde z)$, then $\tilde y=(-u,-s)_{\tilde w}\in \P_{3\eps}(\tilde w)$.
Apply the connecting lemma I to obtain a $T'$-periodic point $v=\tilde w c_{-u e^{-T_{12}'}+\sigma}b_\eta$ satisfying
\begin{eqnarray}\label{T'1234}
	\Big| \frac{T'-(T'_{12}+T'_{34})}{2}-\ln(1+us)\Big|< 63\eps^2(e^{-T'_{12}}+e^{-T'_{34}})
\end{eqnarray}
and 
\begin{equation}\label{etas}
	|\eta+s|<2s^2|u|+2|s|e^{-T_1-T_2}<54\eps^3+6\eps e^{-T_1-T_2}.
\end{equation}
Furthermore,
\begin{equation*}
	d_X(\varphi_t(v), \varphi_t(\tilde z))<15\eps\quad \mbox{for}\quad t\in [0,T'_{3,4} ]
\end{equation*}
and 
\begin{equation*}
	d_X(\varphi_{t+T'_{34}}(v),\varphi_t(\tilde w))<15\eps
	\quad \mbox{for}\quad t\in [ 0,T'_{1,2}].
\end{equation*}
This yields that the orbit of $v$ is $15\eps$-close to the orbits of $\tilde x$ and $\tilde z$. 
Together with \eqref{tildexx} and \eqref{tildezz}, this implies that the orbit of $v$ is $19\eps$-close to the orbit of $x$. 

Next, we estimate the action difference. 
According to the Anosov closing lemmas, $$|\eta_x|<\frac{3\eps}2, |\eta_z|<\frac{3\eps}2, |\sigma_x|<2\eps e^{-T_1-T_2}, |\sigma_z|< 2\eps e^{-T_{3}-T_4}.$$ Observe that
\begin{equation}\label{s-s2}|u-u_2|< 9\eps^3+2\eps e^{-T_2}+2\eps e^{-T_4}\quad  \mbox{and}\quad |s-s_2|< 3\eps^3+2\eps e^{-T_1}+2\eps e^{-T_3}
\end{equation} 
imply
\[|\ln(1+us)-\ln(1+u_2s_2)|<54\eps^4+18\eps^2 e^{-T_1}+6\eps^2e^{-T_2}+18\eps^2 e^{-T_3}+6\eps^2 e^{-T_4}, \]
and hence
\begin{align}\notag 
	\Big|\frac{T'-T'_{12}-T'_{34}}{2}&-\ln(1+u_2s_2)\Big|\\
	&\leq 54\eps^4+24\eps^2 e^{-T_1}+24\eps^2e^{-T_2}+24\eps^2 e^{-T_3}+24\eps^2 e^{-T_4}, \label{T'pc}
\end{align}
owing to \eqref{T'1234}. The estimate \eqref{T'-Tppithm} follows from \eqref{T12pc}, \eqref{T34pc} and \eqref{T'pc}.

Next, we are going to show that the orbit of $v$ is different from the orbit of $x$. Analogously to the proof of Theorem \ref{aasthm}, we write
\begin{eqnarray*}
	v&=& \tilde w c_{ue^{-T_{34}'}+\sigma}b_\eta\\
	 &=& (\Gamma hc_{\sigma_x e^{T_1}}b_{\eta_xe^{-T_1}})b_{s}c_{u}c_{ue^{-T_{34}'}+\sigma}b_\eta\\
	 &=&\Gamma hc_{\sigma_x e^{T_1}}b_{\eta_xe^{-T_1}+s}c_{u+ue^{-T_{34}'}+\sigma}b_\eta\\
	 &=&\Gamma h c_{u_v}b_{s_v}a_{\tau_v}\\
	 &=&\Gamma h c_{u_2}b_{s_2}b_{\eta_z e^{-T_3}}c_{\sigma_ze^{T_3}}a_{-\tau} c_{ue^{-T_{34}'}+\sigma}b_\eta\\
	 &=&\Gamma h c_{u_2}b_{s_2+\eta_z e^{-T_3}}c_{\sigma_ze^{T_3}+ue^{-T_{34}'+\tau}+\sigma}b_{\eta e^{-\tau}}a_{-\tau}
	 \\
	 &=&\Gamma h c_{u_v}b_{s_v}a_{\tau_v}  ,
\end{eqnarray*}
where \begin{eqnarray*}
	u_v&=&u_2+\frac{\sigma_ze^{T_3}+ue^{-T_{34}'+\tau}+\sigma}{1+(s_2+\eta_z e^{-T_3})(\sigma_ze^{T_3}+ue^{-T_{34}'+\tau}+\sigma)},\\
	s_v&=&(s_2+\eta_z e^{-T_3}+\eta e^{-\tau}+\eta e^{-\tau}\rho_v)(1+\rho_v),\\
	\tau_v&=&2\ln\big(1+\rho_v\big)-\tau,
\end{eqnarray*}
recalling $\tau$ from \eqref{tau}; here
\[\rho_v=(s_2+\eta_z e^{-T_3})(\sigma_ze^{T_3}+ue^{-T_{34}'+\tau}+\sigma). \] A short calculation shows that $|s_v|<5\eps$ and $|u_v|<5\eps$.
Consequently, $\tilde v:=\varphi_{-\tau_v}(v)=(u_v,s_v)_y\in \P_{5\eps}(y)$. We need to check that $\tilde v\ne y$ and $\tilde v\ne w$. For, observe that
\begin{eqnarray*}
	|u_v-u_2|< 9\eps e^{-T_4},\quad 
|s_v-s_2-\eta|<	15\eps^3.
\end{eqnarray*}
Furthermore, it follows from \eqref{etas} and \eqref{s-s2} that 
\[|\eta+s_2|\leq |\eta+s|+|s-s_2|< 57\eps^3+5\eps e^{-T_1}+2\eps e^{-T_3}.\]
Using assumption \eqref{cdt}, we derive
\begin{eqnarray*}
	|u_v|&>&|u_2|-|u_v-u_2|>|u_2|-9\eps e^{-T_4}>0\\
	|s_v-s_2|&>& |s_2|-|s_v-s_2-\eta|-|\eta+s_2|>0.
\end{eqnarray*}
This means $u_v\ne 0$ and $s_v\ne s_2$. Owing to $\eps<\frac{\sigma_0}{20}$, we get $\tilde v \ne y, \tilde v\ne w$ and hence the orbit of $v$ is different from the orbit of $x$.

For the last assertion, recall that the orbit of $v$ is 19$\eps$-close to the original orbit. The uniqueness of partner orbit can be done analogously to the proof of Theorem \ref{aasthm}. 
\end{proof}

\begin{remark}\rm 
(a) 	The period of the partner orbit can be explicitly computed according to the proof of Anosov closing lemmas and the connecting lemma. The term $33\eps^4$ appears when we change the coordinates $(u,s)$ to $(u_2,s_2)$ (see \eqref{s-s2}) and it cannot avoid. 

(b) Condition \eqref{cdt} can be replaced by \begin{eqnarray}
	|u_1|>9\eps e^{-T_3},\ 
	|s_1|>72\eps^3+5\eps e^{-T_4}+2\eps e^{-T_2}. 
\end{eqnarray}
	
\end{remark}
\begin{theorem}\label{2picj}
	In the setting of Theorem \ref{ppithm}, suppose that $x=\Gamma g, y=\Gamma h, z=\Gamma k, w=\Gamma l$ for some $g,h,k,l\in\PSL(2,\R)$. If $\gamma_1,\gamma_2,\gamma_3,\gamma_4\in\Gamma$ satisfy 
	$ga_{T_1}=\gamma_1 h, ha_{T_2}=\gamma_2 k, ka_{T_3}=\gamma_3 l, la_{T_4}=\gamma_4g$, 
	then the original orbit corresponds to the conjugacy class $\{\gamma_1\gamma_2\gamma_3\gamma_4\}$ and
	the partner orbit corresponds to the conjugacy class
$\{\gamma_2\gamma_1\gamma_4\gamma_3 \}$.
\end{theorem}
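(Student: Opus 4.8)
The plan is to treat the two assertions separately, handling the original orbit by a direct telescoping and the partner orbit by carrying a conjugacy class through the three-step construction in the proof of Theorem~\ref{ppithm}. For the original orbit the defining relations give
\[
ga_T=ga_{T_1}a_{T_2}a_{T_3}a_{T_4}=\gamma_1ha_{T_2}a_{T_3}a_{T_4}=\gamma_1\gamma_2\gamma_3\gamma_4\,g,
\]
so by the definition of $\varsigma$ in \eqref{varsig} and Lemma~\ref{cjo} the orbit of $x=\Gamma g$ corresponds to $\{\gamma_1\gamma_2\gamma_3\gamma_4\}_\Gamma$; this is identical to the opening move in the proofs of Theorems~\ref{ccrm} and \ref{sblthm}, so the real content lies in the partner.

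For the partner I would retrace the construction and attach to each of the three lemmas its conjugacy-class bookkeeping. The Anosov closing lemma~I, applied to $\varphi_{T_1+T_2}(x)=z=(u_1,s_1)_x$, produces $\tilde x=\Gamma gc_{\sigma_x}b_{\eta_x}$; since $ga_{T_1+T_2}=\gamma_1\gamma_2k$ and $z=\Gamma gc_{u_1}b_{s_1}$, Remark~\ref{Alr} identifies the orbit of $\tilde x$ with $\{\gamma_1\gamma_2\}_\Gamma$. The Anosov closing lemma~II, applied to $\varphi_{T_3+T_4}(z)=x=(-s_1,-u_1)'_z$, produces $\tilde z$; since $ka_{T_3+T_4}=\gamma_3\gamma_4g$, the analogue of Remark~\ref{Alr} for Lemma~\ref{anosov2} identifies the orbit of $\tilde z$ with $\{\gamma_3\gamma_4\}_\Gamma$. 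Finally the connecting lemma glues the two short loops across the \emph{second} encounter, at the points $\tilde y=\varphi_{T_1}(\tilde x)$ and $\tilde w=\varphi_{T_3-\tau}(\tilde z)$ (both lying in the encounter where $y=\Gamma h$ is close to $w=\Gamma l$), with $\tilde y=(-u,-s)_{\tilde w}$. By the last assertion of Lemma~\ref{clm} the partner corresponds to $\{\delta_{\tilde w}\,\delta_{\tilde y}\}_\Gamma$, where $\delta_{\tilde w}$ and $\delta_{\tilde y}$ are the elements representing the orbits of $\tilde z$ and $\tilde x$ based at $\tilde w$ and $\tilde y$.

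The crux is that $\tilde w$ and $\tilde y$ are the \emph{second} points of their respective loops, so the representatives there are cyclic conjugates of those found above: passing from the lift near $z$ to the nice lift $l$ of $w$ conjugates $\gamma_3\gamma_4$ by $\gamma_3$ (as $ka_{T_3}=\gamma_3l$), giving $\delta_{\tilde w}=\gamma_4\gamma_3$, while passing from the lift near $x$ to the nice lift $h$ of $y$ conjugates $\gamma_1\gamma_2$ by $\gamma_1$ (as $ga_{T_1}=\gamma_1h$), giving $\delta_{\tilde y}=\gamma_2\gamma_1$. Hence the partner corresponds to $\{\gamma_4\gamma_3\gamma_2\gamma_1\}_\Gamma=\{\gamma_2\gamma_1\gamma_4\gamma_3\}_\Gamma$, the last equality being a cyclic permutation. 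The main obstacle is exactly this lift bookkeeping: one must fix $g,h,k,l$ compatibly with $z=\Gamma gc_{u_1}b_{s_1}$ and $x=\Gamma kb_{-s_1}c_{-u_1}$, track how the $c_\sigma b_\eta$ factors produced by the two closing lemmas propagate under the flow to the connection points, and verify that the lift of $\tilde y$ forced by $\tilde y=(-u,-s)_{\tilde w}$ differs from the flowed lift precisely by the $\Gamma$-element realizing the cyclic shift. Since the return element is invariant along a fixed flowed lift, all nontrivial conjugations come from these discrepancies between nice and flowed lifts, and confirming that they assemble into $\gamma_4\gamma_3\gamma_2\gamma_1$ rather than the original word $\gamma_1\gamma_2\gamma_3\gamma_4$ is where care is needed. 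A safe alternative, parallel to the explicit computation in the proof of Lemma~\ref{clm}, is to take the lift of the periodic point $v$ exhibited in the proof of Theorem~\ref{ppithm} and evaluate $(\text{lift})\,a_{T'}\,(\text{lift})^{-1}$ by telescoping the $a$-, $b$- and $c$-factors directly.
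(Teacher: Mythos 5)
Your proposal is correct and follows essentially the same route as the paper's proof: telescoping for the original orbit, Remark~\ref{Alr} applied to the two short loops produced by the Anosov closing lemmas I and II (giving $\{\gamma_1\gamma_2\}_\Gamma$ and $\{\gamma_3\gamma_4\}_\Gamma$), and the last assertion of the connecting lemma for the partner. You are in fact more careful than the paper on the one delicate point---that the representatives at the connection points $\tilde y$ and $\tilde w$ are the cyclic shifts $\gamma_2\gamma_1$ and $\gamma_4\gamma_3$ (conjugation by $\gamma_1$ and $\gamma_3$, since the gluing occurs at the second encounter near the lifts $h$ and $l$)---whereas the paper glosses over this lift bookkeeping by simply writing $\{\gamma_1\gamma_2\}_\Gamma=\{\gamma_2\gamma_1\}_\Gamma$ and $\{\gamma_3\gamma_4\}_\Gamma=\{\gamma_4\gamma_3\}_\Gamma$ before invoking Lemma~\ref{clm}.
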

\begin{proof}  That the original orbit corresponds to the conjugacy class $\{\gamma_1\gamma_2\gamma_3\gamma_4 \}$ can be done analogously to Theorem \ref{sblthm}. For the last assertion, we can choose $g,h,k,l\in \PSL(2,\R)$ such that 
	$k=gc_{u_1}b_{s_1}$ and $l=hc_{u_2}b_{s_2}$.
	Then $\varphi_{T_1+T_2}(\Gamma g)=\Gamma g c_{u_1}b_{s_1}$ and 
	$ga_{T_1+T_2}=\gamma_1 h a_{T_2}=\gamma_1\gamma _2 k=\gamma_1\gamma_2 gc_{u_1}b_{s_1}$. Similarly, 
	$\varphi_{T_3+T_4}(\Gamma k)=\Gamma k b_{-s_1}c_{-u_1}$ and 
	$k a_{T_3+T_4}=\gamma_3 la_{T_4}=\gamma_3\gamma_4 g=\gamma_3\gamma_4 kb_{-s_1}c_{-u_1}$. 
	By the Remark \ref{Alr}, the orbit of $\tilde x$ corresponds to the conjugacy class $\{\gamma_1\gamma_2\}_\Gamma=\{\gamma_2\gamma_1 \}_\Gamma$ and the orbit of $\tilde z$ corresponds to the class $\{\gamma_3\gamma_4\}_\Gamma=\{\gamma_4\gamma_3\}_\Gamma$. 
	The  partner orbit corresponds to the conjugacy class 
	\[\{\gamma_2\gamma_1\gamma_4\gamma_3\}_\Gamma; \]
	recall the last assertion of Lemma \ref{clm}.
\end{proof}
\subsection{Periodic orbits including one 2-antiparallel encounter and one 2-parallel encounter intertwined}

This subsection deals with periodic orbits with one 2-antiparallel encounter and one 2-parallel encounter intertwined, which is so called {\em antiparallel-parallel intertwined} ({\em api} for short) in \cite{mueller2005}.
\begin{figure}[ht]
	\begin{center}
		\begin{minipage}{1\linewidth}
			\centering
			\includegraphics[angle=0,width=0.95\linewidth]{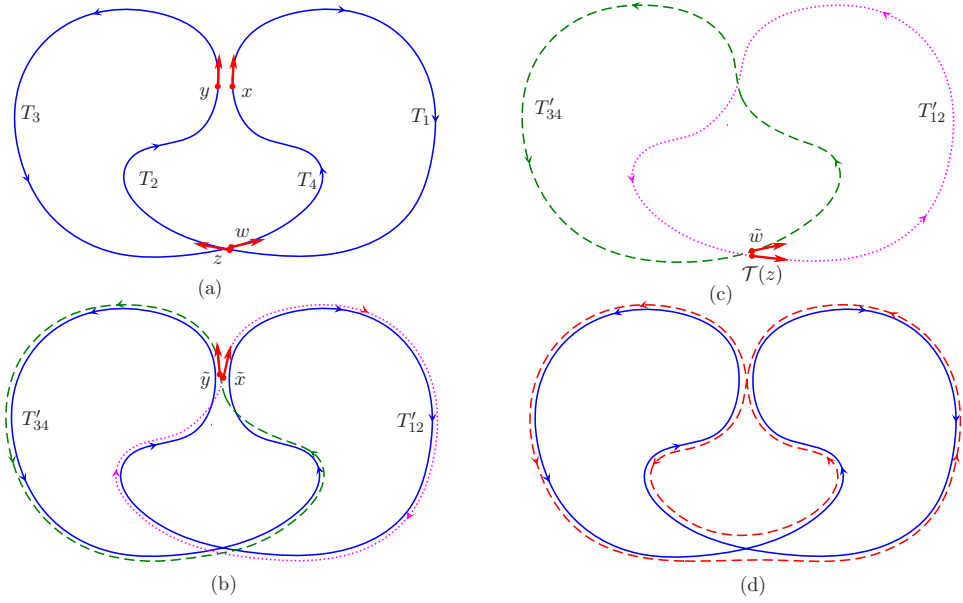}
		\end{minipage}
	\end{center}
	\caption{Reconnecting encounter stretches to form a partner orbit in api case. }\label{apif}
\end{figure} 

\begin{theorem} Let $\eps\in(0,\frac{\sigma_0}{20})$. 
	Suppose that a periodic orbit $c$ of the flow $(\varphi_t)_{t\in\R}$ on $X$
	with period $T>1$ has one $(2,\eps)$-parallel encounter and
	one $(2,\eps)$-antiparallel encounter intertwined. More precisely, suppose that there are  $x,y,z,w\in c$ and $T_1,T_2,T_3,T_4>0$ be such that $T_1+T_2+T_3+T_4=T$,
	$\varphi_{T_1}(x)=y, \varphi_{T_2}(y)=z,
	\varphi_{T_3}(z)=w$ and $\varphi_{T_4}(w)=x$
	and $z=(u_1,s_1)_x\in \P_\eps(x ), \T(w)=(u_2,s_2)_y\in \P_\eps(y)$ and
	\begin{eqnarray}\label{cdt2} 
		|u_1|>9\eps e^{-T_3},\quad 
		|s_1|>72\eps^3+5\eps e^{-T_4}+2\eps e^{-T_2}.
	\end{eqnarray}
	Then $c$ has a $19\eps$-partner orbit of period $T'$ which differs in both encounters and  the action difference satisfies
\[\Big|\frac{T'-T}{2}-\ln(1+s_1u_1)(1+s_2u_2)|\leq  54\eps^4+25\eps^2(e^{-T_1}+ e^{-T_3}+ e^{-T_4}) . \]
If  $\eps\in (0,\frac{\eps_*}{38})$, then the partner orbit is unique. 
\end{theorem}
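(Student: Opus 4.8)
The plan is to reuse the splitting stage of the proof of Theorem~\ref{ppithm} verbatim and to insert a single time reversal at the reconnection stage to absorb the antiparallel encounter; the fact that \eqref{cdt2} is exactly the alternative hypothesis recorded in the remark after Theorem~\ref{ppithm} already signals that the \emph{parallel} encounter $x\leftrightarrow z$ is the one used to cut the orbit. Writing $T_{12}=T_1+T_2$ and $T_{34}=T_3+T_4$, the hypotheses give $\varphi_{T_{12}}(x)=z=(u_1,s_1)_x\in\P_\eps(x)$ and $\varphi_{T_{34}}(z)=x=(-s_1,-u_1)'_z\in\P'_\eps(z)$, so the Anosov closing lemma~I (Lemma~\ref{anosov1}) applied to the first loop and the Anosov closing lemma~II (Lemma~\ref{anosov2}) applied to the second yield two shorter periodic orbits $\tilde x$ of period $T_{12}'$ and $\tilde z$ of period $T_{34}'$, each $4\eps$-close to the corresponding half of $c$, with the period estimates \eqref{T12pc} and \eqref{T34pc}.

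The one new ingredient is to connect $\tilde x$ not to $\tilde z$ but to its time reversal $\T(\tilde z)$, which by \eqref{tr2} is again periodic of period $T_{34}'$. Since the second encounter is antiparallel, $\T(w)=(u_2,s_2)_y$, the point $\tilde w$ of $\tilde z$ sitting near $w$ has reverse $\T(\tilde w)$ sitting near $y$, hence near $\tilde y:=\varphi_{T_1}(\tilde x)$. Pushing the stable and unstable coordinates through $\T$ by Lemma~\ref{adpi} (under which $b_td_\pi=d_\pi c_{-t}$ and $c_td_\pi=d_\pi b_{-t}$, so the two displacements are interchanged and negated) and through the intervening stretches by $b_sa_t=a_tb_{se^{-t}}$ and $c_ua_t=a_tc_{ue^t}$, I would realize $\tilde y$ as a point $(-u,-s)$ in a Poincar\'e section of the relevant point of $\T(\tilde z)$ with $|u|,|s|<3\eps$, precisely as the pair $(\tilde y,\tilde w)$ was handled in Theorem~\ref{ppithm}. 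The connecting lemma (Lemma~\ref{clm}) then joins $\tilde x$ and $\T(\tilde z)$ into one periodic orbit $v$ of period $T'$ satisfying an estimate of the shape \eqref{T'1234} and $15\eps$-close to both loops, so that $v$ is $19\eps$-close to $c$.

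For the action I would sum $\tfrac{T_{12}'-T_{12}}{2}\approx\ln(1+u_1s_1)$ from Lemma~\ref{anosov1}, the negligible $\tfrac{T_{34}'-T_{34}}{2}$ from Lemma~\ref{anosov2}, and the connecting contribution $\ln(1+us)$ with $(u,s)$ close to $(u_2,s_2)$, giving $\ln(1+u_1s_1)(1+u_2s_2)$; the coordinate bounds replacing \eqref{s-s2} now carry $e^{-T_4}$ on the reversed loop and $e^{-T_1},e^{-T_3}$ on the remaining stretches, which is exactly why the $e^{-T_2}$ term of \eqref{T'-Tppithm} drops out here. To see $v\neq c$ I would, as in the first-encounter check in the proof of Theorem~\ref{aasthm}, locate a point $\tilde v$ of the $v$-orbit inside a Poincar\'e section of $x$ by rewriting $v$ in the $\Gamma g$-frame through Lemma~\ref{decompo}, and use the two inequalities in \eqref{cdt2} to force the coordinates of $\tilde v$ to differ from those of $x$ and of $z=(u_1,s_1)_x$; Lemma~\ref{sigma0u} then separates $v$ from $c$ and from its time reverse. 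Finally, for $\eps<\tfrac{\eps_*}{38}$ any second partner of the same type would be $38\eps<\eps_*$-close to $v$ with nearly equal periods, so Lemma~\ref{per-coinc} forces the two to coincide, exactly as in Theorem~\ref{ppithm}.

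The main obstacle is this reconnection step: making the connecting lemma apply to $\T(\tilde z)$ in place of $\tilde z$ and tracking the coordinates faithfully through the time-reversal map. Because $\T$ interchanges stable and unstable directions, the exponential error that in the purely parallel computation rode on the segment $y\to z$ of length $T_2$ is reassigned to the reversed loop, and confirming that the net outcome is precisely the combination $e^{-T_1}+e^{-T_3}+e^{-T_4}$ appearing in the statement---together with checking that the sign bookkeeping leaves the connecting product equal to $+u_2s_2$ rather than $-u_2s_2$---is the only genuinely new calculation; every other estimate transcribes directly from the proof of Theorem~\ref{ppithm}.
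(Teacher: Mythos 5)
Your proposal is correct and follows essentially the same route as the paper, whose proof is itself only a two-sentence reduction to Theorem~\ref{ppithm}: split the orbit at the parallel encounter $x\leftrightarrow z$ via the two Anosov closing lemmas, time-reverse one of the two resulting short periodic orbits to absorb the antiparallel encounter, and reconnect with the connecting lemma (Lemma~\ref{clm}), with all estimates transcribed from the ppi case. The only cosmetic difference is that the paper reverses the first loop where you reverse $\T(\tilde z)$, which yields the same partner up to overall time reversal.
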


\begin{proof}
The proof of this theorem is similar to that of Theorem \ref{ppithm}.	 Note that after obtaining two shorter periodic orbits, the dashed  and dotted  lines, we consider the time reversal of the former one (Figure \ref{apif}\,(c)), and apply the connecting lemma to have a partner depicted by dashed line in Figure \ref{apif}\,(d).
\end{proof}

\begin{remark}\rm Analogously to Theorem \ref{2picj}, in the setting of the previous theorem, if the original orbit corresponds to the conjugacy class $\{\gamma_1\gamma_2\gamma_3\gamma_4 \}_\Gamma$, then 
	the partner orbit corresponds to the conjugacy class
	\[\{(\gamma_2\gamma_1)^{-1}\gamma_4\gamma_3\}_\Gamma
	=\{\gamma_1^{-1}\gamma_2^{-1}\gamma_4\gamma_3  \}_\Gamma.\]
\end{remark}

\begin{remark}\rm 
	(a) The assumption of compactness is unnecessary for the existence of partner orbits in all cases above. However, we need it for the uniqueness. 
	
	(b) The conditions expressed by the coordinates of the piercing points \eqref{cdt0}, \eqref{cdt}, \eqref{cdt2} guarantee that the encounter stretches are separated by non-vanishing loops and whence the partner orbit and the original orbit do not coincide. A similar condition is needed in physics literature; see \cite{muellerthesis,mueller2005}. 
	
	(c) The approach in the present paper can be applied 
	to consider periodic orbits responsible for all order in $\tau$ to the spectral form factor $K(\tau)$.
\end{remark}

\end{document}